\documentclass[acmsmall,dvipsnames]{acmart}


\makeatletter
\def\@acmplainindent{0pt}
\def\@acmdefinitionindent{0pt}
\def\@proofindent{\noindent}
\makeatother

\usepackage{svgcolor}
\usepackage{local}
\usepackage{wrapfig}

\usepackage{ifthen}
\newboolean{isExtendedVersion}
\setboolean{isExtendedVersion}{true} 

\newcommand{\appendixref}[1]{%
  \ifthenelse{\boolean{isExtendedVersion}}%
    {\Cref{#1}}
    {the extended version of this paper {\cite{Moeller2024}}}
}

\setcopyright{rightsretained}
\acmJournal{PACMPL}
\acmYear{2024}
\copyrightyear{2024}
\acmSubmissionID{pldi24main-p646-p}
\acmVolume{8}
\acmNumber{PLDI}
\acmArticle{224}
\acmMonth{6}
\acmDOI{10.1145/3656454}
\received{2023-11-16}
\received[accepted]{2024-03-31}

\begin{CCSXML}
<ccs2012>
<concept>
<concept_id>10003752.10003766</concept_id>
<concept_desc>Theory of computation~Formal languages and automata theory</concept_desc>
<concept_significance>500</concept_significance>
</concept>
</ccs2012>
\end{CCSXML}
\ccsdesc[500]{Theory of computation~Formal languages and automata theory}
\keywords{\NetKAT, network verification, program equivalence, symbolic automata.}

\begin{document}

\author[M. Moeller]{Mark Moeller}
\authornote{Equal contribution.}
\affiliation{%
  \institution{Cornell University}%
  \city{Ithaca, NY}
  \country{USA}%
}
\orcid{0009-0002-9512-565X}
\email{mdm367@cornell.edu}
\author[J. Jacobs]{Jules Jacobs}
\authornotemark[2]
\affiliation{%
  \institution{Cornell University}%
  \city{Ithaca, NY}
  \country{USA}%
}
\orcid{0000-0003-1976-3182}
\email{jj758@cornell.edu}
\author[O. Savary Belanger]{Olivier Savary Belanger}
\affiliation{%
  \institution{Galois}%
  \city{Portland, OR}
  \country{USA}%
}
\orcid{0009-0008-2809-6047}
\email{olivier@galois.com}
\author[D. Darais]{David Darais}
\affiliation{%
  \institution{Galois}%
  \city{Portland, OR}
  \country{USA}%
}
\orcid{0000-0003-2314-0287}
\email{darais@galois.com}
\author[C. Schlesinger]{Cole Schlesinger}
\affiliation{%
  \institution{Galois}%
  \city{Portland, OR}
  \country{USA}%
}
\orcid{0009-0004-9350-3041}
\email{coles@galois.com}
\author[S. Smolka]{Steffen Smolka}
\affiliation{%
  \institution{Google}%
  \city{Mountain View, CA}
  \country{USA}%
}
\orcid{0000-0001-8625-5490}
\email{smolkaj@google.com}
\author[N. Foster]{Nate Foster}
\affiliation{%
  \institution{Cornell University}%
  \city{Ithaca, NY}
  \country{USA}%
}
\orcid{0000-0002-6557-684X}
\email{jnfoster@cs.cornell.edu}
\author[A. Silva]{Alexandra Silva}
\affiliation{%
  \institution{Cornell University}%
  \city{Ithaca, NY}
  \country{USA}%
}
\orcid{0000-0001-5014-9784}
\email{alexandra.silva@cornell.edu}

\setcitestyle{numbers}

\title{\KATch: A Fast Symbolic Verifier for NetKAT}

\begin{abstract}
  We develop new data structures and algorithms for checking verification queries in NetKAT, a domain-specific language for specifying the behavior of network data planes. Our results extend the techniques obtained in prior work on symbolic automata and provide a framework for building efficient and scalable verification tools. We present \KATch, an implementation of these ideas in Scala, featuring an extended set of NetKAT operators that are useful for expressing network-wide specifications, and a verification engine that constructs a bisimulation or generates a counter-example showing that none exists. We evaluate the performance of our implementation on real-world and synthetic benchmarks, verifying properties such as reachability and slice isolation, typically returning a result in well under a second, which is orders of magnitude faster than previous approaches. Our advancements underscore NetKAT's potential as a practical, declarative language for network specification and verification.
\end{abstract}

\titlenote{This research was developed with funding from the Defense Advanced Research Projects Agency (DARPA) and the Office of Naval Research (ONR). The views, opinions, and/or findings contained in this material are those of the authors and should not be interpreted as representing the official views or policies of the Department of the Navy, DARPA, or the U.S. Government. The U.S. Government is authorized to reproduce and distribute reprints for Government purposes notwithstanding any copyright notation hereon. DISTRIBUTION A. Approved for public release; distribution unlimited.}

\maketitle

\framebox{\parbox{\linewidth}{\textbf{Erratum. } An earlier version of this paper contained a mistake in the SPP canonicalization procedure (\Cref{app:sppsc}), and a corresponding mistake in its correctness theorem statement and proof (\Cref{lem:sppscsem}), neither of which appeared in the main text. There is a description of the mistake and the correction in \Cref{app:sppsc}, and \Cref{lem:sppscsem} and its proof are corrected to match.}}

\section{Introduction}

In the automata-theoretic approach to verification, programs and specifications are each encoded as automata, and verification tasks are reduced to standard questions in formal language theory---e.g., membership, emptiness, containment, etc.~\cite{Vardi1986}. The approach became popular in the mid-1980s, driven by use of temporal logics and model checking in hardware verification, and it remains a powerful tool today. In particular, automata provide natural models of phenomena like transitive closure, which arises often in programs but is impossible to express in pure first-order logic.

NetKAT, a domain-specific language for specifying and verifying the behavior of network data planes, is a modern success story for the automata-theoretic approach.  NetKAT programs denote sets of traces (or ``histories'') where a trace is the list of packets ``seen'' at each hop in the network. The NetKAT language specifies these traces using a regular expression-like syntax, as follows:
\[
  p,q \Coloneqq \bot \mid \top \mid f \test v \mid f \testNE v \mid f \mut v \mid \text{dup} \mid p + q \mid p \cdot q \mid p^\star
\]

Unlike ordinary regular expressions, which are stateless, NetKAT programs manipulate state in packets. Accordingly, \NetKAT's atoms are not letters, but \emph{actions} that either drop or forward the current packet ($\bot$ or $\top$), modify a header field ($f \mut\, v$), test a header field against a value ($f \test v$, $f \testNE v$), or append the current packet to the trace ($\text{dup}$).
The regular expression operations copy and forward a packet to two different programs ($p + q$), sequence two programs ($p \cdot q$), or loop a program ($p^\star \equiv \top + p + p\!\cdot\!p + p\!\cdot\!p\!\cdot\!p + \cdots$).

A NetKAT program thus gives a declarative specification of the network's global behavior in terms of sets of traces. Specifically, we can model the location of a packet in the network using a special header field (usually named $\text{sw}$ for ``switch''), that we can update ($\text{sw} \mut \ell$) to logically move the packet from its current location to a new location $\ell$. Given a declarative description of the intended behavior, the NetKAT compiler generates a set of local forwarding tables for individual switches that together realize the global behavior~\cite{Foster2011,Smolka2015}.

Moreover, NetKAT not only plays a role as an implementation language, but also as a language for expressing verification queries, analogous to verification tools powered by SMT solvers~\cite{Leino2014,Barnett2005,Torlak2013}. In particular, as NetKAT includes a union operator (``+''), containment can be reduced to program equivalence---i.e., $p \sqsubseteq q$ if and only if $p + q \equiv q$. Hence, unlike other contemporary tools, which rely on bespoke encodings and algorithms for checking network properties like reachability, slice isolation, etc.~\cite{Kazemian2012,Khurshid2012,Yang2016,Fogel2015}, NetKAT allows a wide range of practical verification questions to be answered using the same foundational mechanism, namely program equivalence. For example,

\begin{itemize}
    \item{\textit{Network Reachability:}} Are there any packets that
      can go from location $121$ to $543$ in the network specified by the NetKAT program
      $\textsf{net}$? Formally: $(\text{sw} \mut 121) \cdot \textsf{net}^\star
      \cdot (\text{sw} = 543) \qequiv \zero$
  \item{\textit{Slice Isolation:}} Are slices $\textsf{net}_1$ and
    $\textsf{net}_2$ logically disjoint, even though their
    implementation on shared infrastructure uses the same devices? Formally:
    $\textsf{net}_1^\star + \textsf{net}_2^\star \qequiv (\textsf{net}_1 + \textsf{net}_2)^\star$
\end{itemize}

\NetKAT's $\text{dup}$ primitive is a key construct that enables reducing a wide range of network-specific properties to program equivalence. Recall that $\text{dup}$ appends the headers of the current packet to the trace. Hence, to verify properties involving the network's internal behavior, we add intermediate packets to the trace using $\text{dup}$, making them relevant for program equivalence. Conversely, if we only care about the input-output behavior of the entire network, we can omit $\text{dup}$ from the specification, so intermediate packets are not considered by the check for program equivalence.

The story so far is appealing, but there is a major fly in the ointment. To decide program equivalence, the automata-theoretic approach relies on the translation from NetKAT programs to automata. But standard NetKAT automata have an enormous space of potential transitions---i.e., the ``alphabet'' has a ``character'' for every possible packet. So using textbook algorithms for building automata and checking equivalence would clearly be impractical. In fact, NetKAT equivalence is PSPACE-complete~\cite{Anderson2014}. Instead, what's needed are symbolic techniques for encoding the state space and transition structure of NetKAT automata that avoid combinatorial blowup in the common case.

Prior work on symbolic automata provides a potential solution to the problems that arise when working with large (or even infinite)
alphabets. The idea is to describe the transitions in the automata using logical formulas~\cite{DAntoni2014,DAntoni2017} or Binary Decision Diagrams (BDDs)~\cite{Pous2015} rather than concrete characters. Algorithms such as membership, containment, and equivalence can then be reformulated to work with these symbolic representations. Symbolic automata have been successfully applied to practical problems such as building input sanitizers for Unicode, which has tens of thousands of characters~\cite{Hooimeijer2011}. However, \NetKAT's richer semantics precludes the direct adoption of standard notions of symbolic automata. In particular, \NetKAT's transitions describe not only predicates but also transformations on the current packet. As Pous writes, it ``seems feasible'' to generalize his work to NetKAT, but
``not straightforward''~\cite{Pous2015}.

In this paper, we close this gap and develop symbolic techniques for checking program equivalence for NetKAT. In doing so, we address three key challenges:

\paragraph{Challenge 1: Expressive but Compact Symbolic Representations}
The state space and transitions of NetKAT automata are very large due to the way the ``alphabet'' is built from the space of all possible packets. Orthogonally, these characters also encode packet transformations, as reflected in \NetKAT's packet-processing semantics. It is crucial that the symbolic representations for NetKAT programs be compact and admit efficient equivalence checks.

\paragraph{Challenge 2: Extended Logical Operators}
In the tradition of regular expressions, NetKAT only includes sequential composition ($e_1 \cdot e_2$) and union operators ($e_1 +
e_2$)---the latter computes the union of the sets of traces described by $e_1$ and $e_2$. However, when verifying network-wide properties it is often useful to have operators for intersection ($e_1 \cap e_2$), difference ($e_1 \setminus e_2$), and symmetric difference ($e_1
\oplus e_2$). Having native support for these operators at the level of syntax and in equivalence checking algorithms allows the reduction of all verification queries to an emptiness check---e.g., $A \equiv B$
reduces to $A \oplus B \equiv \zero$, and $A \sqsubseteq B$ reduces to $A \setminus B \equiv \zero$.

\paragraph{Challenge 3: Support for Counter-Examples}
When an equivalence check fails, it can be hard to understand the cause of the failure, and which changes might be needed to resolve the problem. It is therefore important to be able to construct \emph{symbolic counter-examples} that precisely capture the input packets and traces that cause equivalence to fail.

\medskip In addressing the above challenges, we make the following technical contributions:

\paragraph{\textbf{Contribution 1: Efficient symbolic representations} (\Cref{sec:symbolic}).}
We design a symbolic data structure called Symbolic Packet Programs (\SPPn{}s) for representing both sets of packets and transformations on packets in a symbolic manner. \SPPn{}s generalize classic BDDs and are asymptotically more efficient than the representations used in prior work on NetKAT. In particular, \SPPn{}s support efficient sequential composition ($e_1 \cdot e_2$) and can be made \emph{canonical} which, with hash consing, allows us to check equivalence of \SPPn{}s in constant time.

\paragraph{\textbf{Contribution 2: Symbolic Brzozowski derivatives} (\Cref{sec:symaut}).}
We introduce a new form of symbolic Brzozowski derivative to produce automata for NetKAT programs.
Our approach naturally and efficiently supports the extended ``negative'' logical operators mentioned above, in contrast to prior approaches that only support ``positive'' operators like union.

\paragraph{\textbf{Contribution 3: Symbolic bisimilarity checking} (\Cref{sec:fwdbwd}).}
Building on the foundation provided by \SPPn{}s and deterministic NetKAT automata, we develop symbolic algorithms for checking bisimilarity. We present a symbolic version of the standard algorithm that searches in the forward direction through the state space of the automata under consideration. We also present a novel backward algorithm that computes symbolic counter-examples to equivalence---i.e., the precise set of input packets for which equivalence fails.

\paragraph{\textbf{Contribution 4: \KATch implementation} (\Cref{sec:impl}) \textbf{and evaluation} (\Cref{sec:eval}).}
Finally, we present a new verification tool, called \KATch, implemented in Scala. \KATch{} implements symbolic bisimilarity checking, including an extended set of extended logical operators, as well as symbolic counter-examples.  We evaluate the performance of \KATch{} on a variety of real-world topologies, and show that it efficiently answers a variety of verification queries and scales to much larger networks than prior work. Due to its use of our efficient data structures, \KATch{} is several orders of magnitude faster than prior implementations of NetKAT on these realistic examples. Moreover, \KATch{} is faster than prior work on synthetic combinatorial benchmarks by arbitrary large factors.

\section{\NetKAT Expressions and Automata}
\label{sec:netkat}

This section reviews basic definitions for \NetKAT to set the stage for the new contributions presented in the subsequent sections.
\NetKAT is a language for specifying the packet-forwarding behavior of network data planes~\cite{Anderson2014,Foster2015,Smolka2015}.\footnote{Networks have a control plane, which computes paths through the topology using distributed routing protocols (or a software-defined networking controller), and a data plane, which implements paths using high-speed hardware and software pipelines. NetKAT models the behavior of the latter.}  The syntax and semantics of \NetKAT is presented in \Cref{fig:synsem}, and is based on Kozen's Kleene Algebra with Tests (KAT) \cite{Kozen1996}.

\begin{figure}
    \renewcommand{\arraystretch}{1.1}
    \begin{tabular}{lcc}
        \textbf{Syntax} & \textbf{Description} & \textbf{Semantics} \\[1mm]
        \hline \\[-0.9em]
        $p,q \Coloneqq $ & & $\Sem{p}(\alpha\!:\!\Pk) : \pow{\Pk^+} =$ \\[1mm]
        \qquad$\mid\quad \bot$ & \emph{False} & $\emptyset$ \\
        \qquad$\mid\quad \top$ & \emph{True} & $\{ \alpha \}$ \\
        \qquad$\mid\quad f \test v$ & \emph{Test equals} & $\text{ if } \alpha_f = v \text{ then }  \{ \alpha \} \text{ else } \emptyset$ \\
        \qquad$\mid\quad f \testNE v$ & \emph{Test not equals} & $\text{ if } \alpha_f \neq v \text{ then } \{ \alpha \} \text{ else } \emptyset$ \\
        \qquad$\mid\quad f \mut\, v$ & \emph{Modification} & $\{ \alpha[f \mut\, v] \}$ \\
        \qquad$\mid\quad \text{dup}$ & \emph{Duplication} & $\{ \alpha\alpha \}$ \\
        \qquad$\mid\quad p + q$ & \emph{Union} & $\Sem{p}(\alpha) \cup \Sem{q}(\alpha)$ \\
        \qquad$\mid\quad p \cdot q$ & \emph{Sequencing} & $\{ \mathbf{a}\mathbf{b} \mid \mathbf{a}\beta \in \Sem{p}(\alpha),\ \mathbf{b} \in \Sem{q}(\beta) \}$ \\
        \qquad$\mid\quad p^\star$ & \emph{Iteration} & $\bigcup\limits_{n\geq 0} \Sem{p^n}$ \\[3mm]
        \hline \\[-0.8em]
        {Values} $v \Coloneqq 0 \mid 1 \mid \ldots \mid n$ &
        {Fields} $f \Coloneqq f_1 \mid \ldots \mid f_k$ &
        {Packets} $\pk \Coloneqq \{f_1 = v_1, \ldots, f_k = v_k\}$
    \end{tabular}
    \renewcommand{\arraystretch}{1.0}
    \caption{NetKAT syntax and semantics.}
    \label{fig:synsem}
\end{figure}

To a first approximation, \NetKAT can be thought of as a simple, imperative language that operates over packets, where a \emph{packet}
is a finite record assigning values to fields. The basic primitives in \NetKAT are packet tests ($f \test v$, $f \testNE v$) and packet modifications ($f\mut v$). Program expressions are then compositionally built from tests and packet modifications, using union ($+$), sequencing ($\cdot$), and iteration ($\star$).  Conditionals and loops can be encoded in the standard way:
\[
\textbf{if } b\textbf{ then } p  \textbf{ else } q \triangleq b \cdot p + \neg b \cdot q \qquad\qquad
\textbf{while } b\textbf{ do } p   \triangleq (b \cdot p)^\star \cdot \neg b.
\]
In a network, conditionals can be used to model the behavior of the forwarding tables on individual switches while iteration can be used to model the iterated processing performed by the network as a whole;
the original paper on NetKAT provides further details~\cite{Anderson2014}. Note that assignments and tests in \NetKAT are always against constant values.  This is a key restriction that makes equivalence decidable and aligns with the capabilities of data plane hardware. The $\text{dup}$ primitive makes a copy of the current packet and appends it to the trace, which only ever grows as the packet goes through the network. This primitive is crucial for expressing network-wide properties, as it allows the semantics to
``see'' the intermediate packets processed on internal switches.

\NetKAT's formal semantics, given in \Cref{fig:synsem}, defines the action of a program on an input packet $\alpha$, producing a set of output traces.  The semantics for $\top$ gives a single trace containing the single input packet $\alpha$, whereas $\bot$ produces no traces.  Tests ($f \test v$ and $f \testNE v$) produce a singleton trace or no traces, depending on whether the test succeeds or fails. Modifications ($f \mut v$) produce a singleton trace with the modified packet.  Duplication ($\text{dup}$) produces a single trace with two copies of the input packet.  Union ($p + q$) produces the union of the traces produced by $p$ and $q$.  Sequential composition ($p \cdot q$)
produces the concatenation of the traces produced by $p$ and $q$, where the last packet in output traces of $p$ is used as the input to $q$. Finally, iteration ($p^\star$) produces the union of the traces produced by $p$ iterated zero or more times.
Consider the following example:
\begin{align*}
    (x \test 0 \cdot x \mut 1 \cdot \text{dup}\ +\ x \test 1 \cdot x \mut 0 \cdot \text{dup})^\star
\end{align*}
This program repeatedly flips the value of $x$ between $0$ and $1$, and traces the packet at each step. On input packet $x\test 0$, it produces the following traces:
\begin{align*}
    \{ [x\test0], [x\test 0, x\test 1], [x\test 0, x\test 1, x\test 0], [x\test 0, x\test 1, x\test 0, x\test 1], \ldots \}
\end{align*}
Consider what happens if we change the program to $(x \test 0 \cdot x \mut 1 \cdot \text{dup}\ +\ x \mut 0 \cdot \dup)^\star$.
Unlike the previous example, where the tests were disjoint, this program can generate multiple outputs for a given input packet, as the right branch of the union can always be taken, leading to sequences $x\test 0, x\test 0, \ldots$. On the other hand, if the $x \mut 1$ assignment is performed, then the left branch cannot be taken the next time, because the test $x \test 0$ will fail. Hence, the program produces traces with sequences of $x \test 0$ packets, with singular $x \test 1$ packets between them.

As these simple examples show, despite the fact that assignments and tests in NetKAT programs are always against constant values, the behavior of NetKAT programs can be complicated, particularly when conditionals, iteration, and multiple packet fields are involved.

Traditional presentations of NetKAT distinguish a syntactic test fragment that includes negation:
\begin{align*}
    t,r &\Coloneqq \bot \mid \top \mid f \test v \mid t \lor r \mid t \land r \mid \neg t \tag{Test fragment}\\
    p,q &\Coloneqq t \mid f \mut\, v \mid p + q \mid p \cdot q \mid \text{dup} \mid p^\star \tag{General expressions}
\end{align*}
This distinction is necessary because negation $\neg p$ only makes sense on the test fragment.
We replace general negation $\neg t$ in our presentation with only negated field tests $f \testNE v$. There is no loss of generality---we translate a test fragment expression $t$ to $[t]_0$ using the DeMorgan rules:

\noindent\begin{minipage}{0.25\textwidth}
    \begin{align*}
        [\bot]_0 &= \bot\\
        [\top]_0 &= \top\\
        [f \test v]_0 &= (f \test v)
    \end{align*}
\end{minipage}\begin{minipage}{0.25\textwidth}
    \begin{align*}
        [t \lor r]_0 &= [t]_0 + [r]_0\\
        [t \land r]_0 &= [t]_0 \cdot [r]_0\\
        [\neg t]_0 &= [t]_1
    \end{align*}
\end{minipage}\begin{minipage}{0.25\textwidth}
    \begin{align*}
        [\bot]_1 &= \top\\
        [\top]_1 &= \bot\\
        [f \test v]_1 &= (f \testNE v)
    \end{align*}
\end{minipage}\begin{minipage}{0.25\textwidth}
    \begin{align*}
        [t \lor r]_1 &= [t]_1 \cdot [r]_1\\
        [t \land r]_1 &= [t]_1 + [r]_1\\
        [\neg t]_1 &= [t]_0
    \end{align*}
\end{minipage}

\paragraph{Conditional dup, equivalence, and subsumption}
An interesting feature of NetKAT is that dup need not appear the same number of times on all paths through the expression.
Consider the program $(f \mut\, v + f \mut\, w + \mathsf{dup})^\star$, which either sets field $f$ to $v$ or to $w$, or logs the packet to the trace, and then repeats.
Any number of writes can happen between two dups, and only the effect of the last write before the dup gets logged to the trace.
This is allowed and fully supported by \KATch{}.

The use of dup controls the type of equivalence that is checked.
If we insert a dup after every packet field write, then we obtain trace equivalence.
If we insert no dups whatsoever, we obtain input-output equivalence.
By inserting dups in some places but not others, we can control the equivalence that is checked in a fine-grained manner.

In addition to $p \equiv q$, we may want to check inclusion $p \sqsubseteq q$.
This can be expressed as $p + q \equiv q$ or equivalently, in \KATch{}, as $p \setminus q \equiv \bot$.
As before, the number of dups controls the type of inclusion that is checked: from trace inclusion to input-output inclusion, or something in between.

\subsection{\NetKAT Automata}\label{sec:automata}

\NetKAT's semantics induce an equivalence $(p \equiv q) \triangleq (\Sem{p} = \Sem{q})$ on syntactic programs.  This equivalence is decidable, and can be computed by converting programs to automata and checking for automata equivalence. To convert \NetKAT programs to automata, the standard approach is to use Antimirov derivatives.  We do not give the details here, but the interested reader can find them in \cite{Foster2015}, and for our symbolic automata, in \Cref{sec:symaut}.  We continue with a brief overview of \NetKAT automata and how to check their equivalence.  A \NetKAT automaton $(S, s_0, \left<\epsilon, \delta\right>)$ consists of a set of states $S$, a start state $s_0$, and a pair of functions that depend on an input packet:
\[
\epsilon : S \times \Pk \to 2^\Pk \qquad\qquad  \delta : S\times\Pk \to S^\Pk
\]
Intuitively, the observation function $\epsilon$ is analogous to the notion of a final state, and models the output packets produced from an input packet at a given state. The transition function $\delta$
models the state transitions that can occur when processing an input packet at a given state. Because transitions can modify the input packet, the transition structure $\delta(s,\pk)$ is itself a function of the modified packet, and tells us to transition to state $\delta(s,\pk)(\pk')$ when input packet $\pk$ is modified to $\pk'$.

In terms of the semantics, a transition in the automaton corresponds to executing to the next dup in a NetKAT program, and appends the current packet to the trace.  It should also be noted that states in the automaton do not necessarily correspond to network devices;
because the location of the packet is modeled as just a field in the packet, there can be states in the automaton that handle packets from multiple devices, and the packets of a device can be handled by multiple states.

\paragraph{Carry-on packets}
Unlike traditional regular expressions and automata, NetKAT is stateful, and the output packet of a transition is carried on to the next state.  This makes NetKAT fundamentally different from traditional regular expressions, resulting in challenges in the semantics (which is not a straightforward trace or language semantics) and in designing equivalence procedures, as they have to account for the carry-on packet. Technically it would be possible to fold the packet into the state of the automaton and approach the semantics and the problem of checking equivalence using classical automata. However, because the number of possible packets is exponential in the number of fields and values, such conversion results in enormous state blowup and is therefore impractical.

\begin{figure}
    \begin{algorithmic}
    \State \textbf{Input:} A pair of NetKAT automata $(S, s_0, \delta, \epsilon)$, $(S', s'_0, \delta', \epsilon').$
    \State \textbf{Returns:} A Boolean indicating whether the automata are equivalent.
    \smallskip
    \State $W \gets \{ (s_0, s'_0, \pk) \mid \pk \in \Pk \}$;
    \While{$W$ changes}
        \For {$(s,s',\pk) \in W$}
            \If {$\epsilon(s, \pk) \neq \epsilon'(s', \pk)$}
                \Return \textbf{false};
            \EndIf
            \State $W \gets W \cup \{ (\delta(s, \pk)(\pk'), \delta'(s', \pk)(\pk'), \pk') \mid \pk' \in \Pk \}$
        \EndFor
    \EndWhile
    \State
    \Return \textbf{true};
    \end{algorithmic}
    \caption{Bisimulation algorithm à la Hopcroft \& Karp \cite{Hopcroft1971}}
    \label{fig:bisim}
\end{figure}

\subsection{Bisimulation of NetKAT Automata}\label{sec:bisimulation}

\NetKAT automata can be used to decide $p \equiv q$.  Intuitively, while traces may be infinite, transitions only depend on the current packet, which has a finite number of distinct possible values.

Given two automata $(S, s_0, \left<\epsilon, \delta\right>)$ and $(S', s'_0, \left<\epsilon', \delta'\right>)$, we check their equivalence by considering all possible input packets separately. We run the two automata in parallel, starting from an input packet $\pk$ at their respective start states. We first check that the immediate outputs $\epsilon(s_0, \pk)$ and $\epsilon'(s'_0, \pk)$ are the same. If so, we check that the transitions $\delta(s_0, \pk)$ and $\delta'(s'_0, \pk)$ are equivalent, by recursively checking the equivalence of the states $\delta(s_0, \pk)(\pk')$ and $\delta'(s'_0, \pk)(\pk')$ for all $\pk'$.

We can implement this strategy using a work list algorithm. The work list contains triples $(s, s', \pk)$, where $s$ and $s'$ are states in the two automata, and $\pk$ is an input packet. Initially, the work list contains $(s_0, s'_0, \pk)$ for all packets $\pk$. We then repeatedly take triples $(s, s', \pk)$ from the work list, and check that $\epsilon(s, \pk) = \epsilon'(s', \pk)$ and add $(\delta(s, \pk)(\pk'), \delta'(s', \pk)(\pk'), \pk')$ to the work list for all $\pk'$. If at any point we find that $\epsilon(s, \pk) \neq \epsilon'(t, \pk)$, then the automata are not equivalent. If the work list stabilizes, the automata are equivalent. This algorithm is shown in \Cref{fig:bisim}.

Of course, the issue with this algorithm is that the space of packets is huge. The rest of this paper develops techniques for working with symbolic representations of packets and automata, allowing us to represent and manipulate large sets of packets and to efficiently check equivalence of automata without explicitly enumerating the space of packets.

\section{Symbolic \NetKAT Representations}\label{sec:symbolic}

In this section, we develop symbolic techniques to trace an entire set of packets at once, vastly reducing the number of iterations required to compute a bisimulation in practice. These techniques will allow us to revise the approach of \Cref{fig:bisim} with a much more efficient version in \Cref{sec:fwdbwd}.

First, we introduce a representation for symbolic packets, which represent sets of packets, or equivalently, the fragment of \NetKAT where all atoms are tests. This representation is essentially a natural n-ary variant of binary decision diagrams (BDDs) \cite{Bryant1992}. Second, we introduce Symbolic Packet Programs (\SPPn{}s), a new representation for symbolic transitions in NetKAT automata, representing the $\text{dup}$-free fragment of \NetKAT---i.e., the fragment in which all atoms are tests or assignments. The primary challenge is to design this representation so that it is efficiently closed under the NetKAT operations. Furthermore, we need to be able to efficiently compute the transition of a symbolic packet over a \SPPn{}, both forward and backward.

\begin{figure}
  \vspace{-0.5cm}

  \centering
  \begin{subfigure}[b]{0.29\textwidth}
      \centering
      \begin{tikzpicture}[>=latex',line join=bevel, scale=0.55]
\tikzstyle{every node}=[inner sep=0pt, minimum size=12pt, font=\scriptsize]
\node (n0) at (61.234bp,213.1bp) [draw,fill=skyblue,circle] {a};
  \definecolor{fillcolor}{rgb}{0.76,0.88,0.76};
  \node (n1) at (41.234bp,141.25bp) [draw,fill=fillcolor,circle] {b};
  \node (n2) at (95.234bp,10.8bp) [draw,rectangle] {$\bot$};
  \node (n3) at (22.234bp,10.8bp) [draw,rectangle] {$\top$};
  \node (n4) at (76.234bp,82.65bp) [draw,fill=peachpuff,circle] {c};
  \definecolor{fillcolor}{rgb}{0.76,0.88,0.76};
  \node (n5) at (86.234bp,141.25bp) [draw,fill=fillcolor,circle] {b};
  \definecolor{strokecolor}{rgb}{0.02,0.02,0.02};
  \draw [strokecolor,->] (n1) ..controls (35.902bp,111.11bp) and (31.316bp,67.892bp)  .. (50.234bp,39.6bp) .. controls (56.913bp,29.612bp) and (68.52bp,22.571bp)  .. (n2);
  \definecolor{strokecol}{rgb}{0.0,0.0,0.0};
  \pgfsetstrokecolor{strokecol}
  \draw (43.859bp,82.65bp) node {5};
  \definecolor{strokecolor}{rgb}{0.02,0.02,0.02};
  \draw [strokecolor,->,dashed] (n5) ..controls (82.967bp,121.76bp) and (80.838bp,109.71bp)  .. (n4);
  \definecolor{strokecolor}{rgb}{0.02,0.02,0.02};
  \draw [strokecolor,->,dashed] (n0) ..controls (68.852bp,190.82bp) and (75.824bp,171.33bp)  .. (n5);
  \definecolor{strokecolor}{rgb}{0.02,0.02,0.02};
  \draw [strokecolor,->] (n0) ..controls (55.988bp,197.06bp) and (53.762bp,190.35bp)  .. (51.984bp,184.3bp) .. controls (49.462bp,175.71bp) and (46.966bp,166.05bp)  .. (n1);
  \draw (57.859bp,177.18bp) node {3};
  \definecolor{strokecolor}{rgb}{0.02,0.02,0.02};
  \draw [strokecolor,->] (n1) ..controls (23.878bp,125.38bp) and (8.9767bp,110.25bp)  .. (2.9839bp,93.45bp) .. controls (-4.9467bp,71.215bp) and (4.8833bp,44.402bp)  .. (n3);
  \draw (8.8589bp,82.65bp) node {4};
  \definecolor{strokecolor}{rgb}{0.02,0.02,0.02};
  \draw [strokecolor,->] (n4) ..controls (60.969bp,61.905bp) and (44.831bp,41.03bp)  .. (n3);
  \draw (59.859bp,46.725bp) node {5};
  \definecolor{strokecolor}{rgb}{0.02,0.02,0.02};
  \draw [strokecolor,->,dashed] (n4) ..controls (81.96bp,60.598bp) and (87.11bp,41.666bp)  .. (n2);
  \definecolor{strokecolor}{rgb}{0.02,0.02,0.02};
  \draw [strokecolor,->] (n5) ..controls (91.839bp,125.33bp) and (93.99bp,118.62bp)  .. (95.234bp,112.45bp) .. controls (101.65bp,80.637bp) and (103.12bp,71.926bp)  .. (100.23bp,39.6bp) .. controls (99.907bp,35.937bp) and (99.343bp,32.034bp)  .. (n2);
  \draw (106.86bp,82.65bp) node {5};
  \definecolor{strokecolor}{rgb}{0.02,0.02,0.02};
  \draw [strokecolor,->,dashed] (n1) ..controls (52.283bp,122.38bp) and (61.027bp,108.24bp)  .. (n4);
\end{tikzpicture}
      \vspace*{-0.1cm}
      \caption*{$p$}
      \label{fig:sp1}
  \end{subfigure}
  \hfill 
  \begin{subfigure}[b]{0.29\textwidth}
      \centering
      \begin{tikzpicture}[>=latex',line join=bevel, scale=0.55]
\tikzstyle{every node}=[inner sep=0pt, minimum size=12pt, font=\scriptsize]
\definecolor{fillcolor}{rgb}{0.76,0.88,0.76};
  \node (n10) at (27.8bp,154.5bp) [draw,fill=fillcolor,circle] {b};
  \node (n11) at (27.8bp,82.65bp) [draw,fill=peachpuff,circle] {c};
  \node (n8) at (10.8bp,10.8bp) [draw,rectangle] {$\bot$};
  \node (n9) at (76.8bp,10.8bp) [draw,rectangle] {$\top$};
  \node (n6) at (51.8bp,213.1bp) [draw,fill=skyblue,circle] {a};
  \node (n7) at (77.8bp,82.65bp) [draw,fill=peachpuff,circle] {c};
  \definecolor{strokecolor}{rgb}{0.02,0.02,0.02};
  \draw [strokecolor,->,dashed] (n10) ..controls (18.814bp,132.8bp) and (10.825bp,112.11bp)  .. (7.8bp,93.45bp) .. controls (4.1521bp,70.953bp) and (6.2649bp,44.532bp)  .. (n8);
  \definecolor{strokecolor}{rgb}{0.02,0.02,0.02};
  \draw [strokecolor,->,dashed] (n6) ..controls (44.176bp,194.12bp) and (38.669bp,181.13bp)  .. (n10);
  \definecolor{strokecolor}{rgb}{0.02,0.02,0.02};
  \draw [strokecolor,->] (n7) ..controls (70.972bp,62.912bp) and (64.671bp,49.128bp)  .. (55.8bp,39.6bp) .. controls (47.895bp,31.11bp) and (36.834bp,24.229bp)  .. (n8);
  \definecolor{strokecol}{rgb}{0.0,0.0,0.0};
  \pgfsetstrokecolor{strokecol}
  \draw (71.425bp,46.725bp) node {5};
  \definecolor{strokecolor}{rgb}{0.02,0.02,0.02};
  \draw [strokecolor,->,dashed] (n7) ..controls (85.439bp,67.39bp) and (88.439bp,60.44bp)  .. (89.8bp,53.85bp) .. controls (91.081bp,47.648bp) and (91.174bp,45.782bp)  .. (89.8bp,39.6bp) .. controls (88.905bp,35.572bp) and (87.362bp,31.438bp)  .. (n9);
  \definecolor{strokecolor}{rgb}{0.02,0.02,0.02};
  \draw [strokecolor,->] (n10) ..controls (27.8bp,132.18bp) and (27.8bp,113.61bp)  .. (n11);
  \draw (33.425bp,118.57bp) node {3};
  \definecolor{strokecolor}{rgb}{0.02,0.02,0.02};
  \draw [strokecolor,->,dashed] (n11) ..controls (22.587bp,60.231bp) and (18.008bp,41.415bp)  .. (n8);
  \definecolor{strokecolor}{rgb}{0.02,0.02,0.02};
  \draw [strokecolor,->] (n6) ..controls (58.321bp,179.88bp) and (69.054bp,126.86bp)  .. (n7);
  \draw (71.425bp,154.5bp) node {5};
  \definecolor{strokecolor}{rgb}{0.02,0.02,0.02};
  \draw [strokecolor,->] (n11) ..controls (32.891bp,62.67bp) and (37.526bp,49.455bp)  .. (44.55bp,39.6bp) .. controls (48.952bp,33.424bp) and (55.005bp,27.734bp)  .. (n9);
  \draw (50.425bp,46.725bp) node {4};
\end{tikzpicture}
      \vspace*{-0.1cm}
      \caption*{$q$}
      \label{fig:sp2}
  \end{subfigure}
  \hfill 
  \begin{subfigure}[b]{0.40\textwidth}
      \centering
      \begin{tikzpicture}[>=latex',line join=bevel, scale=0.55]
\tikzstyle{every node}=[inner sep=0pt, minimum size=12pt, font=\scriptsize]
\node (n12) at (172.74bp,226.35bp) [draw,fill=skyblue,circle] {a};
  \definecolor{fillcolor}{rgb}{0.76,0.88,0.76};
  \node (n13) at (22.74bp,154.5bp) [draw,fill=fillcolor,circle] {b};
  \node (n16) at (81.74bp,10.8bp) [draw,rectangle] {$\top$};
  \definecolor{fillcolor}{rgb}{0.76,0.88,0.76};
  \node (n17) at (172.74bp,154.5bp) [draw,fill=fillcolor,circle] {b};
  \node (n14) at (22.74bp,82.65bp) [draw,fill=peachpuff,circle] {c};
  \node (n15) at (233.74bp,10.8bp) [draw,rectangle] {$\bot$};
  \node (n18) at (159.74bp,82.65bp) [draw,fill=peachpuff,circle] {c};
  \node (n19) at (247.74bp,82.65bp) [draw,fill=peachpuff,circle] {c};
  \definecolor{fillcolor}{rgb}{0.76,0.88,0.76};
  \node (n20) at (256.74bp,154.5bp) [draw,fill=fillcolor,circle] {b};
  \definecolor{strokecolor}{rgb}{0.02,0.02,0.02};
  \draw [strokecolor,->] (n12) ..controls (138.22bp,209.28bp) and (69.188bp,177.13bp)  .. (n13);
  \definecolor{strokecol}{rgb}{0.0,0.0,0.0};
  \pgfsetstrokecolor{strokecol}
  \draw (117.36bp,190.43bp) node {5};
  \definecolor{strokecolor}{rgb}{0.02,0.02,0.02};
  \draw [strokecolor,->,dashed] (n14) ..controls (25.722bp,62.569bp) and (29.292bp,49.013bp)  .. (36.74bp,39.6bp) .. controls (44.032bp,30.385bp) and (55.278bp,23.436bp)  .. (n16);
  \definecolor{strokecolor}{rgb}{0.02,0.02,0.02};
  \draw [strokecolor,->,dashed] (n12) ..controls (195.38bp,206.52bp) and (226.29bp,180.82bp)  .. (n20);
  \definecolor{strokecolor}{rgb}{0.02,0.02,0.02};
  \draw [strokecolor,->] (n14) ..controls (42.185bp,65.728bp) and (65.101bp,48.592bp)  .. (87.49bp,39.6bp) .. controls (131.9bp,21.765bp) and (188.57bp,15.186bp)  .. (n15);
  \draw (93.365bp,46.725bp) node {5};
  \definecolor{strokecolor}{rgb}{0.02,0.02,0.02};
  \draw [strokecolor,->] (n18) ..controls (140.9bp,71.951bp) and (126.33bp,63.505bp)  .. (115.49bp,53.85bp) .. controls (106.58bp,45.909bp) and (98.167bp,35.445bp)  .. (n16);
  \draw (121.36bp,46.725bp) node {5};
  \definecolor{strokecolor}{rgb}{0.02,0.02,0.02};
  \draw [strokecolor,->] (n18) ..controls (148.19bp,63.796bp) and (137.95bp,49.619bp)  .. (126.74bp,39.6bp) .. controls (118.2bp,31.969bp) and (107.31bp,25.177bp)  .. (n16);
  \draw (146.36bp,46.725bp) node {4};
  \definecolor{strokecolor}{rgb}{0.02,0.02,0.02};
  \draw [strokecolor,->,dashed] (n17) ..controls (193.64bp,134.03bp) and (220.45bp,109.07bp)  .. (n19);
  \definecolor{strokecolor}{rgb}{0.02,0.02,0.02};
  \draw [strokecolor,->,dashed] (n18) ..controls (175.37bp,70.163bp) and (186.09bp,61.865bp)  .. (194.74bp,53.85bp) .. controls (203.93bp,45.333bp) and (213.52bp,35.025bp)  .. (n15);
  \definecolor{strokecolor}{rgb}{0.02,0.02,0.02};
  \draw [strokecolor,->] (n17) ..controls (186.94bp,120.52bp) and (213.98bp,57.695bp)  .. (n15);
  \draw (212.36bp,82.65bp) node {5};
  \definecolor{strokecolor}{rgb}{0.02,0.02,0.02};
  \draw [strokecolor,->] (n19) ..controls (221.55bp,69.012bp) and (186.23bp,52.275bp)  .. (155.74bp,39.6bp) .. controls (136.45bp,31.578bp) and (113.93bp,23.265bp)  .. (n16);
  \draw (192.36bp,46.725bp) node {5};
  \definecolor{strokecolor}{rgb}{0.02,0.02,0.02};
  \draw [strokecolor,->] (n17) ..controls (140.41bp,136.39bp) and (83.158bp,101.86bp)  .. (61.74bp,53.85bp) .. controls (57.649bp,44.679bp) and (62.237bp,34.508bp)  .. (n16);
  \draw (94.365bp,82.65bp) node {4};
  \definecolor{strokecolor}{rgb}{0.02,0.02,0.02};
  \draw [strokecolor,->,dashed] (n13) ..controls (9.0829bp,129.77bp) and (-5.6238bp,97.982bp)  .. (2.74bp,71.85bp) .. controls (8.1065bp,55.083bp) and (12.344bp,51.023bp)  .. (25.74bp,39.6bp) .. controls (37.275bp,29.765bp) and (52.89bp,22.387bp)  .. (n16);
  \definecolor{strokecolor}{rgb}{0.02,0.02,0.02};
  \draw [strokecolor,->] (n20) ..controls (265.01bp,139.31bp) and (268.26bp,132.36bp)  .. (269.74bp,125.7bp) .. controls (271.75bp,116.66bp) and (270.35bp,89.469bp)  .. (267.74bp,71.85bp) .. controls (265.58bp,57.241bp) and (266.79bp,52.577bp)  .. (259.74bp,39.6bp) .. controls (257.06bp,34.675bp) and (253.31bp,29.962bp)  .. (n15);
  \draw (275.36bp,82.65bp) node {5};
  \definecolor{strokecolor}{rgb}{0.02,0.02,0.02};
  \draw [strokecolor,->] (n20) ..controls (231.26bp,135.15bp) and (193.74bp,108.13bp)  .. (n18);
  \draw (222.36bp,118.57bp) node {3};
  \definecolor{strokecolor}{rgb}{0.02,0.02,0.02};
  \draw [strokecolor,->] (n13) ..controls (22.74bp,132.18bp) and (22.74bp,113.61bp)  .. (n14);
  \draw (28.365bp,118.57bp) node {5};
  \definecolor{strokecolor}{rgb}{0.02,0.02,0.02};
  \draw [strokecolor,->] (n17) ..controls (165.0bp,139.24bp) and (161.94bp,132.29bp)  .. (160.49bp,125.7bp) .. controls (158.65bp,117.34bp) and (158.34bp,107.83bp)  .. (n18);
  \draw (166.36bp,118.57bp) node {3};
  \definecolor{strokecolor}{rgb}{0.02,0.02,0.02};
  \draw [strokecolor,->,dashed] (n19) ..controls (247.79bp,62.901bp) and (247.18bp,50.304bp)  .. (244.74bp,39.6bp) .. controls (243.85bp,35.71bp) and (242.51bp,31.649bp)  .. (n15);
  \definecolor{strokecolor}{rgb}{0.02,0.02,0.02};
  \draw [strokecolor,->] (n12) ..controls (172.74bp,204.03bp) and (172.74bp,185.46bp)  .. (n17);
  \draw (178.36bp,190.43bp) node {3};
  \definecolor{strokecolor}{rgb}{0.02,0.02,0.02};
  \draw [strokecolor,->,dashed] (n20) ..controls (257.46bp,134.85bp) and (257.4bp,122.28bp)  .. (255.74bp,111.45bp) .. controls (255.14bp,107.51bp) and (254.14bp,103.33bp)  .. (n19);
\end{tikzpicture}
      \vspace*{-0.5cm}
      \caption*{\hspace{0.5cm} $p + q$}
      \label{fig:sp1cup2}
  \end{subfigure}
  \caption{Examples of $\SP$s, where $p \triangleq (a\test3 \cdot b\test4 \mathrel{+} b \testNE 5 \cdot c\test5)$ and $q \triangleq (b=3 \cdot c\test4 \mathrel{+} a\test5 \cdot c \testNE 5)$.}
  \label{fig:sympk}
\end{figure}

\subsection{Symbolic Packets}\label{sec:sympk}

We begin by choosing a representation for symbolic packets. A symbolic packet $p \subseteq \Pk$ is a set of concrete packets, represented compactly as a decision diagram. Syntactically, symbolic packets are NetKAT expressions with atoms restricted to tests, represented in the following canonical form:
%
\begin{align*}
  p \in \SP \Coloneqq \bot \mid \top \mid
  \underbrace{\SP(f, \ \{ \ldots, v_i \mapsto q_i, \ldots \},\ q)}_{\equiv \ \ \sum_i f \test v_i \cdot q_i\ +\ (\prod_i f \testNE v_i)\cdot q}
\end{align*}
That is, symbolic packets form an n-ary tree, where each child $q_i$ is labeled with a test of the current field $f$, and the default case $q$ is labeled with the negation of the other tests. Hence, a given concrete packet has a unique path through the tree.

The following conditions need to be satisfied for a symbolic packet to be in canonical form, inspired by the analogous properties of BDDs:

\begin{description}
    \item[Reduced] If a child $q_i$ is equal to the default case $q$, it is removed. If only the default case remains, the symbolic packet is reduced to the default case itself.
    \item[Ordered] A path down the tree always follows the same order of fields,
    and the children $q_i$ are ordered by the value $v_i$.
\end{description}

These conditions ensure the representation of a symbolic packet is unique, in the sense that two symbolic packets are semantically equal if and only if they are syntactically equal.

\paragraph{Representation}
As with BDDs, we share nodes in the tree, making the representation of a symbolic packet a directed acyclic graph, as shown in \Cref{fig:sympk}. Vertices are labeled with the packet field they test. Solid arrows labeled with a number encode the test value, and dashed arrows represent a default case. The sinks of the graph are labeled with $\top$ or $\bot$, indicating membership in the set.  On the left, we have a symbolic packet representing all packets where $a=3$ and $b=4$, or $b \neq 5$ and $c = 5$.  In the middle, we have a symbolic packet representing all packets where $b=3$ and $c=4$, or $a=5$ and $c \neq 5$.  On the right, we have the union of these symbolic packets.

\begin{figure}[t]
\[
  \begin{array}{ll}
  \textbf{Primitive tests:}\\
  \quad (f \test v) \triangleq \SP(f, \{v \mapsto \top\}, \bot) \qquad (f \testNE v) \triangleq \SP(f, \{v \mapsto \bot\}, \top)\\
  \textbf{Operations, base cases:}\\
  \quad \top \plusop \top \triangleq \top \quad \top \cdotop \top \triangleq \top \quad \top \capop \top \triangleq \top \quad \top \oplusop \top \triangleq \bot \quad \top \minusop \top \triangleq \bot\\
  \quad \top \plusop \bot \triangleq \top \quad \top \cdotop \bot \triangleq \bot \quad \top \capop \bot \triangleq \bot \quad \top \oplusop \bot \triangleq \top \quad \top \minusop \bot \triangleq \top\\
  \quad \bot \plusop \top \triangleq \top \quad \bot \cdotop \top \triangleq \bot \quad \bot \capop \top \triangleq \bot \quad \bot \oplusop \top \triangleq \top \quad \bot \minusop \top \triangleq \bot\\
  \quad \bot \plusop \bot \triangleq \bot \quad \bot \cdotop \bot \triangleq \bot \quad \bot \capop \bot \triangleq \bot \quad \bot \oplusop \bot \triangleq \bot \quad \bot \minusop \bot \triangleq \bot\\
  \textbf{Operations, inductive case:}\\
  \quad \SP(f, b_p, d_p)\pmop \SP(f, b_q, d_q) \triangleq  \spsc(f,b'\!,d_p \pmop  d_q)\\
  \quad\quad \text{ where }\  b' = \{v \mapsto b_p(v;d_p) \pmop b_q(v;d_q) \mid v\in \keys(b_p \cup b_q)\}\\
  \textbf{Expansion:}\\
  \quad p \equiv \SP(f, \emptyset, p) \text {\qquad if \qquad} p \in \{\top, \bot, \SP(f', b, d)\} \text{ where } f \sqsubset f'\\
  \textbf{Smart constructor:}\\
  \quad \spsc(f,b,d) \triangleq
    \text{ if } b' = \emptyset \text{ then } d \text{ else } \SP(f, b', d)\\
  \qquad \text{where } b' \triangleq \{v \mapsto p \mid v \mapsto p \in b, p \neq d\}
  \end{array}
\]
    \caption{Definition of the $\SP$ operations. The inductive case is identical for all operations (indicated by $\pmop$), and applies when both $\SP$s test the same field. Expansion inserts a trivial $\SP$ node to reduce the remaining cases to the inductive case. The notation $b(v;d)$ means the child $b(v)$ if $v \in \keys(b)$, or the default case $d$ otherwise.
    }
    \label{fig:spops}
\end{figure}

\paragraph{Operations}
Symbolic packets are closed under the NetKAT operations:
\begin{align*}
    \nf{+}, \nf{\cdot}, \nf{\cap}, \nf{\oplus}, \nf{-}\ :\ \SP \times \SP \to \SP \qquad
    \nf{\star}\ :\ \SP \to \SP \qquad
    f \test v, f \testNE v\ :\ \SP
\end{align*}
As defined in \Cref{fig:spops}, operations on symbolic packets are computed by traversing the two trees in parallel, recursively taking the operation of the children, making sure to maintain canonical form using the smart constructor. Repetition $p^\star \triangleq \top$ is trivial for symbolic packets.
All other operations are defined in \appendixref{app:sp_ops}.
\paragraph{Specifications}
Each of these operations is uniquely specified by two correctness conditions:
\begin{enumerate}
    \item They semantically match their counterpart: $\Sem{p \mathop{\nf{+}} q} = \Sem{p + q}$ for all $p,q \in \SP$.
    \item They maintain canonical form: $p \mathop{\nf{+}} q$ is reduced and ordered if $p,q$ are.
\end{enumerate}
For further formal treatment of SPs, see \appendixref{app:sp_ops}.

\begin{figure}[!t]
    \vspace*{0.7cm}
    \begin{subfigure}[b]{0.2\textwidth}
        \centering
        \begin{tikzpicture}[>=latex',line join=bevel, scale=0.41]
\tikzstyle{every node}=[inner sep=0pt, minimum size=10pt, font=\tiny]
\tikzstyle{diam}=[diamond,minimum size=5pt]
\begin{scope}
  \pgfsetstrokecolor{black}
  \definecolor{strokecol}{rgb}{1.0,1.0,1.0};
  \pgfsetstrokecolor{strokecol}
  \definecolor{fillcol}{rgb}{1.0,1.0,1.0};
  \pgfsetfillcolor{fillcol}
  \filldraw (0.0bp,0.0bp) -- (0.0bp,420.3bp) -- (128.65bp,420.3bp) -- (128.65bp,0.0bp) -- cycle;
\end{scope}
\begin{scope}
  \pgfsetstrokecolor{black}
  \definecolor{strokecol}{rgb}{1.0,1.0,1.0};
  \pgfsetstrokecolor{strokecol}
  \definecolor{fillcol}{rgb}{1.0,1.0,1.0};
  \pgfsetfillcolor{fillcol}
  \filldraw (0.0bp,0.0bp) -- (0.0bp,420.3bp) -- (128.65bp,420.3bp) -- (128.65bp,0.0bp) -- cycle;
\end{scope}
\begin{scope}
  \pgfsetstrokecolor{black}
  \definecolor{strokecol}{rgb}{1.0,1.0,1.0};
  \pgfsetstrokecolor{strokecol}
  \definecolor{fillcol}{rgb}{1.0,1.0,1.0};
  \pgfsetfillcolor{fillcol}
  \filldraw (0.0bp,0.0bp) -- (0.0bp,420.3bp) -- (128.65bp,420.3bp) -- (128.65bp,0.0bp) -- cycle;
\end{scope}
\begin{scope}
  \pgfsetstrokecolor{black}
  \definecolor{strokecol}{rgb}{1.0,1.0,1.0};
  \pgfsetstrokecolor{strokecol}
  \definecolor{fillcol}{rgb}{1.0,1.0,1.0};
  \pgfsetfillcolor{fillcol}
  \filldraw (0.0bp,0.0bp) -- (0.0bp,420.3bp) -- (128.65bp,420.3bp) -- (128.65bp,0.0bp) -- cycle;
\end{scope}
\begin{scope}
  \pgfsetstrokecolor{black}
  \definecolor{strokecol}{rgb}{1.0,1.0,1.0};
  \pgfsetstrokecolor{strokecol}
  \definecolor{fillcol}{rgb}{1.0,1.0,1.0};
  \pgfsetfillcolor{fillcol}
  \filldraw (0.0bp,0.0bp) -- (0.0bp,420.3bp) -- (128.65bp,420.3bp) -- (128.65bp,0.0bp) -- cycle;
\end{scope}
\begin{scope}
  \pgfsetstrokecolor{black}
  \definecolor{strokecol}{rgb}{1.0,1.0,1.0};
  \pgfsetstrokecolor{strokecol}
  \definecolor{fillcol}{rgb}{1.0,1.0,1.0};
  \pgfsetfillcolor{fillcol}
  \filldraw (0.0bp,0.0bp) -- (0.0bp,420.3bp) -- (128.65bp,420.3bp) -- (128.65bp,0.0bp) -- cycle;
\end{scope}
  \node (n26) at (89.4bp,143.7bp) [draw,fill=peachpuff,circle] {c};
  \definecolor{fillcolor}{rgb}{0.76,0.88,0.76};
  \node (n23) at (95.4bp,276.6bp) [draw,fill=fillcolor,circle] {b};
  \node (n29) at (33.4bp,10.8bp) [draw,rectangle] {$\bot$};
  \node (n28) at (74.4bp,77.25bp) [draw,fill=peachpuff,diam] {};
  \definecolor{fillcolor}{rgb}{0.76,0.88,0.76};
  \node (n31) at (50.4bp,276.6bp) [draw,fill=fillcolor,circle] {b};
  \node (n30) at (53.4bp,343.05bp) [draw,fill=skyblue,diam] {};
  \node (n22) at (88.4bp,343.05bp) [draw,fill=skyblue,diam] {};
  \definecolor{fillcolor}{rgb}{0.76,0.88,0.76};
  \node (n32) at (50.4bp,210.15bp) [draw,fill=fillcolor,diam] {};
  \definecolor{fillcolor}{rgb}{0.76,0.88,0.76};
  \node (n33) at (5.4bp,210.15bp) [draw,fill=fillcolor,diam] {};
  \node (n27) at (103.4bp,77.25bp) [draw,fill=peachpuff,diam] {};
  \node (n21) at (69.4bp,409.5bp) [draw,fill=skyblue,circle] {a};
  \node (n25) at (103.4bp,10.8bp) [draw,rectangle] {$\top$};
  \definecolor{fillcolor}{rgb}{0.76,0.88,0.76};
  \node (n24) at (100.4bp,210.15bp) [draw,fill=fillcolor,diam] {};
  \definecolor{strokecolor}{rgb}{0.02,0.02,0.02};
  \draw [strokecolor,->,dashed] (n21) ..controls (63.988bp,386.7bp) and (58.958bp,366.44bp)  .. (n30);
  \definecolor{strokecolor}{rgb}{0.02,0.02,0.02};
  \draw [strokecolor,->] (n26) ..controls (94.21bp,120.55bp) and (98.581bp,100.44bp)  .. (n27);
  \definecolor{strokecol}{rgb}{0.0,0.0,0.0};
  \pgfsetstrokecolor{strokecol}
  \draw (104.03bp,107.78bp) node {5};
  \definecolor{strokecolor}{rgb}{0.02,0.02,0.02};
  \draw [strokecolor,->,dashed] (n24) ..controls (98.186bp,196.18bp) and (94.692bp,175.71bp)  .. (n26);
  \definecolor{strokecolor}{rgb}{0.02,0.02,0.02};
  \draw [strokecolor,->] (n32) ..controls (47.985bp,187.16bp) and (42.515bp,121.29bp)  .. (60.4bp,71.85bp) .. controls (66.659bp,54.547bp) and (79.412bp,37.774bp)  .. (n25);
  \draw (58.025bp,107.78bp) node {1};
  \definecolor{strokecolor}{rgb}{0.02,0.02,0.02};
  \draw [strokecolor,->] (n27) ..controls (103.4bp,62.114bp) and (103.4bp,42.679bp)  .. (n25);
  \draw (109.03bp,46.725bp) node {5};
  \definecolor{strokecolor}{rgb}{0.02,0.02,0.02};
  \draw [strokecolor,->,dashed] (n26) ..controls (84.327bp,120.9bp) and (79.61bp,100.64bp)  .. (n28);
  \definecolor{strokecolor}{rgb}{0.02,0.02,0.02};
  \draw [strokecolor,->] (n31) ..controls (50.4bp,253.81bp) and (50.4bp,234.51bp)  .. (n32);
  \draw (56.025bp,240.68bp) node {2};
  \definecolor{strokecolor}{rgb}{0.02,0.02,0.02};
  \draw [strokecolor,->] (n22) ..controls (89.809bp,329.08bp) and (92.032bp,308.61bp)  .. (n23);
  \draw (97.025bp,312.53bp) node {5};
  \definecolor{strokecolor}{rgb}{0.02,0.02,0.02};
  \draw [strokecolor,->,dashed] (n28) ..controls (67.237bp,64.99bp) and (52.973bp,42.568bp)  .. (n29);
  \definecolor{strokecolor}{rgb}{0.02,0.02,0.02};
  \draw [strokecolor,->,dashed] (n30) ..controls (52.765bp,328.4bp) and (51.822bp,308.14bp)  .. (n31);
  \definecolor{strokecolor}{rgb}{0.02,0.02,0.02};
  \draw [strokecolor,->,dashed] (n31) ..controls (35.834bp,254.74bp) and (19.691bp,231.62bp)  .. (n33);
  \definecolor{strokecolor}{rgb}{0.02,0.02,0.02};
  \draw [strokecolor,->] (n21) ..controls (75.861bp,386.58bp) and (81.918bp,366.04bp)  .. (n22);
  \draw (87.025bp,373.57bp) node {5};
  \definecolor{strokecolor}{rgb}{0.02,0.02,0.02};
  \draw [strokecolor,->] (n24) ..controls (103.8bp,201.42bp) and (106.89bp,193.69bp)  .. (108.4bp,186.75bp) .. controls (121.02bp,128.88bp) and (117.37bp,113.07bp)  .. (118.4bp,53.85bp) .. controls (118.51bp,47.518bp) and (119.95bp,45.74bp)  .. (118.4bp,39.6bp) .. controls (117.36bp,35.483bp) and (115.58bp,31.312bp)  .. (n25);
  \draw (123.03bp,107.78bp) node {1};
  \definecolor{strokecolor}{rgb}{0.02,0.02,0.02};
  \draw [strokecolor,->,dashed] (n33) ..controls (7.7951bp,185.07bp) and (16.108bp,104.82bp)  .. (27.4bp,39.6bp) .. controls (28.032bp,35.951bp) and (28.808bp,32.054bp)  .. (n29);
  \definecolor{strokecolor}{rgb}{0.02,0.02,0.02};
  \draw [strokecolor,->] (n32) ..controls (57.376bp,197.62bp) and (71.685bp,173.97bp)  .. (n26);
  \draw (77.025bp,179.62bp) node {2};
  \definecolor{strokecolor}{rgb}{0.02,0.02,0.02};
  \draw [strokecolor,->,dashed] (n23) ..controls (97.1bp,253.69bp) and (98.618bp,234.12bp)  .. (n24);
\end{tikzpicture}
        \vspace*{-0.1cm}
        \caption*{$p$}
        \label{fig:spp1}
    \end{subfigure}
    \hspace{-0.8cm}
    \begin{subfigure}[b]{0.25\textwidth}
        \centering
        \begin{tikzpicture}[>=latex',line join=bevel,scale=0.41]
\tikzstyle{every node}=[inner sep=0pt, minimum size=10pt, font=\tiny]
\tikzstyle{diam}=[diamond,minimum size=5pt]
\begin{scope}
  \pgfsetstrokecolor{black}
  \definecolor{strokecol}{rgb}{1.0,1.0,1.0};
  \pgfsetstrokecolor{strokecol}
  \definecolor{fillcol}{rgb}{1.0,1.0,1.0};
  \pgfsetfillcolor{fillcol}
  \filldraw (0.0bp,0.0bp) -- (0.0bp,420.3bp) -- (190.2bp,420.3bp) -- (190.2bp,0.0bp) -- cycle;
\end{scope}
\begin{scope}
  \pgfsetstrokecolor{black}
  \definecolor{strokecol}{rgb}{1.0,1.0,1.0};
  \pgfsetstrokecolor{strokecol}
  \definecolor{fillcol}{rgb}{1.0,1.0,1.0};
  \pgfsetfillcolor{fillcol}
  \filldraw (0.0bp,0.0bp) -- (0.0bp,420.3bp) -- (190.2bp,420.3bp) -- (190.2bp,0.0bp) -- cycle;
\end{scope}
\begin{scope}
  \pgfsetstrokecolor{black}
  \definecolor{strokecol}{rgb}{1.0,1.0,1.0};
  \pgfsetstrokecolor{strokecol}
  \definecolor{fillcol}{rgb}{1.0,1.0,1.0};
  \pgfsetfillcolor{fillcol}
  \filldraw (0.0bp,0.0bp) -- (0.0bp,420.3bp) -- (190.2bp,420.3bp) -- (190.2bp,0.0bp) -- cycle;
\end{scope}
\begin{scope}
  \pgfsetstrokecolor{black}
  \definecolor{strokecol}{rgb}{1.0,1.0,1.0};
  \pgfsetstrokecolor{strokecol}
  \definecolor{fillcol}{rgb}{1.0,1.0,1.0};
  \pgfsetfillcolor{fillcol}
  \filldraw (0.0bp,0.0bp) -- (0.0bp,420.3bp) -- (190.2bp,420.3bp) -- (190.2bp,0.0bp) -- cycle;
\end{scope}
\begin{scope}
  \pgfsetstrokecolor{black}
  \definecolor{strokecol}{rgb}{1.0,1.0,1.0};
  \pgfsetstrokecolor{strokecol}
  \definecolor{fillcol}{rgb}{1.0,1.0,1.0};
  \pgfsetfillcolor{fillcol}
  \filldraw (0.0bp,0.0bp) -- (0.0bp,420.3bp) -- (190.2bp,420.3bp) -- (190.2bp,0.0bp) -- cycle;
\end{scope}
\begin{scope}
  \pgfsetstrokecolor{black}
  \definecolor{strokecol}{rgb}{1.0,1.0,1.0};
  \pgfsetstrokecolor{strokecol}
  \definecolor{fillcol}{rgb}{1.0,1.0,1.0};
  \pgfsetfillcolor{fillcol}
  \filldraw (0.0bp,0.0bp) -- (0.0bp,420.3bp) -- (190.2bp,420.3bp) -- (190.2bp,0.0bp) -- cycle;
\end{scope}
  \definecolor{fillcolor}{rgb}{0.76,0.88,0.76};
  \node (n49) at (38.4bp,210.15bp) [draw,fill=fillcolor,diam] {};
  \definecolor{fillcolor}{rgb}{0.76,0.88,0.76};
  \node (n48) at (104.4bp,276.6bp) [draw,fill=fillcolor,circle] {b};
  \definecolor{fillcolor}{rgb}{0.76,0.88,0.76};
  \node (n41) at (67.4bp,210.15bp) [draw,fill=fillcolor,diam] {};
  \node (n40) at (64.4bp,10.8bp) [draw,rectangle] {$\top$};
  \node (n43) at (103.4bp,77.25bp) [draw,fill=peachpuff,diam] {};
  \node (n42) at (103.4bp,143.7bp) [draw,fill=peachpuff,circle] {c};
  \node (n45) at (104.4bp,343.05bp) [draw,fill=skyblue,diam] {};
  \node (n44) at (179.4bp,10.8bp) [draw,rectangle] {$\bot$};
  \definecolor{fillcolor}{rgb}{0.76,0.88,0.76};
  \node (n47) at (144.4bp,210.15bp) [draw,fill=fillcolor,diam] {};
  \definecolor{fillcolor}{rgb}{0.76,0.88,0.76};
  \node (n46) at (144.4bp,276.6bp) [draw,fill=fillcolor,circle] {b};
  \node (n38) at (19.4bp,143.7bp) [draw,fill=peachpuff,circle] {c};
  \node (n39) at (19.4bp,77.25bp) [draw,fill=peachpuff,diam] {};
  \definecolor{fillcolor}{rgb}{0.76,0.88,0.76};
  \node (n50) at (112.4bp,210.15bp) [draw,fill=fillcolor,diam] {};
  \node (n34) at (86.4bp,409.5bp) [draw,fill=skyblue,circle] {a};
  \node (n35) at (72.4bp,343.05bp) [draw,fill=skyblue,diam] {};
  \definecolor{fillcolor}{rgb}{0.76,0.88,0.76};
  \node (n36) at (64.4bp,276.6bp) [draw,fill=fillcolor,circle] {b};
  \definecolor{fillcolor}{rgb}{0.76,0.88,0.76};
  \node (n37) at (5.4bp,210.15bp) [draw,fill=fillcolor,diam] {};
  \definecolor{strokecolor}{rgb}{0.02,0.02,0.02};
  \draw [strokecolor,->,dashed] (n45) ..controls (104.4bp,327.91bp) and (104.4bp,308.48bp)  .. (n48);
  \definecolor{strokecolor}{rgb}{0.02,0.02,0.02};
  \draw [strokecolor,->] (n34) ..controls (79.392bp,393.92bp) and (76.644bp,387.09bp)  .. (75.15bp,380.7bp) .. controls (73.189bp,372.31bp) and (72.531bp,362.58bp)  .. (n35);
  \definecolor{strokecol}{rgb}{0.0,0.0,0.0};
  \pgfsetstrokecolor{strokecol}
  \draw (80.025bp,373.57bp) node {1};
  \definecolor{strokecolor}{rgb}{0.02,0.02,0.02};
  \draw [strokecolor,->] (n37) ..controls (3.734bp,197.84bp) and (2.403bp,183.88bp)  .. (5.15bp,172.5bp) .. controls (6.3048bp,167.72bp) and (8.4316bp,162.83bp)  .. (n38);
  \draw (11.025bp,179.62bp) node {1};
  \definecolor{strokecolor}{rgb}{0.02,0.02,0.02};
  \draw [strokecolor,->,dashed] (n47) ..controls (149.02bp,202.31bp) and (153.97bp,194.31bp)  .. (156.4bp,186.75bp) .. controls (174.46bp,130.63bp) and (178.36bp,60.018bp)  .. (n44);
  \definecolor{strokecolor}{rgb}{0.02,0.02,0.02};
  \draw [strokecolor,->] (n41) ..controls (66.955bp,179.87bp) and (65.293bp,70.561bp)  .. (n40);
  \draw (71.025bp,107.78bp) node {1};
  \definecolor{strokecolor}{rgb}{0.02,0.02,0.02};
  \draw [strokecolor,->] (n36) ..controls (40.417bp,270.54bp) and (18.626bp,263.33bp)  .. (8.15bp,247.8bp) .. controls (2.9055bp,240.02bp) and (2.7773bp,229.09bp)  .. (n37);
  \draw (14.025bp,240.68bp) node {1};
  \definecolor{strokecolor}{rgb}{0.02,0.02,0.02};
  \draw [strokecolor,->] (n47) ..controls (145.59bp,182.65bp) and (146.64bp,92.562bp)  .. (104.4bp,39.6bp) .. controls (98.059bp,31.65bp) and (88.849bp,25.035bp)  .. (n40);
  \draw (142.03bp,107.78bp) node {1};
  \definecolor{strokecolor}{rgb}{0.02,0.02,0.02};
  \draw [strokecolor,->,dashed] (n46) ..controls (144.4bp,253.81bp) and (144.4bp,234.51bp)  .. (n47);
  \definecolor{strokecolor}{rgb}{0.02,0.02,0.02};
  \draw [strokecolor,->,dashed] (n38) ..controls (19.4bp,120.91bp) and (19.4bp,101.61bp)  .. (n39);
  \definecolor{strokecolor}{rgb}{0.02,0.02,0.02};
  \draw [strokecolor,->,dashed] (n42) ..controls (103.4bp,120.91bp) and (103.4bp,101.61bp)  .. (n43);
  \definecolor{strokecolor}{rgb}{0.02,0.02,0.02};
  \draw [strokecolor,->] (n39) ..controls (14.228bp,66.171bp) and (7.7513bp,50.453bp)  .. (14.15bp,39.6bp) .. controls (21.242bp,27.571bp) and (35.616bp,20.383bp)  .. (n40);
  \draw (20.025bp,46.725bp) node {4};
  \definecolor{strokecolor}{rgb}{0.02,0.02,0.02};
  \draw [strokecolor,->,dashed] (n39) ..controls (26.835bp,65.601bp) and (42.879bp,42.623bp)  .. (n40);
  \definecolor{strokecolor}{rgb}{0.02,0.02,0.02};
  \draw [strokecolor,->,dashed] (n43) ..controls (114.81bp,66.575bp) and (145.05bp,40.933bp)  .. (n44);
  \definecolor{strokecolor}{rgb}{0.02,0.02,0.02};
  \draw [strokecolor,->] (n43) ..controls (96.586bp,64.99bp) and (83.019bp,42.568bp)  .. (n40);
  \draw (95.025bp,46.725bp) node {4};
  \definecolor{strokecolor}{rgb}{0.02,0.02,0.02};
  \draw [strokecolor,->,dashed] (n50) ..controls (110.59bp,196.18bp) and (107.73bp,175.71bp)  .. (n42);
  \definecolor{strokecolor}{rgb}{0.02,0.02,0.02};
  \draw [strokecolor,->] (n35) ..controls (70.79bp,329.08bp) and (68.249bp,308.61bp)  .. (n36);
  \draw (75.025bp,312.53bp) node {1};
  \definecolor{strokecolor}{rgb}{0.02,0.02,0.02};
  \draw [strokecolor,->,dashed] (n41) ..controls (73.798bp,197.69bp) and (86.817bp,174.39bp)  .. (n42);
  \definecolor{strokecolor}{rgb}{0.02,0.02,0.02};
  \draw [strokecolor,->,dashed] (n36) ..controls (65.42bp,253.69bp) and (66.331bp,234.12bp)  .. (n41);
  \definecolor{strokecolor}{rgb}{0.02,0.02,0.02};
  \draw [strokecolor,->] (n49) ..controls (34.724bp,196.68bp) and (28.431bp,175.33bp)  .. (n38);
  \draw (38.025bp,179.62bp) node {1};
  \definecolor{strokecolor}{rgb}{0.02,0.02,0.02};
  \draw [strokecolor,->] (n48) ..controls (84.281bp,255.95bp) and (57.976bp,230.27bp)  .. (n49);
  \draw (80.025bp,240.68bp) node {1};
  \definecolor{strokecolor}{rgb}{0.02,0.02,0.02};
  \draw [strokecolor,->,dashed] (n34) ..controls (92.521bp,386.58bp) and (98.259bp,366.04bp)  .. (n45);
  \definecolor{strokecolor}{rgb}{0.02,0.02,0.02};
  \draw [strokecolor,->] (n45) ..controls (108.89bp,334.89bp) and (114.04bp,326.66bp)  .. (118.4bp,319.65bp) .. controls (124.22bp,310.29bp) and (130.74bp,299.75bp)  .. (n46);
  \draw (132.03bp,312.53bp) node {1};
  \definecolor{strokecolor}{rgb}{0.02,0.02,0.02};
  \draw [strokecolor,->,dashed] (n48) ..controls (107.13bp,253.57bp) and (109.6bp,233.73bp)  .. (n50);
\end{tikzpicture}
        \vspace*{-0.1cm}
        \caption*{$q$}
        \label{fig:spp2}
    \end{subfigure}
    \hspace{-0.6cm}
    \begin{subfigure}[b]{0.24\textwidth}
        \centering
        \input{viz/spp1seq2.tex}
        \vspace*{-0.55cm}
        \caption*{\hspace{2cm} $p \cdot q$}
        \label{fig:spp1seq2}
    \end{subfigure}
    \hspace{0.5cm}
    \begin{subfigure}[b]{0.3\textwidth}
      \centering
      \begin{tikzpicture}[>=latex',line join=bevel,scale=0.41]
\tikzstyle{every node}=[inner sep=0pt, minimum size=10pt, font=\tiny]
\tikzstyle{diam}=[diamond,minimum size=5pt]
\begin{scope}
  \pgfsetstrokecolor{black}
  \definecolor{strokecol}{rgb}{1.0,1.0,1.0};
  \pgfsetstrokecolor{strokecol}
  \definecolor{fillcol}{rgb}{1.0,1.0,1.0};
  \pgfsetfillcolor{fillcol}
  \filldraw (0.0bp,0.0bp) -- (0.0bp,407.05bp) -- (269.2bp,407.05bp) -- (269.2bp,0.0bp) -- cycle;
\end{scope}
\begin{scope}
  \pgfsetstrokecolor{black}
  \definecolor{strokecol}{rgb}{1.0,1.0,1.0};
  \pgfsetstrokecolor{strokecol}
  \definecolor{fillcol}{rgb}{1.0,1.0,1.0};
  \pgfsetfillcolor{fillcol}
  \filldraw (0.0bp,0.0bp) -- (0.0bp,407.05bp) -- (269.2bp,407.05bp) -- (269.2bp,0.0bp) -- cycle;
\end{scope}
\begin{scope}
  \pgfsetstrokecolor{black}
  \definecolor{strokecol}{rgb}{1.0,1.0,1.0};
  \pgfsetstrokecolor{strokecol}
  \definecolor{fillcol}{rgb}{1.0,1.0,1.0};
  \pgfsetfillcolor{fillcol}
  \filldraw (0.0bp,0.0bp) -- (0.0bp,407.05bp) -- (269.2bp,407.05bp) -- (269.2bp,0.0bp) -- cycle;
\end{scope}
\begin{scope}
  \pgfsetstrokecolor{black}
  \definecolor{strokecol}{rgb}{1.0,1.0,1.0};
  \pgfsetstrokecolor{strokecol}
  \definecolor{fillcol}{rgb}{1.0,1.0,1.0};
  \pgfsetfillcolor{fillcol}
  \filldraw (0.0bp,0.0bp) -- (0.0bp,407.05bp) -- (269.2bp,407.05bp) -- (269.2bp,0.0bp) -- cycle;
\end{scope}
\begin{scope}
  \pgfsetstrokecolor{black}
  \definecolor{strokecol}{rgb}{1.0,1.0,1.0};
  \pgfsetstrokecolor{strokecol}
  \definecolor{fillcol}{rgb}{1.0,1.0,1.0};
  \pgfsetfillcolor{fillcol}
  \filldraw (0.0bp,0.0bp) -- (0.0bp,407.05bp) -- (269.2bp,407.05bp) -- (269.2bp,0.0bp) -- cycle;
\end{scope}
\begin{scope}
  \pgfsetstrokecolor{black}
  \definecolor{strokecol}{rgb}{1.0,1.0,1.0};
  \pgfsetstrokecolor{strokecol}
  \definecolor{fillcol}{rgb}{1.0,1.0,1.0};
  \pgfsetfillcolor{fillcol}
  \filldraw (0.0bp,0.0bp) -- (0.0bp,407.05bp) -- (269.2bp,407.05bp) -- (269.2bp,0.0bp) -- cycle;
\end{scope}
  \node (n269) at (10.8bp,130.45bp) [draw,rectangle] {$\bot$};
  \node (n265) at (170.8bp,77.25bp) [draw,fill=peachpuff,diam] {};
  \definecolor{fillcolor}{rgb}{0.76,0.88,0.76};
  \node (n268) at (43.8bp,196.9bp) [draw,fill=fillcolor,diam] {};
  \node (n274) at (145.8bp,329.8bp) [draw,fill=skyblue,diam] {};
  \definecolor{fillcolor}{rgb}{0.76,0.88,0.76};
  \node (n271) at (229.8bp,263.35bp) [draw,fill=fillcolor,circle] {b};
  \node (n270) at (221.8bp,329.8bp) [draw,fill=skyblue,diam] {};
  \definecolor{fillcolor}{rgb}{0.76,0.88,0.76};
  \node (n273) at (263.8bp,196.9bp) [draw,fill=fillcolor,diam] {};
  \definecolor{fillcolor}{rgb}{0.76,0.88,0.76};
  \node (n272) at (229.8bp,196.9bp) [draw,fill=fillcolor,diam] {};
  \definecolor{fillcolor}{rgb}{0.76,0.88,0.76};
  \node (n262) at (103.8bp,263.35bp) [draw,fill=fillcolor,circle] {b};
  \definecolor{fillcolor}{rgb}{0.76,0.88,0.76};
  \node (n263) at (102.8bp,196.9bp) [draw,fill=fillcolor,diam] {};
  \definecolor{fillcolor}{rgb}{0.76,0.88,0.76};
  \node (n277) at (196.8bp,196.9bp) [draw,fill=fillcolor,diam] {};
  \definecolor{fillcolor}{rgb}{0.76,0.88,0.76};
  \node (n276) at (165.8bp,196.9bp) [draw,fill=fillcolor,diam] {};
  \node (n266) at (170.8bp,10.8bp) [draw,rectangle] {$\top$};
  \definecolor{fillcolor}{rgb}{0.76,0.88,0.76};
  \node (n275) at (165.8bp,263.35bp) [draw,fill=fillcolor,circle] {b};
  \node (n260) at (145.8bp,396.25bp) [draw,fill=skyblue,circle] {a};
  \definecolor{fillcolor}{rgb}{0.76,0.88,0.76};
  \node (n267) at (63.8bp,263.35bp) [draw,fill=fillcolor,circle] {b};
  \node (n261) at (103.8bp,329.8bp) [draw,fill=skyblue,diam] {};
  \node (n264) at (170.8bp,130.45bp) [draw,fill=peachpuff,circle] {c};
  \definecolor{strokecolor}{rgb}{0.02,0.02,0.02};
  \draw [strokecolor,->,dashed] (n264) ..controls (170.8bp,110.96bp) and (170.8bp,98.45bp)  .. (n265);
  \definecolor{strokecolor}{rgb}{0.02,0.02,0.02};
  \draw [strokecolor,->] (n270) ..controls (223.41bp,315.83bp) and (225.95bp,295.36bp)  .. (n271);
  \definecolor{strokecol}{rgb}{0.0,0.0,0.0};
  \pgfsetstrokecolor{strokecol}
  \draw (231.43bp,299.28bp) node {1};
  \definecolor{strokecolor}{rgb}{0.02,0.02,0.02};
  \draw [strokecolor,->,dashed] (n260) ..controls (145.8bp,373.46bp) and (145.8bp,354.16bp)  .. (n274);
  \definecolor{strokecolor}{rgb}{0.02,0.02,0.02};
  \draw [strokecolor,->] (n260) ..controls (132.05bp,374.15bp) and (117.3bp,351.51bp)  .. (n261);
  \draw (133.43bp,360.32bp) node {5};
  \definecolor{strokecolor}{rgb}{0.02,0.02,0.02};
  \draw [strokecolor,->] (n260) ..controls (168.38bp,376.1bp) and (199.98bp,349.3bp)  .. (n270);
  \draw (201.43bp,360.32bp) node {1};
  \definecolor{strokecolor}{rgb}{0.02,0.02,0.02};
  \draw [strokecolor,->,dashed] (n274) ..controls (149.67bp,316.33bp) and (156.29bp,294.98bp)  .. (n275);
  \definecolor{strokecolor}{rgb}{0.02,0.02,0.02};
  \draw [strokecolor,->] (n273) ..controls (256.12bp,186.92bp) and (241.04bp,170.08bp)  .. (225.8bp,159.25bp) .. controls (213.44bp,150.47bp) and (197.9bp,142.89bp)  .. (n264);
  \draw (247.43bp,166.38bp) node {1};
  \definecolor{strokecolor}{rgb}{0.02,0.02,0.02};
  \draw [strokecolor,->,dashed] (n273) ..controls (264.86bp,183.86bp) and (264.86bp,168.38bp)  .. (256.8bp,159.25bp) .. controls (239.42bp,139.58bp) and (208.04bp,133.74bp)  .. (n264);
  \definecolor{strokecolor}{rgb}{0.02,0.02,0.02};
  \draw [strokecolor,->,dashed] (n262) ..controls (103.46bp,240.56bp) and (103.16bp,221.26bp)  .. (n263);
  \definecolor{strokecolor}{rgb}{0.02,0.02,0.02};
  \draw [strokecolor,->] (n263) ..controls (109.89bp,187.43bp) and (123.69bp,171.33bp)  .. (136.55bp,159.25bp) .. controls (143.09bp,153.11bp) and (150.82bp,146.76bp)  .. (n264);
  \draw (142.43bp,166.38bp) node {1};
  \definecolor{strokecolor}{rgb}{0.02,0.02,0.02};
  \draw [strokecolor,->,dashed] (n263) ..controls (105.63bp,184.78bp) and (110.59bp,169.08bp)  .. (119.8bp,159.25bp) .. controls (129.37bp,149.04bp) and (143.55bp,141.73bp)  .. (n264);
  \definecolor{strokecolor}{rgb}{0.02,0.02,0.02};
  \draw [strokecolor,->,dashed] (n271) ..controls (237.53bp,248.14bp) and (241.42bp,240.95bp)  .. (244.8bp,234.55bp) .. controls (249.83bp,225.04bp) and (255.49bp,214.09bp)  .. (n273);
  \definecolor{strokecolor}{rgb}{0.02,0.02,0.02};
  \draw [strokecolor,->] (n261) ..controls (98.781bp,321.72bp) and (93.083bp,313.55bp)  .. (88.55bp,306.4bp) .. controls (82.729bp,297.21bp) and (76.522bp,286.67bp)  .. (n267);
  \draw (94.425bp,299.28bp) node {1};
  \definecolor{strokecolor}{rgb}{0.02,0.02,0.02};
  \draw [strokecolor,->] (n274) ..controls (139.97bp,319.05bp) and (129.29bp,302.51bp)  .. (116.8bp,292.15bp) .. controls (113.58bp,289.48bp) and (93.448bp,279.17bp)  .. (n267);
  \draw (136.43bp,299.28bp) node {1};
  \definecolor{strokecolor}{rgb}{0.02,0.02,0.02};
  \draw [strokecolor,->,dashed] (n277) ..controls (193.14bp,185.92bp) and (187.73bp,171.36bp)  .. (182.8bp,159.25bp) .. controls (181.16bp,155.22bp) and (179.33bp,150.9bp)  .. (n264);
  \definecolor{strokecolor}{rgb}{0.02,0.02,0.02};
  \draw [strokecolor,->,dashed] (n275) ..controls (173.74bp,248.25bp) and (177.63bp,241.06bp)  .. (180.8bp,234.55bp) .. controls (185.27bp,225.37bp) and (189.88bp,214.65bp)  .. (n277);
  \definecolor{strokecolor}{rgb}{0.02,0.02,0.02};
  \draw [strokecolor,->] (n271) ..controls (229.8bp,240.56bp) and (229.8bp,221.26bp)  .. (n272);
  \draw (235.43bp,227.43bp) node {2};
  \definecolor{strokecolor}{rgb}{0.02,0.02,0.02};
  \draw [strokecolor,->] (n275) ..controls (165.8bp,240.56bp) and (165.8bp,221.26bp)  .. (n276);
  \draw (171.43bp,227.43bp) node {2};
  \definecolor{strokecolor}{rgb}{0.02,0.02,0.02};
  \draw [strokecolor,->] (n276) ..controls (166.86bp,182.25bp) and (168.43bp,161.99bp)  .. (n264);
  \draw (174.43bp,166.38bp) node {1};
  \definecolor{strokecolor}{rgb}{0.02,0.02,0.02};
  \draw [strokecolor,->] (n276) ..controls (160.38bp,189.23bp) and (154.7bp,181.34bp)  .. (152.55bp,173.5bp) .. controls (150.88bp,167.39bp) and (150.74bp,165.32bp)  .. (152.55bp,159.25bp) .. controls (154.13bp,153.96bp) and (157.08bp,148.74bp)  .. (n264);
  \draw (158.43bp,166.38bp) node {2};
  \definecolor{strokecolor}{rgb}{0.02,0.02,0.02};
  \draw [strokecolor,->,dashed] (n267) ..controls (56.999bp,240.43bp) and (50.623bp,219.89bp)  .. (n268);
  \definecolor{strokecolor}{rgb}{0.02,0.02,0.02};
  \draw [strokecolor,->] (n261) ..controls (103.8bp,314.66bp) and (103.8bp,295.23bp)  .. (n262);
  \draw (109.43bp,299.28bp) node {5};
  \definecolor{strokecolor}{rgb}{0.02,0.02,0.02};
  \draw [strokecolor,->] (n272) ..controls (222.99bp,186.9bp) and (210.96bp,171.32bp)  .. (199.8bp,159.25bp) .. controls (194.58bp,153.6bp) and (188.48bp,147.67bp)  .. (n264);
  \draw (216.43bp,166.38bp) node {2};
  \definecolor{strokecolor}{rgb}{0.02,0.02,0.02};
  \draw [strokecolor,->,dashed] (n268) ..controls (37.203bp,189.67bp) and (29.4bp,181.68bp)  .. (24.8bp,173.5bp) .. controls (20.231bp,165.38bp) and (16.851bp,155.53bp)  .. (n269);
  \definecolor{strokecolor}{rgb}{0.02,0.02,0.02};
  \draw [strokecolor,->] (n265) ..controls (165.78bp,69.514bp) and (160.53bp,61.58bp)  .. (158.55bp,53.85bp) .. controls (156.3bp,45.044bp) and (158.82bp,35.294bp)  .. (n266);
  \draw (164.43bp,46.725bp) node {4};
  \definecolor{strokecolor}{rgb}{0.02,0.02,0.02};
  \draw [strokecolor,->,dashed] (n265) ..controls (170.8bp,62.114bp) and (170.8bp,42.679bp)  .. (n266);
  \definecolor{strokecolor}{rgb}{0.02,0.02,0.02};
  \draw [strokecolor,->] (n268) ..controls (53.79bp,187.45bp) and (75.292bp,169.93bp)  .. (95.55bp,159.25bp) .. controls (114.96bp,149.02bp) and (138.97bp,140.86bp)  .. (n264);
  \draw (101.43bp,166.38bp) node {1};
\end{tikzpicture}
      \vspace*{-0.05cm}
      \caption*{\hspace{1cm} $(p \oplus q)^\star$}
      \label{fig:spp1xor2star}
    \end{subfigure}
    \caption{Examples of \SPPn{}s, where $p \triangleq (a \test 5 + b \test 2) \cdot (b \mut 1 + c \test 5)$, and $q \triangleq (b \test 1 \mathrel{+} c \mut 4 \mathrel{+} a \mut 1 \cdot b \mut 1)$.}
    \label{fig:symt}
\end{figure}

\subsection{Symbolic Transitions}\label{sec:symtrans}

We now introduce a representation for symbolic transitions in NetKAT automata.  Whereas symbolic packets represent the fragment of NetKAT where all atoms are tests, symbolic transitions correspond to the larger dup-free fragment, where all atoms are tests or assignments. This introduces additional challenges for a canonical representation, as well as for the operations. For instance, sequential composition is no longer equivalent to intersection, and the star operator is no longer trivial.
\begin{align*}
    p \in \SPP \Coloneqq\ &\bot \mid \top \mid
      \underbrace{
        \SPP(f, \{ \ldots, v_i \mapsto \{ \ldots, w_{ij} \mapsto q_{ij}, \ldots \}, \ldots \}, \{ \ldots, w_i \mapsto q_i, \ldots \}, q)
      }_{
        \equiv\ \ \sum_i f \test v_i \cdot \sum_j f \mut w_{ij} \cdot q_{ij}\ +\ \
        (\prod_i f \testNE v_i) \cdot (\sum_i f \mut w_i\ +\ (\prod_i f \testNE w_i) \cdot q)
      }
\end{align*}

Like SPs, \SPPn{}s have two base cases, $\top$ and $\bot$. Also like SPs, \SPPn{}s test a field $f$ of the input packet against a series of values $v_0,\dots,v_n$, with a default case $f \testNE v_0 \cdots f \testNE v_n$. However, instead of continuing recursively after the test, \SPPn{}s non-deterministically assign a value $w_{ij}$ to the field $f$ of the output packet, and continue recursively with the corresponding child $q_{ij}$. This way, \SPPn{}s can output more than one packet for a given input packet, and can also output packets with different values for the same field. The default case is further split into two cases, one where the field $f$ is non-deterministically assigned a new value (like in the other cases), and an identity case where the field $f$ keeps the same value as the input packet. However, the packet for the latter case is not produced when the input packet's field $f$ had a value that could also be produced by the explicit assignments in the default case. This ensures that for a given input packet, a given output packet is produced by a unique path through the \SPPn{}.

The following conditions need to be satisfied for a \SPPn{} to be in canonical form:

\begin{description}
    \item[Reduced] If a child $q_{ij}$ is equal to $\bot$, it is removed (with the exception of the default-identity case, which is always kept).  If one of the default-assignment cases for a value $w$ is $\bot$, it is removed, but to keep the behavior equivalent, an additional test for the same value $w$ is added with a copy of the default-assignments (if a test for the value $w$ was already present, nothing is added)\footnote{see the formal definition in \Cref{app:sppsc}}. Further, each of the non-default cases is analyzed, and if we determine that it behaves semantically like the default case for that input value, then it is removed. If only the default case remains, the \SPPn{} is reduced to the default case itself.
    \item[Ordered] A path down the tree always follows the same order of fields, and both the tests and the assignments are ordered by the value $v_i$ or $w_{ij}$.
\end{description}

These conditions ensure the representation of a \SPPn{} is unique, in the sense that two \SPPn{}s are semantically equal if and only if they are syntactically equal.

\paragraph{Representation}
Like symbolic packets, we represent \SPPn{}s as a directed acyclic graph, with duplicate nodes shared. Examples of \SPPn{}s are shown in \Cref{fig:symt}. Vertices are labeled with the packet field that they test. Solid arrows labeled with a number represent the tests, and dashed arrows represent the default case. Each test is followed by a non-deterministic assignment of a new value to the field, indicated by the small diamonds. In the default case, the non-deterministic assignment also has an identity case (keeping the value of the field unchanged), indicated by a dashed arrow emanating from the diamond.

Consider the action of the first \SPPn{} in \Cref{fig:symt} on the concrete packet $a \test 5 \cdot b \test 3 \cdot c \test 5$. The first test $a \test 5$ succeeds, and the field $a$ is assigned the value $5$. The $b$ field is then tested, but the $b$ node only has a default case. The default case does have a non-deterministic assignment, which can set the $b$ field to the new value $1$, or it can keep the old value $3$. In case the new value of $b$ is $1$, the packet is immediately accepted by the $\top$ node, so the packet $a \test 5 \cdot b \test 1 \cdot c \test 5$ is produced. In case the old value of $b$ is kept, the $c$ field is tested, and in our case the value of the $c$ field is $5$. In this case, the value of the field is unchanged, and the packet $a \test 5 \cdot b \test 3 \cdot c \test 5$ is produced. In summary, for input packet $a \test 5 \cdot b \test 3 \cdot c \test 5$, the \SPPn{} produces packets $a \test 5 \cdot b \test 1 \cdot c \test 5$ and $a \test 5 \cdot b \test 3 \cdot c \test 5$.

When we sequentially compose the two \SPPn{}s on the left, we get the third \SPPn{} shown.  In other words, if we take a concrete packet, and first apply the first \SPPn{}, and then apply the second \SPPn{} to all of the resulting packets, then we get the same result as if we apply the \SPPn{} on the right directly to the concrete packet. Sequential composition is a relatively complex operation, but algorithmically it is a key strength of our representation and reduced/ordered invariant. To understand why, consider taking a concrete packet, and applying the first \SPPn{} to it.  Once we have applied the $a$-layer of the \SPPn{} to the packet, we already know what the value of the $a$ field in the output packet will be.  We can therefore immediately continue with the $a$ layer of the second \SPPn{}, without having to consider the other layers of the first \SPPn{}.  This is in contrast to a naive algorithm, which would have to consider the entire first \SPPn{} before being able to apply the second \SPPn{}.  This property makes it possible to compute the sequential composition of two \SPPn{}s efficiently in practice, and is a key contributor to the performance and scalability of our system.

\begin{figure}[t]
  \[
    \begin{array}{ll}
    \textbf{Primitive tests and mutation:}\\
    \quad (f \test v) \triangleq \SPP(f, \{v \sto \{ v \sto \top \} \}, \emptyset, \bot) \qquad \text{\smash{\raisebox{-0.4\normalbaselineskip}{$(f \mut\, v) \triangleq \SPP(f, \emptyset, \{v \sto \top \}, \bot)$}}}\\
    \quad (f \testNE v) \triangleq \SPP(f, \{v \sto \{ v \sto \bot \} \}, \emptyset, \top)\\
    \textbf{Operations, base cases:}\\
    \quad \textit{ Identical to those in \Cref{fig:spops}.}\\
    \textbf{Operations, inductive cases:}\\
    \quad \SPP(f, b_p, m_p, d_p)\pmop \SPP(f, b_q, m_q, d_q) \triangleq  \sppsc(f,b'\!,m_p \mathop{\vec{\pm}} m_q,d_p \pmop  d_q)\\
    \whereindent \text{ where }\  b' = \{v \sto \get{p}{v} \mathop{\vec{\pm}} \get{q}{v} \mid v\in \keys(b_p \cup b_q \cup m_p \cup m_q) \}\\
    \quad \SPP(f, b_p, m_p, d_p)\cdotop \SPP(f, b_q, m_q, d_q) \triangleq  \sppsc(f,b'\!,m_A \mathop{\vec{+}} m_B, d_p \cdotop  d_q)\\
    \whereindent \text{ where }\  b' = \{v \sto \unionmaps{v'\! \sto p' \in \get{p}{v}}{ \{ w'\! \sto p' \nf{\cdot} q' \mid w'\! \sto q' \in \get{q}{v'}\} } \mid v\in \keys(b_p \cup b_q \cup m_p \cup m_q \cup m_A)\}\\
    \whereindent \phantom{\text{ where }}\  m_A = \unionmaps{v'\! \sto p' \in m_p}{ \{ w'\! \sto p' \nf{\cdot} q' \mid w'\! \sto q' \in \get{q}{v'}\} },
      \quad m_B = \{ w'\! \sto d_p \nf{\cdot} q' \mid w'\! \sto q' \in m_q \} \\
    \quad m_1 \mathop{\vec{\pm}} m_2 \triangleq \{v \,\sto\, m_1(v;\bot) \pmop m_2(v;\bot) \mid v\in \keys(m_1 \cup m_2)\},
      \quad  \vec{\Sigma} \triangleq \text{$n$-ary sum w.r.t. $\vec{+}$}\\
    \quad \get{p}{v} \triangleq b_p(v; \text{ if } v \in m_p \lor d_p = \bot \text{ then } m_p \text{ else } m_p \cup \{ v \sto d_p \}) \text{\quad (similarly for $q(v)$)}\\
    \textbf{Expansion:}\\
    \quad p \equiv \SPP(f, \emptyset, \emptyset, p) \text {\qquad if \qquad} p \in \{\top, \bot, \SPP(f'\!, b, m, d)\} \text{ where } f \sqsubset f'\\
    \textbf{Repetition:}\\
    \quad p^\star \triangleq \nf{\Sigma}_i p^i = \top \plusop p \plusop p \!\nf{\cdot}\! p \plusop p \!\nf{\cdot}\! p \!\nf{\cdot} \!p \plusop \ldots \text{\quad\ \ (sums in $\SPP$ converge in a finite number of steps)}
    \end{array}
  \]
      \caption{Definition of the $\SPP$ operations. The inductive case is identical for all operations (indicated by $\pmop$), except $\cdotop$, which is given separately. Expansion inserts a trivial $\SPP$ node to reduce the remaining cases to the inductive case. The notation $b(v;d)$ means the child $b(v)$ if $v \in \keys(b)$, or by default $d$ otherwise.
      }
      \label{fig:sppops}
  \end{figure}

\paragraph{Operations}
\SPPn{}s are closed not just under sequential composition, but under all of the NetKAT operations.  These operations are largely mechanical, but more complex than for SPs, as we need to take assignments into account while respecting the conditions that ensure uniqueness.  In particular, sequential composition is no longer equivalent to intersection, as the assignments in the first \SPPn{} clearly affect the tests in the second \SPPn{}.  Furthermore, the star operator is no longer trivial, as the assignments in the \SPPn{} can affect the tests in the \SPPn{} itself, and a fixed point needs to be taken. We have the following operations on \SPPn{}s:
\begin{align*}
    \nf{+}, \nf{\cdot}, \nf{\cap}, \nf{\oplus}, \nf{-}\ :\ \SPP \times \SPP \to \SPP \qquad
    \nf{\star}\ :\ \SPP \to \SPP \qquad
    f \test v, f \testNE v, f \mut v \ :\ \SPP
\end{align*}


\paragraph{Push and pull}
In addition to these operations for combining \SPPn{}s, we also have the following operations, which ``push'' and ``pull'' a symbolic packet through a \SPPn:
\begin{align*}
    \mathsf{push} :\ \SP \times \SPP \to \SP \qquad\qquad
    \mathsf{pull} :\ \SPP \times \SP \to \SP
\end{align*}

The push operation computes the effect of a \SPPn{} on a symbolic packet, giving a symbolic packet as a result. The new symbolic packet contains all of the packets that are produced by the \SPPn{} when applied to the packets in the input symbolic packet.

The pull operation simulates the effect of a \SPPn{} in reverse. This operation is used when computing the backward transition of a symbolic packet over a symbolic transition, for counter-example generation. The pull operation answers this question: given a set of output packets (represented symbolically), what are the possible input packets (also represented symbolically) that could have produced them?
In other words, a concrete packet $\pk$ is an element of $\mathsf{pull}(p,q)$ if and only if running the \SPPn{} $p$ on $\pk$ produces an output packet in $q$. In particular, it is okay if running the \SPPn{} on $\pk$ produces multiple output packets, as long as at least one of them is in $q$.

\paragraph{Specifications}
Each of these operations is uniquely specified by two correctness conditions:
\begin{enumerate}
    \item They semantically match their counterpart (e.g. $\Sem{p \mathop{\nf{+}} q} \equiv \Sem{p + q}$ for all $p,q \in \SPP$)
    \item They maintain the reduced \& ordered invariant (e.g., $p \mathop{\nf{+}} q$ is reduced \& ordered if $p,q$ are)
\end{enumerate}

For push and pull, the correctness condition is as follows:
\[
\beta \in \Sem{\mathsf{push}\ p\ s} \iff \exists \pk\in\pSem{p}.\ \pkp\in\Sem{s}(\pk)
\quad\text{and}\quad
\beta \in \Sem{\mathsf{pull}\ s\ p} \iff \exists \alpha \in \pSem{p}.\ \alpha \in \Sem{s}(\pkp)
\]

For further details, see \appendixref{app:spp}.

\section{Symbolic NetKAT Automata via Brzozowski Derivatives}\label{sec:symaut}

With the symbolic packet and symbolic transition representations in place, we can now define symbolic NetKAT automata.  The construction of NetKAT automata shares some similarities with the construction of automata for regular expressions.  In prior work~\cite{Foster2015,Smolka2015}, NetKAT automata were constructed via Antimirov derivatives \cite{Antimirov1996}, which can be extended from regular expressions to NetKAT.  The Antimirov derivative constructs a non-deterministic automaton, and as such, is well suited for handling \NetKAT's union operator by inserting transitions for the two sub-terms.  However, the Antimirov derivative is not well suited for our extended set of logical operators, as we cannot simply insert (non-deterministic) transitions for the operands of intersection, difference, and symmetric difference.  Instead, we extend the Brzozowski derivative \cite{Brzozowski1962} to NetKAT, which constructs a deterministic automaton directly, and is better suited for the logical operators.

In the rest of this section, we will first describe what symbolic NetKAT automata are, and then describe how to construct them via the Brzozowski derivative.

\subsection{Symbolic NetKAT Automata}

An automaton for a regular expression consists of a set of states, and transitions labeled with symbols from the alphabet.  Additionally, one of the states is designated as the initial state, and a subset of the states are designated as accepting states.  This way, an automaton for a regular expression models the set of strings that are accepted by the regular expression.

An automaton for a NetKAT program is similar, but instead of modeling a set of strings, it models the traces that are produced for a given input packet.  When a packet travels through the automaton, every state that it traverses acts as a $\text{dup}$ operation, appending a copy of the packet to the packet's trace.  Therefore, the transition between two states is labeled with a dup-free NetKAT program, represented symbolically as a \SPPn{}. That is, each edge in the automaton is not labeled with a single letter, as it would be in a standard DFA or NFA, but with a potentially large \SPPn{}. This is necessary because an edge in the automaton intuitively represents ``what happens between two dups''. This is precisely a dup-free NetKAT program, which can itself have rich behavior, represented canonically as an \SPPn{}. This is needed to support fine-grained control over equivalence and inclusion as discussed in \Cref{sec:netkat}---e.g., via conditionally executed dups or assignments between dups.

Secondly, instead of having a boolean at each state to determine acceptance, NetKAT automata have an additional \SPPn{} at each state that determines the set of packets that are produced as output of the automaton when a packet reaches that state.  Therefore, a symbolic NetKAT automaton is a tuple $\mathcal{A} = \langle Q, q_o, \delta, \epsilon \rangle$ where $Q$ is a finite set of states, 
 $q_0 \in Q$ is the initial state,  
\begin{itemize}
 \item $\delta : Q \times Q \to \SPP$ is the transition function, and
 \item $\epsilon : Q \to \SPP$ is the output function.
\end{itemize}

\paragraph{Deterministic NetKAT automata}
The notion of a deterministic NetKAT automaton is more subtle than for classic finite automata.  For classic automata, a deterministic automaton is one where for every state $q$ and symbol $a$, there is at most one transition from $q$ labeled with $a$.  For a NetKAT automaton, we need to take into account the fact that the transitions are labeled with \SPPn{}s, which may produce multiple packets for a given input packet.  Therefore, we define a deterministic NetKAT automaton as one where for every state $q$ and packet $\pk$, the set of packets produced by $\mathsf{push}(\pk,\delta(q,q'))$ are disjoint for all $q' \in Q$.  This is equivalent to saying that $\delta(q,q_1')
\mathop{\nf{\cap}} \delta(q,q_2') = \bot$ for all $q_1',q_2' \in Q$
($q_1' \neq q_2'$).  Note that an input packet $\pk$ may produce multiple different packets at each successor state, but these packet sets at different successor states are disjoint.

\subsection{Constructing Automata via Brzozowski Derivatives}

We now describe how to construct a symbolic NetKAT automaton for a NetKAT program via the Brzozowski derivative.  We take the set of states to be the set of NetKAT expressions, and the initial state to be the NetKAT program itself.  Because we want to construct a deterministic automaton, we need a way to represent a non-intersecting outgoing symbolic transition structure (STS). We represent such a transition structure as a NetKAT expression in the following form:
\begin{align*}
    r \in \mathsf{STS} \Coloneqq p_1 \cdot \mathrel{\text{dup}} \cdot\ q_1 + \ldots + p_n \cdot \mathrel{\text{dup}} \cdot\ q_n
\end{align*}
where $p_i \in \SPP$, and $q_i$ are NetKAT expressions, and the $p_i$ are disjoint ($p_i \mathop{\nf{\cap}} p_j = \bot$ for $i \neq j$).

\noindent
\paragraph{Operations}
Dup is an STS, by taking $r = \top \cdot\text{dup} \cdot \top$.
We extend the logical operators to STSs:
\begin{align*}
    \tilde{+}, \tilde{\cap}, \tilde{\oplus}, \tilde{-}\ :\ \mathsf{STS} \times \mathsf{STS} \to \mathsf{STS}
\end{align*}
These operations need to be defined such that the resulting STS is deterministic. For $r_1 \cap r_2$:
\begin{align*}
    (p_1 \cdot \text{dup} \cdot q_1 + \ldots + p_n \cdot \text{dup}\ \cdot q_n) \mathrel{\tilde{\cap}} (p'_1 \cdot \text{dup} \cdot q'_1 + \ldots + p'_n \cdot \text{dup} \cdot q'_n)
\end{align*}
To bring this in STS form, we distribute the intersection over the union, and combine the terms:
\begin{align*}
    (p_1 \mathop{\nf{\cap}} p'_1) \cdot \text{dup} \cdot (q_1 \cap q'_1) +
    (p_1 \mathop{\nf{\cap}} p'_2) \cdot \text{dup} \cdot (q_1 \cap q'_2) +
    \ldots +
    (p_n \mathop{\nf{\cap}} p'_n) \cdot \text{dup} \cdot (q_n \cap q'_n)
\end{align*}
This is not yet in STS form, as the $q_i \cap q'_i$ terms may not all be different, so we need to collect the terms with the same $q_i \cap q'_i$, and union their \SPPn{}s. The other operators need a similar (albeit slightly more complicated) treatment in order to maintain determinism.

Second, we extend sequential composition to STSs, in two forms (denoted with the same symbol). We can compose an STS with a \SPPn{} on the left, or with a NetKAT expression on the right:
\begin{align*}
    \tilde{\cdot} : \SPP \times \mathsf{STS} \to \mathsf{STS} \qquad
    \tilde{\cdot} : \mathsf{STS} \times \mathsf{Exp} \to \mathsf{STS}
\end{align*}
Like the other operators, these need to be defined such that the resulting STS is deterministic.

All of the STS operations are defined in terms of \SPPn{} operations. Therefore, an efficient \SPPn{} implementation is crucial both for operations on symbolic packets and for operations on STSs.

\paragraph{Specifications}
Each of these operations is uniquely specified by two correctness conditions:
\begin{enumerate}
    \item They semantically match their counterpart (e.g. $\Sem{p \mathop{\nf{+}} q} \equiv \Sem{p + q}$ for all $p,q \in \mathsf{STS}$)
    \item They maintain the invariant that the transitions are pairwise disjoint.
\end{enumerate}

\paragraph{Brzozowski derivative}
With these operations in place, it is straightforward to define the Brzozowski derivative for NetKAT expressions:

\noindent
\begin{minipage}{0.26\textwidth}
    \begin{align*}
        \epsilon(p + q) &\triangleq \epsilon(p) \mathop{\nf{+}} \epsilon(q) \\
        \epsilon(p \cap q) &\triangleq \epsilon(p) \mathop{\nf{\cap}} \epsilon(q) \\
        \epsilon(p \oplus q) &\triangleq \epsilon(p) \mathop{\nf{\oplus}} \epsilon(q) \\
        \epsilon(p - q) &\triangleq \epsilon(p) \mathop{\nf{-}} \epsilon(q) \\
        \epsilon(p \cdot q) &\triangleq \epsilon(p) \mathop{\nf{\cdot}} \epsilon(q) \\
        \epsilon(p^\star) &\triangleq \epsilon(p)^\star
    \end{align*}
\end{minipage}\begin{minipage}{0.23\textwidth}
  \begin{align*}
        \epsilon(\text{dup}) &\triangleq \bot\\
        \epsilon(f \test v) &\triangleq f \test v\\
        \epsilon(f \testNE v) &\triangleq f \testNE v\\
        \epsilon(f \mut v) &\triangleq f \mut v\\
        \epsilon(\top) &\triangleq \top\\
        \epsilon(\bot) &\triangleq \bot
    \end{align*}
\end{minipage}\begin{minipage}{0.26\textwidth}
\begin{align*}
    \delta(p + q) &\triangleq \delta(p) \mathop{\tilde{+}} \delta(q) \\
    \delta(p \cap q) &\triangleq \delta(p) \mathop{\tilde{\cap}} \delta(q) \\
    \delta(p \oplus q) &\triangleq \delta(p) \mathop{\tilde{\oplus}} \delta(q) \\
    \delta(p - q) &\triangleq \delta(p) \mathop{\tilde{-}} \delta(q) \\
    \delta(p \cdot q) &\triangleq
        \delta(p) \mathop{\tilde{\cdot}} q \mathrel{\tilde{+}}
        \epsilon(p) \mathop{\tilde{\cdot}} \delta(q) \\
    \delta(p^\star) &\triangleq \epsilon(p)^\star \mathop{\tilde{\cdot}} \delta(p) \mathrel{\tilde{\cdot}} p^\star
    \end{align*}
\end{minipage}\begin{minipage}{0.23\textwidth}
  \begin{align*}
    \delta(\text{dup}) &\triangleq \dup\\
    \delta(f \test v) &\triangleq \bot\\
    \delta(f \testNE v) &\triangleq \bot\\
    \delta(f \mut v) &\triangleq \bot\\
    \delta(\top) &\triangleq \bot\\
    \delta(\bot) &\triangleq \bot
\end{align*}
\end{minipage}
\medskip

We construct a deterministic symbolic NetKAT automaton for a NetKAT program $p$ as follows:
\begin{description}
    \item[States] $Q \triangleq \mathsf{Exp}$.
    \item[Initial state] $q_0 \triangleq p$.
    \item[Transitions] take $\delta(q,q')$ to be the \SPPn{} of $q'$ in the Brzozowski derivative of $\delta(q)$.
    \item[Output] $\epsilon : Q \to \SPP$, defined above.
\end{description}
The Brzozowski derivative is guaranteed to transitively reach only finitely many essentially different NetKAT expressions from a given start state. Therefore, in the actual implementation, we do not use the infinite set $\mathsf{Exp}$ for the states, but instead use only the finitely many essentially different NetKAT terms reached from the start state.

The NetKAT automaton for an example program is shown in \Cref{fig:exaut}. The edges are labeled with the \SPPn{}s that represent the transitions. The output \SPPn{} of every state is $\top$, i.e., the automaton accepts all packets that reach a state.
Note that the \SPPn{}s in the figure feature several diamonds without outgoing edges. These diamonds produce no output packet, i.e., the packet is dropped.

\begin{figure}
    \centering
    \begin{subfigure}[b]{0.4\textwidth}
        \centering
        \begin{tikzpicture}[>=latex',line join=bevel,scale=0.37]
\tikzstyle{every node}=[inner sep=0pt, minimum size=10pt, font=\tiny]
\definecolor{fillcolor}{rgb}{1.0,0.6,0.6};
  \node (n0) at (42.0bp,370.05bp) [draw,fill=fillcolor,circle] {$\delta$};
  \definecolor{fillcolor}{rgb}{0.83,0.69,0.9};
  \node (n1) at (42.0bp,298.2bp) [draw,fill=fillcolor,circle] {$\delta$};
  \definecolor{fillcolor}{rgb}{0.83,0.69,0.9};
  \node (n2) at (121.0bp,226.35bp) [draw,fill=fillcolor,circle] {$\delta$};
  \definecolor{fillcolor}{rgb}{0.83,0.69,0.9};
  \node (n4) at (107.0bp,82.65bp) [draw,fill=fillcolor,circle] {$\delta$};
  \definecolor{fillcolor}{rgb}{0.83,0.69,0.9};
  \node (n3) at (130.0bp,154.5bp) [draw,fill=fillcolor,circle] {$\delta$};
  \definecolor{fillcolor}{rgb}{0.83,0.69,0.9};
  \node (n5) at (144.0bp,10.8bp) [draw,fill=fillcolor,circle] {$\delta$};
  \draw [->] (n0) ..controls (42.0bp,347.73bp) and (42.0bp,329.16bp)  .. (n1);
  \definecolor{strokecol}{rgb}{0.0,0.0,0.0};
  \pgfsetstrokecolor{strokecol}
  \draw (48.75bp,334.13bp) node {A};
  \draw [->] (n0) ..controls (60.558bp,351.39bp) and (80.826bp,330.4bp)  .. (93.0bp,309.0bp) .. controls (105.04bp,287.84bp) and (113.07bp,260.66bp)  .. (n2);
  \draw (110.75bp,298.2bp) node {B};
  \draw [->] (n0) ..controls (22.172bp,350.47bp) and (0.0bp,325.52bp)  .. (0.0bp,299.2bp) .. controls (0.0bp,299.2bp) and (0.0bp,299.2bp)  .. (0.0bp,153.5bp) .. controls (0.0bp,111.01bp) and (59.884bp,93.003bp)  .. (n4);
  \draw (6.75bp,226.35bp) node {C};
  \draw [->] (n1) ..controls (60.881bp,306.47bp) and (70.8bp,304.66bp)  .. (70.8bp,298.2bp) .. controls (70.8bp,293.16bp) and (64.746bp,290.95bp)  .. (n1);
  \draw (77.55bp,298.2bp) node {A};
  \draw [->] (n1) ..controls (63.587bp,278.11bp) and (92.012bp,252.98bp)  .. (n2);
  \draw (95.75bp,262.28bp) node {B};
  \draw [->] (n1) ..controls (49.793bp,259.21bp) and (67.905bp,176.68bp)  .. (93.0bp,111.45bp) .. controls (94.669bp,107.11bp) and (96.796bp,102.55bp)  .. (n4);
  \draw (75.75bp,190.43bp) node {C};
  \draw [->] (n2) ..controls (139.88bp,234.62bp) and (149.8bp,232.81bp)  .. (149.8bp,226.35bp) .. controls (149.8bp,221.31bp) and (143.75bp,219.1bp)  .. (n2);
  \draw (156.55bp,226.35bp) node {B};
  \draw [->] (n2) ..controls (117.62bp,206.98bp) and (116.33bp,194.2bp)  .. (118.5bp,183.3bp) .. controls (119.38bp,178.89bp) and (120.96bp,174.31bp)  .. (n3);
  \draw (125.75bp,190.43bp) node {A};
  \draw [->] (n2) ..controls (110.01bp,204.88bp) and (100.23bp,184.3bp)  .. (96.5bp,165.3bp) .. controls (92.031bp,142.52bp) and (97.461bp,115.77bp)  .. (n4);
  \draw (103.75bp,154.5bp) node {C};
  \draw [->] (n3) ..controls (136.04bp,173.56bp) and (138.78bp,186.54bp)  .. (136.0bp,197.55bp) .. controls (134.78bp,202.38bp) and (132.54bp,207.27bp)  .. (n2);
  \draw (144.75bp,190.43bp) node {B};
  \draw [->] (n3) ..controls (148.88bp,162.77bp) and (158.8bp,160.96bp)  .. (158.8bp,154.5bp) .. controls (158.8bp,149.46bp) and (152.75bp,147.25bp)  .. (n3);
  \draw (165.55bp,154.5bp) node {A};
  \draw [->] (n3) ..controls (123.03bp,132.32bp) and (116.69bp,113.08bp)  .. (n4);
  \draw (127.75bp,118.57bp) node {C};
  \draw [->] (n4) ..controls (133.63bp,96.192bp) and (167.55bp,115.47bp)  .. (181.0bp,143.7bp) .. controls (194.96bp,173.0bp) and (157.63bp,202.48bp)  .. (n2);
  \draw (190.75bp,154.5bp) node {B};
  \draw [->] (n4) ..controls (125.88bp,90.921bp) and (135.8bp,89.105bp)  .. (135.8bp,82.65bp) .. controls (135.8bp,77.607bp) and (129.75bp,75.396bp)  .. (n4);
  \draw (142.55bp,82.65bp) node {C};
  \draw [->] (n4) ..controls (101.32bp,63.126bp) and (99.244bp,49.726bp)  .. (104.5bp,39.6bp) .. controls (109.52bp,29.924bp) and (119.63bp,22.906bp)  .. (n5);
  \draw (111.75bp,46.725bp) node {A};
  \draw [->] (n5) ..controls (169.56bp,42.031bp) and (218.95bp,108.22bp)  .. (202.0bp,165.3bp) .. controls (197.13bp,181.7bp) and (193.93bp,186.35bp)  .. (181.0bp,197.55bp) .. controls (168.17bp,208.66bp) and (150.16bp,216.21bp)  .. (n2);
  \draw (210.75bp,118.57bp) node {B};
  \draw [->] (n5) ..controls (134.88bp,29.368bp) and (128.03bp,42.5bp)  .. (122.0bp,53.85bp) .. controls (119.66bp,58.245bp) and (117.1bp,63.017bp)  .. (n4);
  \draw (136.75bp,46.725bp) node {C};
  \draw [->] (n5) ..controls (162.88bp,19.071bp) and (172.8bp,17.255bp)  .. (172.8bp,10.8bp) .. controls (172.8bp,5.7568bp) and (166.75bp,3.5455bp)  .. (n5);
  \draw (179.55bp,10.8bp) node {A};
\end{tikzpicture}
        \caption*{NetKAT automaton}
    \end{subfigure}
    \begin{subfigure}[b]{0.16\textwidth}
        \centering
        \begin{tikzpicture}[>=latex',line join=bevel, scale=0.37]
\tikzstyle{every node}=[inner sep=0pt, minimum size=10pt, font=\tiny]
\tikzstyle{diam}=[diamond,minimum size=5pt]
\begin{scope}
  \pgfsetstrokecolor{black}
  \definecolor{strokecol}{rgb}{1.0,1.0,1.0};
  \pgfsetstrokecolor{strokecol}
  \definecolor{fillcol}{rgb}{1.0,1.0,1.0};
  \pgfsetfillcolor{fillcol}
  \filldraw (0.0bp,0.0bp) -- (0.0bp,420.3bp) -- (115.2bp,420.3bp) -- (115.2bp,0.0bp) -- cycle;
\end{scope}
\begin{scope}
  \pgfsetstrokecolor{black}
  \definecolor{strokecol}{rgb}{1.0,1.0,1.0};
  \pgfsetstrokecolor{strokecol}
  \definecolor{fillcol}{rgb}{1.0,1.0,1.0};
  \pgfsetfillcolor{fillcol}
  \filldraw (0.0bp,0.0bp) -- (0.0bp,420.3bp) -- (115.2bp,420.3bp) -- (115.2bp,0.0bp) -- cycle;
\end{scope}
\begin{scope}
  \pgfsetstrokecolor{black}
  \definecolor{strokecol}{rgb}{1.0,1.0,1.0};
  \pgfsetstrokecolor{strokecol}
  \definecolor{fillcol}{rgb}{1.0,1.0,1.0};
  \pgfsetfillcolor{fillcol}
  \filldraw (0.0bp,0.0bp) -- (0.0bp,420.3bp) -- (115.2bp,420.3bp) -- (115.2bp,0.0bp) -- cycle;
\end{scope}
\begin{scope}
  \pgfsetstrokecolor{black}
  \definecolor{strokecol}{rgb}{1.0,1.0,1.0};
  \pgfsetstrokecolor{strokecol}
  \definecolor{fillcol}{rgb}{1.0,1.0,1.0};
  \pgfsetfillcolor{fillcol}
  \filldraw (0.0bp,0.0bp) -- (0.0bp,420.3bp) -- (115.2bp,420.3bp) -- (115.2bp,0.0bp) -- cycle;
\end{scope}
\begin{scope}
  \pgfsetstrokecolor{black}
  \definecolor{strokecol}{rgb}{1.0,1.0,1.0};
  \pgfsetstrokecolor{strokecol}
  \definecolor{fillcol}{rgb}{1.0,1.0,1.0};
  \pgfsetfillcolor{fillcol}
  \filldraw (0.0bp,0.0bp) -- (0.0bp,420.3bp) -- (115.2bp,420.3bp) -- (115.2bp,0.0bp) -- cycle;
\end{scope}
\begin{scope}
  \pgfsetstrokecolor{black}
  \definecolor{strokecol}{rgb}{1.0,1.0,1.0};
  \pgfsetstrokecolor{strokecol}
  \definecolor{fillcol}{rgb}{1.0,1.0,1.0};
  \pgfsetfillcolor{fillcol}
  \filldraw (0.0bp,0.0bp) -- (0.0bp,420.3bp) -- (115.2bp,420.3bp) -- (115.2bp,0.0bp) -- cycle;
\end{scope}
  \node (n37) at (10.8bp,77.25bp) [draw,fill=peachpuff,diam] {};
  \node (n39) at (39.8bp,77.25bp) [draw,fill=peachpuff,diam] {};
  \definecolor{fillcolor}{rgb}{0.76,0.88,0.76};
  \node (n35) at (41.8bp,210.15bp) [draw,fill=fillcolor,diam] {};
  \definecolor{fillcolor}{rgb}{0.76,0.88,0.76};
  \node (n41) at (72.8bp,210.15bp) [draw,fill=fillcolor,diam] {};
  \node (n32) at (92.8bp,409.5bp) [draw,fill=skyblue,circle] {a};
  \node (n33) at (78.8bp,343.05bp) [draw,fill=skyblue,diam] {};
  \node (n42) at (109.8bp,343.05bp) [draw,fill=skyblue,diam] {};
  \definecolor{fillcolor}{rgb}{0.76,0.88,0.76};
  \node (n34) at (71.8bp,276.6bp) [draw,fill=fillcolor,circle] {b};
  \node (n36) at (33.8bp,143.7bp) [draw,fill=peachpuff,circle] {c};
  \node (n38) at (10.8bp,10.8bp) [draw,rectangle] {$\top$};
  \node (n40) at (72.8bp,10.8bp) [draw,rectangle] {$\bot$};
  \definecolor{strokecolor}{rgb}{0.02,0.02,0.02};
  \draw [strokecolor,->] (n32) ..controls (85.792bp,393.92bp) and (83.044bp,387.09bp)  .. (81.55bp,380.7bp) .. controls (79.589bp,372.31bp) and (78.931bp,362.58bp)  .. (n33);
  \definecolor{strokecol}{rgb}{0.0,0.0,0.0};
  \pgfsetstrokecolor{strokecol}
  \draw (87.425bp,373.57bp) node {1};
  \definecolor{strokecolor}{rgb}{0.02,0.02,0.02};
  \draw [strokecolor,->,dashed] (n32) ..controls (98.581bp,386.58bp) and (104.0bp,366.04bp)  .. (n42);
  \definecolor{strokecolor}{rgb}{0.02,0.02,0.02};
  \draw [strokecolor,->] (n33) ..controls (77.391bp,329.08bp) and (75.168bp,308.61bp)  .. (n34);
  \draw (82.425bp,312.53bp) node {1};
  \definecolor{strokecolor}{rgb}{0.02,0.02,0.02};
  \draw [strokecolor,->] (n34) ..controls (64.291bp,261.47bp) and (60.598bp,254.28bp)  .. (57.55bp,247.8bp) .. controls (53.216bp,238.59bp) and (48.666bp,227.87bp)  .. (n35);
  \draw (63.425bp,240.68bp) node {2};
  \definecolor{strokecolor}{rgb}{0.02,0.02,0.02};
  \draw [strokecolor,->,dashed] (n34) ..controls (72.138bp,253.81bp) and (72.438bp,234.51bp)  .. (n41);
  \definecolor{strokecolor}{rgb}{0.02,0.02,0.02};
  \draw [strokecolor,->] (n35) ..controls (40.19bp,196.18bp) and (37.649bp,175.71bp)  .. (n36);
  \draw (45.425bp,179.62bp) node {2};
  \definecolor{strokecolor}{rgb}{0.02,0.02,0.02};
  \draw [strokecolor,->] (n36) ..controls (26.929bp,128.19bp) and (23.866bp,121.24bp)  .. (21.55bp,114.9bp) .. controls (18.276bp,105.93bp) and (15.285bp,95.465bp)  .. (n37);
  \draw (27.425bp,107.78bp) node {3};
  \definecolor{strokecolor}{rgb}{0.02,0.02,0.02};
  \draw [strokecolor,->,dashed] (n36) ..controls (35.84bp,120.79bp) and (37.661bp,101.22bp)  .. (n39);
  \definecolor{strokecolor}{rgb}{0.02,0.02,0.02};
  \draw [strokecolor,->] (n37) ..controls (10.8bp,62.114bp) and (10.8bp,42.679bp)  .. (n38);
  \draw (16.425bp,46.725bp) node {3};
  \definecolor{strokecolor}{rgb}{0.02,0.02,0.02};
  \draw [strokecolor,->,dashed] (n39) ..controls (45.529bp,65.061bp) and (56.845bp,42.961bp)  .. (n40);
  \definecolor{strokecolor}{rgb}{0.02,0.02,0.02};
  \draw [strokecolor,->,dashed] (n41) ..controls (72.8bp,179.87bp) and (72.8bp,70.561bp)  .. (n40);
  \definecolor{strokecolor}{rgb}{0.02,0.02,0.02};
  \draw [strokecolor,->,dashed] (n42) ..controls (110.19bp,326.73bp) and (110.8bp,299.98bp)  .. (110.8bp,277.6bp) .. controls (110.8bp,277.6bp) and (110.8bp,277.6bp)  .. (110.8bp,76.25bp) .. controls (110.8bp,56.985bp) and (98.046bp,38.41bp)  .. (n40);
\end{tikzpicture}
        \caption*{A}
    \end{subfigure}
    \begin{subfigure}[b]{0.25\textwidth}
        \centering
        \begin{tikzpicture}[>=latex',line join=bevel, scale=0.37]
\tikzstyle{every node}=[inner sep=0pt, minimum size=10pt, font=\tiny]
\tikzstyle{diam}=[diamond,minimum size=5pt]
\begin{scope}
  \pgfsetstrokecolor{black}
  \definecolor{strokecol}{rgb}{1.0,1.0,1.0};
  \pgfsetstrokecolor{strokecol}
  \definecolor{fillcol}{rgb}{1.0,1.0,1.0};
  \pgfsetfillcolor{fillcol}
  \filldraw (0.0bp,0.0bp) -- (0.0bp,420.3bp) -- (202.4bp,420.3bp) -- (202.4bp,0.0bp) -- cycle;
\end{scope}
\begin{scope}
  \pgfsetstrokecolor{black}
  \definecolor{strokecol}{rgb}{1.0,1.0,1.0};
  \pgfsetstrokecolor{strokecol}
  \definecolor{fillcol}{rgb}{1.0,1.0,1.0};
  \pgfsetfillcolor{fillcol}
  \filldraw (0.0bp,0.0bp) -- (0.0bp,420.3bp) -- (202.4bp,420.3bp) -- (202.4bp,0.0bp) -- cycle;
\end{scope}
\begin{scope}
  \pgfsetstrokecolor{black}
  \definecolor{strokecol}{rgb}{1.0,1.0,1.0};
  \pgfsetstrokecolor{strokecol}
  \definecolor{fillcol}{rgb}{1.0,1.0,1.0};
  \pgfsetfillcolor{fillcol}
  \filldraw (0.0bp,0.0bp) -- (0.0bp,420.3bp) -- (202.4bp,420.3bp) -- (202.4bp,0.0bp) -- cycle;
\end{scope}
\begin{scope}
  \pgfsetstrokecolor{black}
  \definecolor{strokecol}{rgb}{1.0,1.0,1.0};
  \pgfsetstrokecolor{strokecol}
  \definecolor{fillcol}{rgb}{1.0,1.0,1.0};
  \pgfsetfillcolor{fillcol}
  \filldraw (0.0bp,0.0bp) -- (0.0bp,420.3bp) -- (202.4bp,420.3bp) -- (202.4bp,0.0bp) -- cycle;
\end{scope}
\begin{scope}
  \pgfsetstrokecolor{black}
  \definecolor{strokecol}{rgb}{1.0,1.0,1.0};
  \pgfsetstrokecolor{strokecol}
  \definecolor{fillcol}{rgb}{1.0,1.0,1.0};
  \pgfsetfillcolor{fillcol}
  \filldraw (0.0bp,0.0bp) -- (0.0bp,420.3bp) -- (202.4bp,420.3bp) -- (202.4bp,0.0bp) -- cycle;
\end{scope}
\begin{scope}
  \pgfsetstrokecolor{black}
  \definecolor{strokecol}{rgb}{1.0,1.0,1.0};
  \pgfsetstrokecolor{strokecol}
  \definecolor{fillcol}{rgb}{1.0,1.0,1.0};
  \pgfsetfillcolor{fillcol}
  \filldraw (0.0bp,0.0bp) -- (0.0bp,420.3bp) -- (202.4bp,420.3bp) -- (202.4bp,0.0bp) -- cycle;
\end{scope}
  \node (n23) at (34.4bp,77.25bp) [draw,fill=peachpuff,diam] {};
  \node (n28) at (79.4bp,77.25bp) [draw,fill=peachpuff,diam] {};
  \definecolor{fillcolor}{rgb}{0.76,0.88,0.76};
  \node (n20) at (61.4bp,210.15bp) [draw,fill=fillcolor,diam] {};
  \definecolor{fillcolor}{rgb}{0.76,0.88,0.76};
  \node (n26) at (97.4bp,210.15bp) [draw,fill=fillcolor,diam] {};
  \definecolor{fillcolor}{rgb}{0.76,0.88,0.76};
  \node (n31) at (146.4bp,210.15bp) [draw,fill=fillcolor,diam] {};
  \node (n17) at (129.4bp,409.5bp) [draw,fill=skyblue,circle] {a};
  \node (n18) at (115.4bp,343.05bp) [draw,fill=skyblue,diam] {};
  \node (n29) at (146.4bp,343.05bp) [draw,fill=skyblue,diam] {};
  \definecolor{fillcolor}{rgb}{0.76,0.88,0.76};
  \node (n19) at (97.4bp,276.6bp) [draw,fill=fillcolor,circle] {b};
  \definecolor{fillcolor}{rgb}{0.76,0.88,0.76};
  \node (n30) at (146.4bp,276.6bp) [draw,fill=fillcolor,circle] {b};
  \node (n21) at (34.4bp,143.7bp) [draw,fill=peachpuff,circle] {c};
  \node (n27) at (79.4bp,143.7bp) [draw,fill=peachpuff,circle] {c};
  \node (n22) at (5.4bp,77.25bp) [draw,fill=peachpuff,diam] {};
  \node (n24) at (38.4bp,10.8bp) [draw,rectangle] {$\top$};
  \node (n25) at (118.4bp,10.8bp) [draw,rectangle] {$\bot$};
  \definecolor{strokecolor}{rgb}{0.02,0.02,0.02};
  \draw [strokecolor,->] (n17) ..controls (122.39bp,393.92bp) and (119.64bp,387.09bp)  .. (118.15bp,380.7bp) .. controls (116.19bp,372.31bp) and (115.53bp,362.58bp)  .. (n18);
  \definecolor{strokecol}{rgb}{0.0,0.0,0.0};
  \pgfsetstrokecolor{strokecol}
  \draw (123.03bp,373.57bp) node {1};
  \definecolor{strokecolor}{rgb}{0.02,0.02,0.02};
  \draw [strokecolor,->,dashed] (n17) ..controls (135.18bp,386.58bp) and (140.6bp,366.04bp)  .. (n29);
  \definecolor{strokecolor}{rgb}{0.02,0.02,0.02};
  \draw [strokecolor,->] (n18) ..controls (111.92bp,329.58bp) and (105.96bp,308.23bp)  .. (n19);
  \draw (114.03bp,312.53bp) node {1};
  \definecolor{strokecolor}{rgb}{0.02,0.02,0.02};
  \draw [strokecolor,->] (n19) ..controls (85.483bp,254.26bp) and (73.098bp,232.09bp)  .. (n20);
  \draw (88.025bp,240.68bp) node {2};
  \definecolor{strokecolor}{rgb}{0.02,0.02,0.02};
  \draw [strokecolor,->,dashed] (n19) ..controls (97.4bp,253.81bp) and (97.4bp,234.51bp)  .. (n26);
  \definecolor{strokecolor}{rgb}{0.02,0.02,0.02};
  \draw [strokecolor,->] (n20) ..controls (56.39bp,197.19bp) and (47.061bp,174.92bp)  .. (n21);
  \draw (58.025bp,179.62bp) node {2};
  \definecolor{strokecolor}{rgb}{0.02,0.02,0.02};
  \draw [strokecolor,->] (n21) ..controls (26.141bp,128.69bp) and (22.181bp,121.5bp)  .. (19.15bp,114.9bp) .. controls (14.982bp,105.82bp) and (11.067bp,95.084bp)  .. (n22);
  \draw (25.025bp,107.78bp) node {3};
  \definecolor{strokecolor}{rgb}{0.02,0.02,0.02};
  \draw [strokecolor,->,dashed] (n21) ..controls (34.4bp,120.91bp) and (34.4bp,101.61bp)  .. (n23);
  \definecolor{strokecolor}{rgb}{0.02,0.02,0.02};
  \draw [strokecolor,->] (n23) ..controls (25.66bp,71.581bp) and (13.764bp,64.112bp)  .. (9.15bp,53.85bp) .. controls (6.5527bp,48.074bp) and (6.6522bp,45.42bp)  .. (9.15bp,39.6bp) .. controls (11.985bp,32.995bp) and (17.245bp,27.219bp)  .. (n24);
  \draw (15.025bp,46.725bp) node {3};
  \definecolor{strokecolor}{rgb}{0.02,0.02,0.02};
  \draw [strokecolor,->,dashed] (n23) ..controls (37.67bp,64.897bp) and (43.316bp,48.963bp)  .. (53.4bp,39.6bp) .. controls (66.972bp,26.999bp) and (87.126bp,19.594bp)  .. (n25);
  \definecolor{strokecolor}{rgb}{0.02,0.02,0.02};
  \draw [strokecolor,->,dashed] (n26) ..controls (100.4bp,180.96bp) and (112.21bp,70.003bp)  .. (n25);
  \definecolor{strokecolor}{rgb}{0.02,0.02,0.02};
  \draw [strokecolor,->] (n26) ..controls (90.741bp,203.12bp) and (83.137bp,195.27bp)  .. (80.15bp,186.75bp) .. controls (77.283bp,178.57bp) and (76.914bp,168.93bp)  .. (n27);
  \draw (86.025bp,179.62bp) node {2};
  \definecolor{strokecolor}{rgb}{0.02,0.02,0.02};
  \draw [strokecolor,->,dashed] (n27) ..controls (79.4bp,120.91bp) and (79.4bp,101.61bp)  .. (n28);
  \definecolor{strokecolor}{rgb}{0.02,0.02,0.02};
  \draw [strokecolor,->] (n28) ..controls (72.237bp,64.99bp) and (57.973bp,42.568bp)  .. (n24);
  \draw (71.025bp,46.725bp) node {3};
  \definecolor{strokecolor}{rgb}{0.02,0.02,0.02};
  \draw [strokecolor,->,dashed] (n28) ..controls (86.214bp,64.99bp) and (99.781bp,42.568bp)  .. (n25);
  \definecolor{strokecolor}{rgb}{0.02,0.02,0.02};
  \draw [strokecolor,->,dashed] (n29) ..controls (161.79bp,333.93bp) and (202.4bp,309.91bp)  .. (202.4bp,277.6bp) .. controls (202.4bp,277.6bp) and (202.4bp,277.6bp)  .. (202.4bp,76.25bp) .. controls (202.4bp,42.841bp) and (160.06bp,24.383bp)  .. (n25);
  \definecolor{strokecolor}{rgb}{0.02,0.02,0.02};
  \draw [strokecolor,->] (n29) ..controls (146.4bp,327.91bp) and (146.4bp,308.48bp)  .. (n30);
  \draw (152.03bp,312.53bp) node {1};
  \definecolor{strokecolor}{rgb}{0.02,0.02,0.02};
  \draw [strokecolor,->,dashed] (n30) ..controls (146.4bp,253.81bp) and (146.4bp,234.51bp)  .. (n31);
  \definecolor{strokecolor}{rgb}{0.02,0.02,0.02};
  \draw [strokecolor,->,dashed] (n31) ..controls (145.84bp,184.12bp) and (143.1bp,103.65bp)  .. (128.4bp,39.6bp) .. controls (127.52bp,35.757bp) and (126.26bp,31.715bp)  .. (n25);
  \definecolor{strokecolor}{rgb}{0.02,0.02,0.02};
  \draw [strokecolor,->] (n31) ..controls (135.36bp,198.53bp) and (107.63bp,171.85bp)  .. (n27);
  \draw (128.03bp,179.62bp) node {2};
\end{tikzpicture}
        \caption*{B}
    \end{subfigure}
    \begin{subfigure}[b]{0.16\textwidth}
        \centering
        \begin{tikzpicture}[>=latex',line join=bevel, scale=0.37]
\tikzstyle{every node}=[inner sep=0pt, minimum size=10pt, font=\tiny]
\tikzstyle{diam}=[diamond,minimum size=5pt]
\begin{scope}
  \pgfsetstrokecolor{black}
  \definecolor{strokecol}{rgb}{1.0,1.0,1.0};
  \pgfsetstrokecolor{strokecol}
  \definecolor{fillcol}{rgb}{1.0,1.0,1.0};
  \pgfsetfillcolor{fillcol}
  \filldraw (0.0bp,0.0bp) -- (0.0bp,407.05bp) -- (77.2bp,407.05bp) -- (77.2bp,0.0bp) -- cycle;
\end{scope}
\begin{scope}
  \pgfsetstrokecolor{black}
  \definecolor{strokecol}{rgb}{1.0,1.0,1.0};
  \pgfsetstrokecolor{strokecol}
  \definecolor{fillcol}{rgb}{1.0,1.0,1.0};
  \pgfsetfillcolor{fillcol}
  \filldraw (0.0bp,0.0bp) -- (0.0bp,407.05bp) -- (77.2bp,407.05bp) -- (77.2bp,0.0bp) -- cycle;
\end{scope}
\begin{scope}
  \pgfsetstrokecolor{black}
  \definecolor{strokecol}{rgb}{1.0,1.0,1.0};
  \pgfsetstrokecolor{strokecol}
  \definecolor{fillcol}{rgb}{1.0,1.0,1.0};
  \pgfsetfillcolor{fillcol}
  \filldraw (0.0bp,0.0bp) -- (0.0bp,407.05bp) -- (77.2bp,407.05bp) -- (77.2bp,0.0bp) -- cycle;
\end{scope}
\begin{scope}
  \pgfsetstrokecolor{black}
  \definecolor{strokecol}{rgb}{1.0,1.0,1.0};
  \pgfsetstrokecolor{strokecol}
  \definecolor{fillcol}{rgb}{1.0,1.0,1.0};
  \pgfsetfillcolor{fillcol}
  \filldraw (0.0bp,0.0bp) -- (0.0bp,407.05bp) -- (77.2bp,407.05bp) -- (77.2bp,0.0bp) -- cycle;
\end{scope}
\begin{scope}
  \pgfsetstrokecolor{black}
  \definecolor{strokecol}{rgb}{1.0,1.0,1.0};
  \pgfsetstrokecolor{strokecol}
  \definecolor{fillcol}{rgb}{1.0,1.0,1.0};
  \pgfsetfillcolor{fillcol}
  \filldraw (0.0bp,0.0bp) -- (0.0bp,407.05bp) -- (77.2bp,407.05bp) -- (77.2bp,0.0bp) -- cycle;
\end{scope}
\begin{scope}
  \pgfsetstrokecolor{black}
  \definecolor{strokecol}{rgb}{1.0,1.0,1.0};
  \pgfsetstrokecolor{strokecol}
  \definecolor{fillcol}{rgb}{1.0,1.0,1.0};
  \pgfsetfillcolor{fillcol}
  \filldraw (0.0bp,0.0bp) -- (0.0bp,407.05bp) -- (77.2bp,407.05bp) -- (77.2bp,0.0bp) -- cycle;
\end{scope}
  \node (n12) at (10.8bp,77.25bp) [draw,fill=peachpuff,diam] {};
  \node (n14) at (39.8bp,77.25bp) [draw,fill=peachpuff,diam] {};
  \definecolor{fillcolor}{rgb}{0.76,0.88,0.76};
  \node (n10) at (40.8bp,210.15bp) [draw,fill=fillcolor,diam] {};
  \definecolor{fillcolor}{rgb}{0.76,0.88,0.76};
  \node (n16) at (71.8bp,210.15bp) [draw,fill=fillcolor,diam] {};
  \node (n6) at (38.8bp,396.25bp) [draw,fill=skyblue,circle] {a};
  \node (n8) at (54.8bp,329.8bp) [draw,fill=skyblue,diam] {};
  \definecolor{fillcolor}{rgb}{0.76,0.88,0.76};
  \node (n9) at (54.8bp,276.6bp) [draw,fill=fillcolor,circle] {b};
  \node (n11) at (33.8bp,143.7bp) [draw,fill=peachpuff,circle] {c};
  \node (n7) at (25.8bp,329.8bp) [draw,fill=skyblue,diam] {};
  \node (n13) at (10.8bp,10.8bp) [draw,rectangle] {$\top$};
  \node (n15) at (61.8bp,10.8bp) [draw,rectangle] {$\bot$};
  \definecolor{strokecolor}{rgb}{0.02,0.02,0.02};
  \draw [strokecolor,->] (n6) ..controls (31.704bp,380.69bp) and (28.957bp,373.86bp)  .. (27.55bp,367.45bp) .. controls (25.706bp,359.05bp) and (25.32bp,349.32bp)  .. (n7);
  \definecolor{strokecol}{rgb}{0.0,0.0,0.0};
  \pgfsetstrokecolor{strokecol}
  \draw (33.425bp,360.32bp) node {1};
  \definecolor{strokecolor}{rgb}{0.02,0.02,0.02};
  \draw [strokecolor,->,dashed] (n6) ..controls (44.212bp,373.45bp) and (49.242bp,353.19bp)  .. (n8);
  \definecolor{strokecolor}{rgb}{0.02,0.02,0.02};
  \draw [strokecolor,->,dashed] (n8) ..controls (54.8bp,316.81bp) and (54.8bp,304.21bp)  .. (n9);
  \definecolor{strokecolor}{rgb}{0.02,0.02,0.02};
  \draw [strokecolor,->] (n9) ..controls (47.792bp,261.02bp) and (45.044bp,254.19bp)  .. (43.55bp,247.8bp) .. controls (41.589bp,239.41bp) and (40.931bp,229.68bp)  .. (n10);
  \draw (49.425bp,240.68bp) node {2};
  \definecolor{strokecolor}{rgb}{0.02,0.02,0.02};
  \draw [strokecolor,->,dashed] (n9) ..controls (60.581bp,253.68bp) and (66.0bp,233.14bp)  .. (n16);
  \definecolor{strokecolor}{rgb}{0.02,0.02,0.02};
  \draw [strokecolor,->] (n10) ..controls (39.391bp,196.18bp) and (37.168bp,175.71bp)  .. (n11);
  \draw (44.425bp,179.62bp) node {2};
  \definecolor{strokecolor}{rgb}{0.02,0.02,0.02};
  \draw [strokecolor,->] (n11) ..controls (26.929bp,128.19bp) and (23.866bp,121.24bp)  .. (21.55bp,114.9bp) .. controls (18.276bp,105.93bp) and (15.285bp,95.465bp)  .. (n12);
  \draw (27.425bp,107.78bp) node {3};
  \definecolor{strokecolor}{rgb}{0.02,0.02,0.02};
  \draw [strokecolor,->,dashed] (n11) ..controls (35.84bp,120.79bp) and (37.661bp,101.22bp)  .. (n14);
  \definecolor{strokecolor}{rgb}{0.02,0.02,0.02};
  \draw [strokecolor,->] (n12) ..controls (10.8bp,62.114bp) and (10.8bp,42.679bp)  .. (n13);
  \draw (16.425bp,46.725bp) node {3};
  \definecolor{strokecolor}{rgb}{0.02,0.02,0.02};
  \draw [strokecolor,->,dashed] (n14) ..controls (43.825bp,64.457bp) and (51.187bp,42.891bp)  .. (n15);
  \definecolor{strokecolor}{rgb}{0.02,0.02,0.02};
  \draw [strokecolor,->,dashed] (n16) ..controls (70.372bp,180.96bp) and (64.749bp,70.003bp)  .. (n15);
\end{tikzpicture}
        \caption*{C}
    \end{subfigure}
    \caption{Symbolic NetKAT automaton for $p \triangleq \left((a \mut 1 \cdot b \mut 2 \cdot c \mut 3 \cdot \text{dup})^\star + (b \test 2 \cdot c \test 3 \cdot \text{dup})^\star\right)^\star$}
    \label{fig:exaut}
\end{figure}

For more details, see \appendixref{app:spp}.

\section{Bisimilarity and Counter-Example Generation}\label{sec:fwdbwd}

After converting the NetKAT programs to automata, we can check equivalence of two NetKAT programs by checking whether their automata are bisimilar. Bisimilarity is a well-known notion of equivalence for automata. For NetKAT automata, two states are bisimilar for a given packet $\pk$ if (1) the states produce the same output packets, and (2) for each possible modified packet $\pk'$, the two states transition to states that are bisimilar for $\pk'$.

Historically, methods for computing bisimulations of automata \cite{Bonchi2013,Doenges2022} have been based on either on Moore's algorithm for minimization \cite{Moore1956} or Hopcroft and Karp's algorithm \cite{Hopcroft1971}. Indeed, the naive approach shown in \Cref{fig:bisim} follows the basic structure of Hopcroft-Karp, in the sense that it relates the start states and proceeds to follow transitions forward in the two automata.

Our situation is different, because we have already added the symmetric difference operator to NetKAT.  Therefore, we can reduce equivalence checks $A \equiv B$ to emptiness checks $A \oplus B \equiv
\bot$.

\paragraph{Subtlety of symmetric difference}
The reader should note that the situation is a bit more subtle than it may seem at first sight.  In particular, consider the query $A \equiv B$, which asks whether two NetKAT programs $A$ and $B$ are equivalent. If they are inequivalent, then there must be some input packet that causes $A$ to produce a different output than $B$.  Output in this sense does not just mean the final output packet, but also the trace of the packet.  It can be the case that $A$ and $B$ produce the same final output packets, but differ in the trace of the packets. Therefore, the symmetric difference $A \oplus B$ takes the symmetric difference of the traces, not just of the output packets. In other words, the set $\Sem{A \oplus B}$ is the set of counter-example traces to the equivalence of $A$ and $B$.

\paragraph{Subtlety of the emptiness of an automaton}
A second subtlety is the emptiness of an automaton.  Whereas for regular expressions, it is easy to check whether a DFA is empty (just check whether there are any reachable accepting states), this is not the case for NetKAT automata.  Because NetKAT automata manipulate and test the fields of packets, it is possible that a packet travels through the automaton, and even causes multiple packets to be produced (e.g., due to the presence of union in the original NetKAT expression), but nevertheless, it is possible that all of these packets are eventually dropped by the automaton, before producing any output packets.

The goal of this section is to develop a symbolic algorithm for this check, which is more efficient than the naive algorithm that checks all concrete input packets separately.

\subsection{Forward Algorithm}

To check whether a NetKAT automaton drops all input packets, we use an algorithm that works in the forward direction. The forward algorithm computes \emph{all} output packets that the automaton can produce, across all possible input packets. More formally: given a NetKAT automaton for a NetKAT program $p$, the forward algorithm computes the set of final packets occurring in the traces $\Sem{p}(\pk)$ across all input packets $\pk$. The forward algorithm therefore starts with the complete symbolic packet $\top$ at the input state, and repeatedly applies all outgoing transitions $\delta$ to it.  In this way, the algorithm iteratively accumulates a symbolic packet at every state, which represents the set of packets that can reach that state from the start state.  Once we know the set of packets that can reach a state, we can determine the set of output packets by applying the output function $\epsilon$ to the symbolic packet of each state, and taking the union of the results.

\newcommand{\pluseq}{\mathrel{\nf{+}}=}
\begin{figure}
    \begin{subfigure}[t]{0.49\textwidth}
    \begin{algorithmic}
    \State $\doneR(q) \gets \bot \text{ for } q \in Q$;
    \State $\todoR(q) \gets \bot \text{ for } q \in Q \setminus \{q_0\}$;
    \State $\todoR(q_0) \gets \top$;
    \While{$\exists q.\ \todoR(q) \neq \bot$}
        \State set $p := \todoR(q) \mathrel{\nf{-}} \doneR(q)$;
        \State $\todoR(q) \gets \bot$;
        \State $\doneR(q) \pluseq p$;
        \For {$q' \in Q$}
            \State $\todoR(q') \pluseq \mathsf{push}(p,\delta(q,q'))$;
          \EndFor
    \EndWhile
    \State
    \Return $\sum_{q \in Q} \mathsf{push}(\doneR(q), \epsilon(q))$;
    \end{algorithmic}
    \caption{Forward algorithm}
    \end{subfigure}
    \begin{subfigure}[t]{0.49\textwidth}
    \begin{algorithmic}
    \State $\doneR(q) \gets \bot \text{ for } q \in Q$;
    \State $\todoR(q) \gets \mathsf{pull}(\epsilon(q), \top) \text{ for } q \in Q$;
    \State
    \While{$\exists q.\ \todoR(q) \neq \bot$}
        \State set $p := \todoR(q) \mathrel{\nf{-}} \doneR(q)$;
        \State $\todoR(q) \gets \bot$;
        \State $\doneR(q) \pluseq p$;
        \For {$q' \in Q$}
            \State $\todoR(q') \pluseq \mathsf{pull}(\delta(q',q), p)$;
          \EndFor
    \EndWhile
    \State
    \Return $\doneR(q_0)$;
    \end{algorithmic}
    \caption{Backward algorithm}
    \end{subfigure}
    \caption{Forward and backward algorithms for NetKAT automata}
    \label{fig:forward}
    \label{fig:backward}
\end{figure}

The forward algorithm is shown in \Cref{fig:forward}.  To determine the set of packets that a given NetKAT expression $p$ can produce, we convert it to a NetKAT automaton, and then use the forward algorithm to determine the symbolic output packet.  In our implementation, we also have a way to stop the algorithm early, if the user is only interested in a ``yes'' or ``no'' answer for the query $p \equiv\bot$. In this case, we can stop the algorithm as soon as any output packet is produced.

\subsection{Backward Algorithm}

The forward algorithm gives us the set of output packets that a NetKAT automaton can produce, but we are also interested in which input packets cause this output to be produced. For instance, for a given check $A \equiv B$, we want to know \emph{which} input packets cause $A$ and $B$ to produce different sets of output traces.  More generally,  we want to know which input packets cause a NetKAT automaton to produce a nonempty set of output packets or, formally, given an automaton for a NetKAT program $p$, what is the set of initial packets $\pk$ for which $\Sem{p}(\pk)$ is nonempty.

To answer this question, we developed a \emph{backward algorithm}, shown in \Cref{fig:backward}.  The backward algorithm computes \emph{all} input packets that can cause the automaton to produce a nonempty set of output packets.  The backward algorithm therefore starts with the complete symbolic packet $\top$ at every state, and pulls it backwards through all observation functions $\epsilon$.  The algorithm then iteratively accumulates a symbolic packet at every state, which represents the set of packets that will cause the automaton to produce a nonempty set of output packets, when starting from that state.  The accumulation is done by pulling the symbolic packet backwards through all transitions until a fixpoint is reached.  Once the fixpoint is reached, we return the symbolic packet at the start state, which represents the set of input packets that will cause the automaton to produce a nonempty set of output packets when starting from the start state.

\section{Implementation}\label{sec:impl}

\begin{wrapfigure}{r}{0.25\textwidth}
  \centering \vspace{-.5cm}
  \begin{tabular}{l}
      \textbf{Statements}\\[2mm]
        $\NKcheck\ e_1\ \equiv\ e_2$ \\
        $\NKcheck\ e_1\ \nequiv\ e_2$ \\
        $\NKprint\ e$\\
        $x = e$\\
        $\NKfor\ i\ \in n_1..n_2\ \NKdo\ c$\\[4mm]
      \textbf{Expressions}\\[2mm]
        $\NKforward\ e$,
        $\NKbackward\ e$\\[1mm]
        $e_1 \!\cap \!e_2$,\ \
        $e_1 \!\oplus \!e_2$,\ \
        $e_1 \!-\! e_2$\\[1mm]
        $\NKexists{} f\ e$,\ \
        $\NKforall{} f\ e$\\[1mm]
        (+ \NetKAT, see \cref{fig:synsem})\\
  \end{tabular}
  \caption{NKPL syntax.}
  \label{fig:nknf}
  \vspace{-1cm}
\end{wrapfigure}

We have implemented the algorithms in a new system, \KATch{}, comprising 2500 lines of Scala. In this section we explain the system's interface. The implementation provides a surface syntax for expressing queries, which extends the core \NetKAT syntax from \Cref{fig:synsem}. The extended syntax is shown in \Cref{fig:nknf}. The language has statements and expressions, which we describe below.

\paragraph*{Statements}
The $\NKcheck\ e_1 \equiv e_2$ command runs the bisimulation algorithm. The system reports success if the expressions are equivalent (when $\equiv$ is used) or inequivalent (when $\nequiv$) is used, or reports failure otherwise. The other statements behave as expected:
The $\NKprint$ statement invokes the pretty printer, $x = e$ let-binds names to expressions, and $\NKfor$ runs a statement in a loop.

\paragraph*{Expressions}
The $\NKforward\ e$ expression computes, in forward-flowing mode, the set of output packets resulting from running the symbolic packet $\top$ through the given expression $e$ (see \Cref{fig:forward}). Conversely $\NKbackward\ e$ computes the set of input packets which generate some output packet when run on the given expression (\Cref{fig:backward}). These are generally used in conjunction with the $\oplus, -, \cap$ operators to express the desired query. The user may combine these expressions with the $\NKexists$ and $\NKforall$ operators to reason about symbolic packets, and the $\NKprint$ operator to pretty print the symbolic packets, or the $\NKcheck$ operator to assert (in)equivalence of two expressions.

We note that the generality of the language allows us to express some queries in different, equivalent ways. For example, the two checks:
$e_1 \equiv e_2$ and $e_1 \oplus e_2 \equiv \zero$ are equivalent. However, the expression on the right lends itself to inspection of counterexample input packets: $\NKprint\ (\NKbackward\ e_1 \oplus e_2)$.
This statement pretty prints a symbolic packet having different behavior on $e_1$ and $e_2$ (i.e., packets that result in some valid history in one expression and not the other). In other words, it prints the set of all counter-example input packets for $e_1 \equiv e_2$.

\paragraph{Topologies and routing tables}
While NetKAT can be used to specify routing policies declaratively, it is also possible to import topologies and routing tables into NetKAT. For example, a simple way to define routing tables and topologies in NetKAT is as follows:

\medskip
\noindent
\begin{tabular}{lc@{\ \ \ \ }|@{\ \ \ \ }lc}

    $\mathsf{R} \triangleq$\!\!\!\!&$\sw \test 5 \cdot (\dst \test 6 \cdot \pt \mut 3 + \dst \test 8 \cdot \pt \mut 4)$&
    $\mathsf{T} \triangleq$\!\!\!\!&$\sw \test 5 \cdot (\pt=3 \cdot \sw \mut 6 + \pt=4 \cdot \sw \mut 8)$ \\
        &$+\ \cdots\ +$&
        &$+\ \cdots\ +$\\
        &$ \sw \test 7 \cdot (\dst \test 8 \cdot \pt \mut 2 + \dst \test 9 \cdot \pt \mut 1) $&
        &$ \sw \test 7 \cdot (\pt=2 \cdot \sw \mut 8 + \pt=1 \cdot \sw \mut 9)$
\end{tabular}
\medskip

The routing $\mathsf{R}$ tests the switch field (where the packet currently is), and the destination field (where the packet is supposed to end up), and then sets the port over which the packet should be sent out. The topology $\mathsf{T}$ tests the switch and port fields, and then transports the packet to the next switch. We define the action of the network by composing the route and topology:
\begin{align*}
    \mathsf{net} \triangleq \mathsf{R} \cdot \mathsf{T} \cdot \text{dup}
\end{align*}
We include a dup to extend the trace of the packet at every hop.

\begin{example}[All-Pairs Reachability Queries]\label{ex:linear-reachability}
A naive way to check reachability of all pairs of hosts in a network is to run the following command for each pair of end hosts $n_i$, $n_j$:
\[ \NKcheck\ (\sw \test n_i) \cdot \textsf{net}^\star \cdot (\sw \test n_j) \nequiv \bot \]

Of course, this requires a number of queries which is quadratic in the number of hosts---quickly becoming prohibitive. One might think that we could reduce the number of queries to $n$ by running:
\begin{align*}
    &\NKfor\ i \in 1..n\ \NKdo\ \NKcheck\ (\NKforward\ (\sw \test i \cdot \textsf{net}^\star)) \equiv (\sw \in 1..n)
\end{align*}
This query does work if $\sw$ is the only field of our packets, because the left hand side contains the packets that can be reached from $i$ via the network, and the right hand side contains packets where the $\sw$ field is any value in $1..n$. Unfortunately, this does not quite work if the network also operates on other packet fields (say, fields $f_1$ and $f_2$), as the packets on the left hand side will have those fields, whereas the packets on the right hand side will only have a $\sw$ field. The $\NKexists$ and $\NKforall$ operators allow us to manipulate symbolic packets and express all pairs reachability in $n$ queries:
\begin{align*}
    &\NKfor\ i \in 1..n\ \NKdo\ \NKcheck\ (\NKexists f_1\ (\NKexists f_2\
    (\NKforward\ (\sw=i \cdot \textsf{net}^\star)))) \equiv (\sw \in 1..n)
\end{align*}
The operators $\NKexists$ and $\NKforall$ give the programmer the ability to reason about symbolic packets that may be computed, for instance, by $\NKforward$ or $\NKbackward$. The specifications are:
\[
  p \in \NKexists f\ e \overset{\triangle}{\iff} \exists v\in V\colon p[f\mut v] \in \pSem{e} \quad\text{and}\quad
  p \in \NKforall f\ e \overset{\triangle}{\iff} \forall v\in V\colon p[f\mut v] \in \pSem{e}
\]
The implementation does not iterate over $V$ (indeed, $V$ can be unknown). Rather, $\NKexists$ and $\NKforall$ are implemented directly as operations on symbolic packets. In fact, the implementation does not need to fix the sets of fields or 
  values up-front at all. Instead, it operates on a conceptually infinite set of fields and values. This works because \SPPn{}s' default cases handle all remaining fields and all values.
\end{example}

\paragraph{Correctness and testing methodology}

We validated our implementation with the following property-based fuzz testing methodology:
\begin{enumerate}
    \item We implemented the semantics $\Sem{p}(\pk)$ in Scala, as described in \Cref{fig:synsem}.
    \item We repeatedly pick two SPs/SPPs $p$ and $q$ and a packet $\pk$. We enumerated small SPPs up to a bound and generated larger SPPs randomly. We generated packets exhaustively.
    \item We check the soundness (e.g., $\Sem{p \mathop{\nf{+}} q}(\pk) = \Sem{p}(\pk) \cup \Sem{q}(\pk)$) and canonicity (e.g., $\Sem{p}(\pk) = \Sem{q}(\pk)$ for all $\pk$ if and only if $p = q$) of the operations, as well as additional algebraic laws derived from the NetKAT axioms (e.g., $p \mathop{\nf{+}} q = q \mathop{\nf{+}} p$).
\end{enumerate}

In addition to property-based testing for SPs and SPPs, we also generated hundreds of thousands of pairs of random NetKAT expressions and checked that \KATch's bisimilarity check matches the output of \Frenetic. This revealed a subtle bug in Frenetic, which we reported and is now fixed.

\begin{figure}
       \includegraphics[scale=0.6]{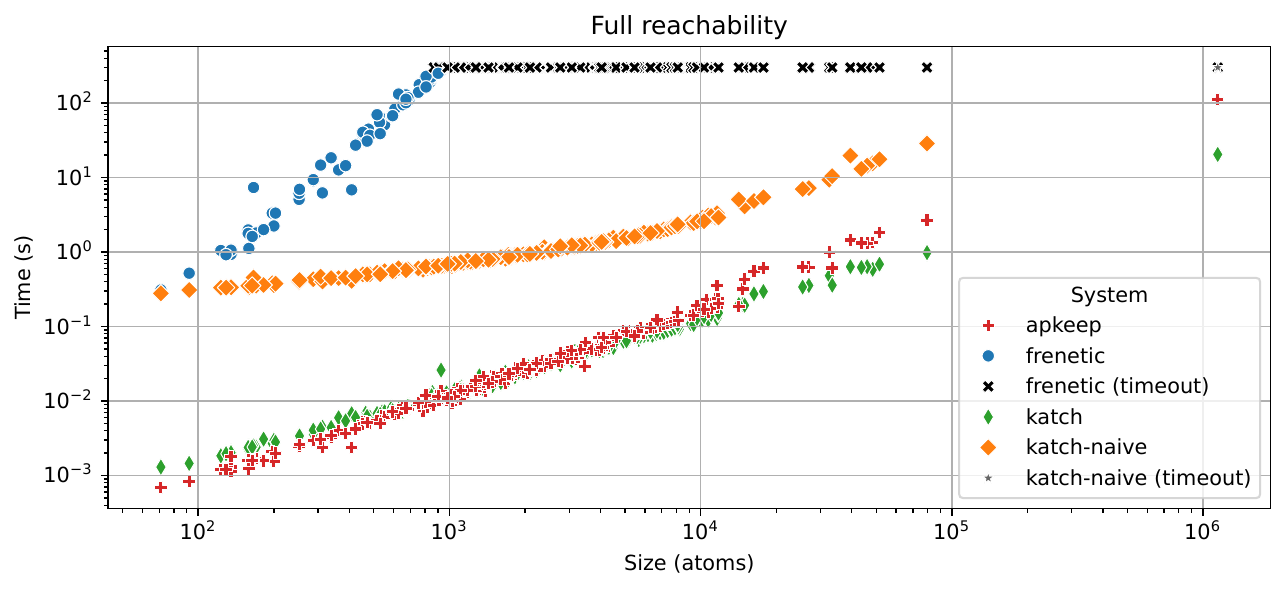}
    \caption{Full reachability queries on Topology Zoo. \KATch{} and \APK results are averages of 100 runs, preceded by 10 runs of JIT warmup. \KATch{}-naive is without JIT warmup and uses $O(n^2)$ 1-to-1 queries.}\label{fig:topology-zoo-full}
\end{figure}

\section{Evaluation}\label{sec:eval}

To evaluate \KATch{}, we conducted experiments in which we used it to solve a variety of verification tasks for a range of topologies and routes, as well as challenging combinatorial \NetKAT terms. The goal of our evaluation is to answer the following three questions:

\begin{enumerate}
  \item How does \KATch{} perform compared to the state of the art
    \NetKAT verifier, \Frenetic?
  \item How does \KATch{} perform compared to the state of the art
    specialized network verification tool, \APK?
  \item How well does \KATch{} scale with the size of topology?
  \item When does \KATch{} perform asymptotically better than prior work?
\end{enumerate}

\subsection{Topology Zoo}

\begin{figure}
  \scalebox{0.85}{\begin{tabular}{|c|r|rr|rr|rr|}
\toprule
\multicolumn{1}{|c|}{\textbf{Name}} & \multicolumn{1}{c|}{\textbf{Size}} & \multicolumn{2}{c|}{\textbf{1-to-1 reachability}} & \multicolumn{2}{c|}{\textbf{Slicing}} & \multicolumn{2}{c|}{\textbf{Full reachability}} \\
(Topology Zoo) & \multicolumn{1}{c|}{(atoms)} & \KATch & Frenetic & \KATch & Frenetic & \KATch & \APK\\
\midrule
Layer42   & 135   & 0.00   & 0.04 & 0.00 & 0.07 & 0.00 & 0.00\\
Compuserv & 539   & 0.00   & 0.35 & 0.01 & 0.81 & 0.01 & 0.01\\
Airtel    & 785   & 0.01   & 0.82 & 0.02 & 1.98 & 0.01 & 0.01\\
Belnet    & 1388   & 0.01  & 3.19 & 0.03 & 7.99 & 0.02 & 0.01\\
Shentel   & 1865   & 0.01  & 3.93 & 0.03 & 9.76 & 0.02 & 0.03\\
Arpa      & 1964   & 0.01  & 4.23 & 0.04 & 10.42 & 0.03 & 0.03\\
Sanet     & 4100   & 0.03  & 23.42 & 0.07 & 62.21 & 0.05 & 0.07\\
Uunet     & 5456   & 0.04  & 80.85 & 0.11 & 203.80 & 0.07 & 0.07\\
Missouri  & 9680   & 0.06  & 166.87 & 0.21 & 441.72 & 0.13 & 0.19\\
Telcove   & 10720  & 0.07  & 441.47 & 0.21 & 1121.30 & 0.12 & 0.16\\
Deltacom  & 27092  & 0.18  & 2087.34 & 0.48 & 5098.57 & 0.36 & 0.63\\
Cogentco  & 79682  & 0.53  & 18910.82 & 1.38 & 54247.73 & 0.98 & 2.67\\
Kdl       & 1144691 & 9.87 & \raisebox{.4mm}{\tiny{out of memory}} & 23.67 & \raisebox{.4mm}{\tiny{out of memory}} & 19.44 & 110.92\\
\bottomrule
\end{tabular}

}\\
  \caption{Running time (in seconds) of \KATch{}, \Frenetic, and \APK on Topology Zoo queries. \KATch{} and \APK results are averages of 100 runs, preceded by 10 runs of JIT warmup.}\label{fig:topology-zoo}
\end{figure}

To begin to answer the first three questions, we conducted our experiments using \emph{The Internet Topology Zoo} \citep{Knight2011} dataset, a publicly available set of 261 network topologies, ranging in size from just 4 nodes (the original ARPANet) to 754 nodes (KDL, the Kentucky Data Link ISP topology).  For each topology, we generated a destination-based routing policy using an all-pairs shortest path scheme that connects every pair of routers to each other.

To demonstrate \KATch{}'s scalability, we first ran full (i.e., $O(n^2)$) reachability queries for every topology in the zoo using \KATch{}, \Frenetic\footnote{\url{https://github.com/frenetic-lang/frenetic}}, and \APK{}~\cite{Zhang2020}.
The results are shown in \Cref{fig:topology-zoo-full} in a log-log plot, so straight lines correspond to  polynomials with exponents related to their slope.

The figure shows two different configurations of \KATch{}: one that verifies full reachability using a quadratic number of point-to-point queries and does not use JIT warmup (``katch-naive''), and one that verifies full reachability using a linear number of queries and does use JIT warmup (``katch'').

Because of the size of the dataset, we set a timeout of 5 minutes per topology. Under these conditions, \Frenetic{} was unable to complete for all but the smallest topologies. \KATch{}-naive handles most of the topologies in well under a second, and all but the largest in under 2 minutes. \KATch{}-naive exceeds the timeout on Kentucky Data Link---using a quadratic number of queries to check full reachability produces over 500k individual point-to-point queries in a network with 754 nodes! (However, \KATch{}-naive is able to finish it in just under 20 minutes.)

To avoid combinatorial blowup in the verification query itself, we also used \KATch{}'s high-level verification interface, to check full reachability using a \emph{linear} number of queries, as discussed in \Cref{ex:linear-reachability}. \Frenetic{} does not have an analogous linear mode. \APK{} does have an analogous mode, as it has specialized support for full reachability queries. \APK{} is faster than \KATch{} for the smaller topologies, but slower for the larger ones. Indeed, the slope of the \APK{} line is steeper than the \KATch{} line, indicating that \KATch{} scales slightly better with the size of the network on these queries.
It is important to note that \APK{} has support for prefix-matching, ACLs, NAT, incrementality, and other features for which \KATch{} does not have specialized support and which are not tested in this comparison.
One should therefore not draw strong conclusions from this comparison; we include it as it is encouraging that reachability via \NetKAT{} equivalence can be competitive with a state-of-the-art specialized tool.

We also randomly sampled a subset of toplogies from the Topology Zoo of varying size and generated point-to-point (1-to-1) reachability and slicing queries to be checked against the routing configurations with no timeout. For each query, we ran \KATch{} and also generated an equivalent query in the syntax of \Frenetic, and ran \Frenetic's bisimulation verifier on those queries. We present a full table of results of these experiments in \Cref{fig:topology-zoo}. The table shows that \KATch{}'s relative speedup over \Frenetic{} is considerable, and increases problem size. This is encouraging, because it shows that \KATch{} is more scalable. \Frenetic{} was unable to complete KDL within a 200GB memory limit (for comparison, \KATch{} is able to complete the same query with well under 1GB).
We also include a selection of full-reachability results from \Cref{fig:topology-zoo-full} in the table.
The reader may wonder why full reachability takes only twice as long as 1-to-1 reachability. This is because a significant fraction of the time is spent constructing automata (which are similar for both queries), and much of the work in the full reachability queries is shared due to memoization of $\SPP$ operations.

\subsection{Combinatorial Examples}

Finally, we ran experiments to test the hypothesis that \SPPs{} have an asymptotic advantage for certain types of queries. \Frenetic{} uses forwarding decision diagrams (\FDDs{}), which is a different (and non-canonical) representation for automaton transitions. A key advantage of \SPPs{} over \FDDs{} is that \SPPs{} keep the updates to each field next to the tests of the same field. On the other hand, \FDDs{} keep all updates at the leaves, which can result in combinatorial explosion during sequencing. To test this, we generated the following NetKAT programs:
\begin{description}
    \item[Inc:] Treating the input packet's $n$ boolean fields as a binary number, increment it by one.
    \item[Flip:] Sequentially flip the value of each of the $n$ boolean fields.
    \item[Nondet:] Set each field of the packet to a range of values from $0$ to $n$.
\end{description}
For Inc, we tested that repeatedly incrementing (using the $\star$
operator) can turn packet $00\cdots0$ into $11\cdots1$.  For Flip, we tested that flipping all bits twice returns the original packet.  For Nondet, we tested that setting the fields non-deterministically twice is the same as doing it once.  The results of this experiment are shown in \Cref{fig:combinatorial}.  Because the available fields are hardcoded in \Frenetic, we only ran the Inc and Flip experiments up to $n=10$. We ran the non-determinism test up to $n=15$.  \KATch{} finishes all three queries up to $n=100$ in under a one minute, demonstrating its asymptotic advantage for these queries, while Frenetic shows combinatorial blowup on larger packets.

\begin{figure}
     \includegraphics[scale=0.53]{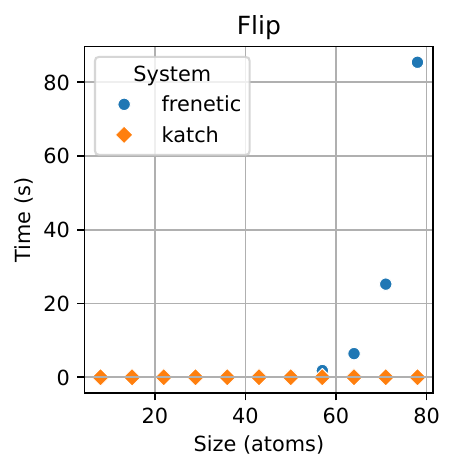}
     \includegraphics[scale=0.53]{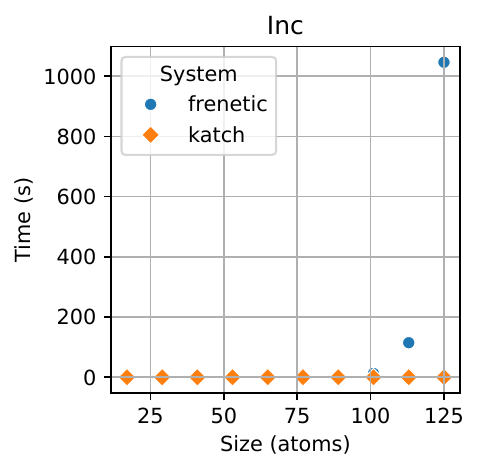}
     \includegraphics[scale=0.53]{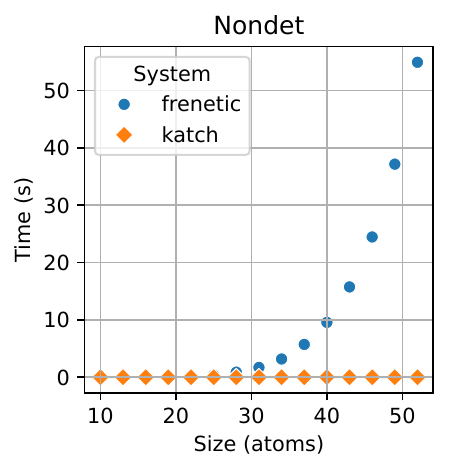}
  \caption{Results of running \KATch{} and \Frenetic on combinatorial benchmarks}\label{fig:combinatorial}
\end{figure}

\paragraph{Source of speedup}
The speedup of \KATch{} over \Frenetic{} comes from several sources: (1) the use of \SPPs{} instead of \FDDs{}, which can be exponentially more compact and support more efficient sequential composition, (2) the symbolic bisimulation algorithm operating on SPs, which can handle exponentially large sets of packets at once, and (3) a quadratic-doubling implementation of star, which can handle exponentially long traces quickly.
These differences show up most strongly in the combinatorially adversarial examples above, but the speedup is also considerable for the Topology Zoo benchmarks, which are based on real-world topologies and not designed to be adversarial.

\paragraph{Field order}
The field order of SPs and SPPs can affect performance, just as for BDDs.
In practice, field orders that keep related fields close together are beneficial.
Our implementation allows the user to control the field order, but we did not use this for our benchmarks.
By default, the fields are in the order in which they first occur in the input file, which turned out to work well enough.

\section{Related Work}

This section discusses the most closely related prior work to this paper, focusing on three areas: \NetKAT, network verification, and automata-theoretic approaches to verification.

\paragraph*{\NetKAT}

\NetKAT was originally proposed as a semantic foundation for SDN data planes~\cite{Anderson2014}. Indeed, being based on KAT~\cite{Kozen1996}, the language provides a sound and complete algebraic reasoning system. Later work on \NetKAT developed an automata-theoretic (or coalgebraic) account of the language~\cite{Foster2015}, including a decision procedure based on Brzozowski derivatives and bisimulation, implemented in \Frenetic{}. However, the performance of this approach turns out to be poor, as shown in our experiments, due to the use of ad hoc data structures (``bases'') to encode packets and automata. \NetKAT's compiler uses a variant of BDDs, called Forwarding Decision Diagrams (FDDs), as well as an algorithm for converting programs to automata using Antimirov derivatives~\cite{Smolka2015}, improving on the earlier representation of transitions as sets of bases~\cite{Foster2015}. The \SPPn{}s proposed in this paper improve on FDDs by ensuring uniqueness and supporting efficient sequential composition in the common case. In addition, the deterministic automata used in \KATch support additional ``negative''
operators that are useful for verification. Other papers based on NetKAT have explored use of the language in other settings such as distributed control planes~\cite{Beckett2016}, probabilistic networks~\cite{Foster2016,Smolka2017,Smolka2019}, and Kleene algebra over composable theories with unbounded state~\cite{Greenberg2022}. It would be interesting to extend the techniques developed in this paper to these richer settings. Another interesting direction for future work is to build a symbolic verifier for the guarded fragment of \NetKAT~\cite{Smolka2019a}.

\paragraph*{Network Verification}
Early work by Xie et al.~\cite{Xie2005} proposed a unifying mathematical model for Internet routers and developed algorithms for analyzing network-wide reachability. Although the paper did not discuss an implementation, its elegant formal model has been extremely influential in the community and has served as the foundation for many follow-on efforts, including this work. The emergence of software-defined networking (SDN) led to a surge of interest in static data plane verification, including systems such as Header Space Analysis (HSA)~\cite{Kazemian2012}, Anteater~\cite{Mai2011}, VeriFlow~\cite{Khurshid2012}, Atomic Predicates (AP)~\cite{Yang2016}, APKeep~\cite{Zhang2020}. These systems all follow a common approach: they build a model of the network-wide forwarding behavior and then check whether given properties hold. However, they vary in the data structures and algorithms used to represent and analyze the network model. For instance, Anteater relies on first-order logic and SAT solvers, while VeriFlow uses prefix trees and custom graph-based algorithms. HSA, AP and APKeep are arguably the most related to our work as they use symbolic representations and BDDs respectively.

The primary difference between \KATch{} and these systems is that the latter implement specialized algorithms for network-wide analysis queries, while \KATch{} (and \Frenetic{}) solve the more general \NetKAT equivalence problem, into which network-wide analysis queries can be encoded.
This generic approach is based on principled automata-theoretic foundations, but one might expect it to be less efficient than specialized algorithms.
\Frenetic{} was found to be comparable to or faster than HSA \cite{Foster2015}, so by transitivity we expect \KATch{} to be faster than HSA.
However, the current state of the art is \APK{}, which is significantly faster than \Frenetic{} and HSA.
We include a comparison with \APK{} in \Cref{sec:eval} on reachability queries.
The results show that \KATch{} is competitive with \APK{} on these benchmarks, despite solving general NetKAT queries. As discussed, one should not draw strong conclusions from this comparison as \APK{} includes additional features; nevertheless, it offers a promising preliminary indication of NetKAT's scalability and performance.




Another line of work has explored how to lift verification from the data plane to the control plane. Batfish~\cite{Fogel2015,Brown2023}
proposed using symbolic simulation to analyze distributed control planes---i.e., generating all possible data planes that might be produced starting from a given control-plane configuration. Like \KATch, Batfish uses a BDD-based representation for data plane analysis. MineSweeper~\cite{Beckett2017} improves on Batfish using an SMT encoding of the converged states of the control plane that avoids having to explicitly simulate the underlying routing protocols. Recent work has focused on using techniques like modular reasoning~\cite{Tang2023,Thijm2023} abstract interpretation~\cite{Beckett2020}, and a form of symmetry reduction~\cite{Beckett2018} to further improve the scalability of control-plane verification.

\paragraph*{Automata-Theoretic Approach and Symbolic Automata}
Our work on \KATch builds on the large body of work on the automata-theoretic approach to verification and symbolic automata. The automata-theoretic approach was pioneered in the 1980s, with applications of temporal logics and model checking to hardware verification~\cite{Vardi1986}. BDDs, originally proposed by Lee \cite{Lee1959}, were further developed by Bryant~\cite{Bryant1986}, and used by McMillan for symbolic model checking~\cite{Burch1990}. BDD-based techniques were a success story of symbolic model checking of hardware in the 1990s---Chaki and Gurfinkel give an overview \cite{Chaki2018}.

An influential line of work by D'Antoni and Veanus developed techniques for representing and transforming finite automata where the transitions are not labeled with individual characters but with elements of a so-called effective Boolean algebra~\cite{DAntoni2014,DAntoni2017}. Shifting from a concrete to a symbolic representation requires generalizing classical algorithms, such as minimization and equivalence, but facilitates building automata that work over huge alphabets, such as Unicode.

Pous \cite{Pous2015} developed symbolic techniques for checking the equivalence of automata where transitions are specified using BDDs. The methods developed in this paper take Pous's work as a starting point but develop a non-trivial extension for \NetKAT. In particular, \SPPn{}s provide a compact representation for ``carry-on'' packets, which is a critical and unique aspect of \NetKAT's semantics. Bonchi and Pous \cite{Bonchi2013} explored the use of up-to techniques for checking equivalence of automata. In principle, one can view the invariants enforced by \KATch's term representations as a kind of up-to technique, but a full investigation of this idea requires more work.

Our backward algorithm for computing bisimulations can be seen as a variant of Moore's classic algorithm for computing the greatest bisimulation of classic automata.  Doenges et al.~\cite{Doenges2022}
proposed an analogous approach for checking equivalence of automata that model the behavior of P4 packet parsers~\cite{Bosshart2014}. However, Leapfrog's model is simpler than \NetKAT's and is based on classic finite automata. It achieves scalability due to a novel up-to technique that ``leaps''
over internal buffering transitions rather than symbolic representations.

\section*{Acknowledgements}

We wish to thank the PLDI reviewers for many helpful comments and suggestions, which improved the paper significantly. We are also grateful to Aaron Gember-Jacobson and Peng Zhang for making AP Keep available as open-source software, to Caleb Koch for early discussions on compact data structures for data plane verification, to the Cornell PLDG for helpful comments on early drafts, and to the EPFL DCSL group for providing a welcoming and supportive environment. This work was supported in part by ONR grant N68335-22-C-0411, DARPA grant W912CG-23-C-0032, ERC grant 101002697, and a Royal Society Wolfson fellowship.

\section*{Data Availability}

The current version of the implementation is available at \url{https://github.com/cornell-netlab/KATch}.  A snapshot of the code as of artifact evaluation (with a dependencies-included Docker image) is available on Zenodo \cite{Moeller2024Artifact}.

\bibliographystyle{ACM-Reference-Format}
\bibliography{refs}


\begin{thebibliography}{46}


\ifx \showCODEN    \undefined \def \showCODEN     #1{\unskip}     \fi
\ifx \showDOI      \undefined \def \showDOI       #1{#1}\fi
\ifx \showISBNx    \undefined \def \showISBNx     #1{\unskip}     \fi
\ifx \showISBNxiii \undefined \def \showISBNxiii  #1{\unskip}     \fi
\ifx \showISSN     \undefined \def \showISSN      #1{\unskip}     \fi
\ifx \showLCCN     \undefined \def \showLCCN      #1{\unskip}     \fi
\ifx \shownote     \undefined \def \shownote      #1{#1}          \fi
\ifx \showarticletitle \undefined \def \showarticletitle #1{#1}   \fi
\ifx \showURL      \undefined \def \showURL       {\relax}        \fi
\providecommand\bibfield[2]{#2}
\providecommand\bibinfo[2]{#2}
\providecommand\natexlab[1]{#1}
\providecommand\showeprint[2][]{arXiv:#2}

\bibitem[Anderson et~al\mbox{.}(2014)]%
        {Anderson2014}
\bibfield{author}{\bibinfo{person}{Carolyn~Jane Anderson},
  \bibinfo{person}{Nate Foster}, \bibinfo{person}{Arjun Guha},
  \bibinfo{person}{Jean-Baptiste Jeannin}, \bibinfo{person}{Dexter Kozen},
  \bibinfo{person}{Cole Schlesinger}, {and} \bibinfo{person}{David Walker}.}
  \bibinfo{year}{2014}\natexlab{}.
\newblock \showarticletitle{NetKAT: Semantic Foundations for Networks}. In
  \bibinfo{booktitle}{\emph{POPL}}.
\newblock
\urldef\tempurl%
\url{https://doi.org/10.1145/2535838.2535862}
\showDOI{\tempurl}


\bibitem[Antimirov(1996)]%
        {Antimirov1996}
\bibfield{author}{\bibinfo{person}{Valentin Antimirov}.}
  \bibinfo{year}{1996}\natexlab{}.
\newblock \showarticletitle{Partial derivatives of regular expressions and
  finite automaton constructions}.
\newblock \bibinfo{journal}{\emph{Theoretical Computer Science}}
  \bibinfo{volume}{155}, \bibinfo{number}{2} (\bibinfo{date}{Mar}
  \bibinfo{year}{1996}), \bibinfo{pages}{291--319}.
\newblock
\showISSN{0304-3975}
\urldef\tempurl%
\url{https://doi.org/10.1016/0304-3975(95)00182-4}
\showDOI{\tempurl}


\bibitem[Barnett et~al\mbox{.}(2005)]%
        {Barnett2005}
\bibfield{author}{\bibinfo{person}{Michael Barnett},
  \bibinfo{person}{Bor{-}Yuh~Evan Chang}, \bibinfo{person}{Robert DeLine},
  \bibinfo{person}{Bart Jacobs}, {and} \bibinfo{person}{K.~Rustan~M. Leino}.}
  \bibinfo{year}{2005}\natexlab{}.
\newblock \showarticletitle{Boogie: {A} Modular Reusable Verifier for
  Object-Oriented Programs}. In \bibinfo{booktitle}{\emph{FMCO}}.
\newblock
\urldef\tempurl%
\url{https://doi.org/10.1007/11804192\_17}
\showDOI{\tempurl}


\bibitem[Beckett et~al\mbox{.}(2017)]%
        {Beckett2017}
\bibfield{author}{\bibinfo{person}{Ryan Beckett}, \bibinfo{person}{Aarti
  Gupta}, \bibinfo{person}{Ratul Mahajan}, {and} \bibinfo{person}{David
  Walker}.} \bibinfo{year}{2017}\natexlab{}.
\newblock \showarticletitle{A General Approach to Network Configuration
  Verification}. In \bibinfo{booktitle}{\emph{SIGCOMM}}.
\newblock
\urldef\tempurl%
\url{https://doi.org/10.1145/3098822.3098834}
\showDOI{\tempurl}


\bibitem[Beckett et~al\mbox{.}(2018)]%
        {Beckett2018}
\bibfield{author}{\bibinfo{person}{Ryan Beckett}, \bibinfo{person}{Aarti
  Gupta}, \bibinfo{person}{Ratul Mahajan}, {and} \bibinfo{person}{David
  Walker}.} \bibinfo{year}{2018}\natexlab{}.
\newblock \showarticletitle{Control Plane Compression}. In
  \bibinfo{booktitle}{\emph{SIGCOMM}}.
\newblock
\urldef\tempurl%
\url{https://doi.org/10.1145/3230543.3230583}
\showDOI{\tempurl}


\bibitem[Beckett et~al\mbox{.}(2020)]%
        {Beckett2020}
\bibfield{author}{\bibinfo{person}{Ryan Beckett}, \bibinfo{person}{Aarti
  Gupta}, \bibinfo{person}{Ratul Mahajan}, {and} \bibinfo{person}{David
  Walker}.} \bibinfo{year}{2020}\natexlab{}.
\newblock \showarticletitle{Abstract Interpretation of Distributed Network
  Control Planes}. In \bibinfo{booktitle}{\emph{POPL}}.
\newblock
\urldef\tempurl%
\url{https://doi.org/10.1145/3371110}
\showDOI{\tempurl}


\bibitem[Beckett et~al\mbox{.}(2016)]%
        {Beckett2016}
\bibfield{author}{\bibinfo{person}{Ryan Beckett}, \bibinfo{person}{Ratul
  Mahajan}, \bibinfo{person}{Todd Millstein}, \bibinfo{person}{Jitendra
  Padhye}, {and} \bibinfo{person}{David Walker}.}
  \bibinfo{year}{2016}\natexlab{}.
\newblock \showarticletitle{Don't Mind the Gap: Bridging Network-Wide
  Objectives and Device-Level Configurations}. In
  \bibinfo{booktitle}{\emph{SIGCOMM}}.
\newblock
\urldef\tempurl%
\url{https://doi.org/10.1145/2934872.2934909}
\showDOI{\tempurl}


\bibitem[Bonchi and Pous(2013)]%
        {Bonchi2013}
\bibfield{author}{\bibinfo{person}{Filippo Bonchi} {and}
  \bibinfo{person}{Damien Pous}.} \bibinfo{year}{2013}\natexlab{}.
\newblock \showarticletitle{Checking NFA Equivalence with Bisimulations up to
  Congruence}. In \bibinfo{booktitle}{\emph{POPL}}.
\newblock
\urldef\tempurl%
\url{https://doi.org/10.1145/2429069.2429124}
\showDOI{\tempurl}


\bibitem[Bosshart et~al\mbox{.}(2014)]%
        {Bosshart2014}
\bibfield{author}{\bibinfo{person}{Pat Bosshart}, \bibinfo{person}{Dan Daly},
  \bibinfo{person}{Glen Gibb}, \bibinfo{person}{Martin Izzard},
  \bibinfo{person}{Nick McKeown}, \bibinfo{person}{Jennifer Rexford},
  \bibinfo{person}{Cole Schlesinger}, \bibinfo{person}{Dan Talayco},
  \bibinfo{person}{Amin Vahdat}, \bibinfo{person}{George Varghese}, {and}
  \bibinfo{person}{David Walker}.} \bibinfo{year}{2014}\natexlab{}.
\newblock \showarticletitle{P4: Programming Protocol-Independent Packet
  Processors}.
\newblock \bibinfo{journal}{\emph{SIGCOMM Computer Communication Review}}
  \bibinfo{volume}{44}, \bibinfo{number}{3} (\bibinfo{date}{Jul}
  \bibinfo{year}{2014}), \bibinfo{pages}{87–95}.
\newblock
\showISSN{0146-4833}
\urldef\tempurl%
\url{https://doi.org/10.1145/2656877.2656890}
\showDOI{\tempurl}


\bibitem[Brown et~al\mbox{.}(2023)]%
        {Brown2023}
\bibfield{author}{\bibinfo{person}{Matt Brown}, \bibinfo{person}{Ari Fogel},
  \bibinfo{person}{Daniel Halperin}, \bibinfo{person}{Victor Heorhiadi},
  \bibinfo{person}{Ratul Mahajan}, {and} \bibinfo{person}{Todd~D. Millstein}.}
  \bibinfo{year}{2023}\natexlab{}.
\newblock \showarticletitle{Lessons from the evolution of the Batfish
  configuration analysis tool}. In \bibinfo{booktitle}{\emph{SIGCOMM}}.
\newblock
\urldef\tempurl%
\url{https://doi.org/10.1145/3603269.3604866}
\showDOI{\tempurl}


\bibitem[Bryant(1986)]%
        {Bryant1986}
\bibfield{author}{\bibinfo{person}{Randal~E. Bryant}.}
  \bibinfo{year}{1986}\natexlab{}.
\newblock \showarticletitle{Graph-Based Algorithms for Boolean Function
  Manipulation}.
\newblock \bibinfo{journal}{\emph{IEEE Trans. Comput.}} \bibinfo{volume}{C-35},
  \bibinfo{number}{8} (\bibinfo{date}{Aug} \bibinfo{year}{1986}),
  \bibinfo{pages}{677--691}.
\newblock
\urldef\tempurl%
\url{https://doi.org/10.1109/TC.1986.1676819}
\showDOI{\tempurl}


\bibitem[Bryant(1992)]%
        {Bryant1992}
\bibfield{author}{\bibinfo{person}{Randal~E. Bryant}.}
  \bibinfo{year}{1992}\natexlab{}.
\newblock \showarticletitle{Symbolic Boolean Manipulation with Ordered
  Binary-Decision Diagrams}.
\newblock \bibinfo{journal}{\emph{Comput. Surveys}} \bibinfo{volume}{24},
  \bibinfo{number}{3} (\bibinfo{date}{Sep} \bibinfo{year}{1992}),
  \bibinfo{pages}{293–318}.
\newblock
\showISSN{0360-0300}
\urldef\tempurl%
\url{https://doi.org/10.1145/136035.136043}
\showDOI{\tempurl}


\bibitem[Brzozowski(1962)]%
        {Brzozowski1962}
\bibfield{author}{\bibinfo{person}{Janusz~A. Brzozowski}.}
  \bibinfo{year}{1962}\natexlab{}.
\newblock \showarticletitle{Canonical regular expressions and minimal state
  graphs for definite events}. In \bibinfo{booktitle}{\emph{Proceedings of the
  Symposium of Mathematical Theory of Automata}}.
\newblock


\bibitem[Burch et~al\mbox{.}(1990)]%
        {Burch1990}
\bibfield{author}{\bibinfo{person}{Jerry~R. Burch}, \bibinfo{person}{Edmund~M.
  Clarke}, \bibinfo{person}{Kenneth~L. McMillan}, \bibinfo{person}{David~L.
  Dill}, {and} \bibinfo{person}{L.~J. Hwang}.} \bibinfo{year}{1990}\natexlab{}.
\newblock \showarticletitle{Symbolic Model Checking: 10{\^{}}20 States and
  Beyond}. In \bibinfo{booktitle}{\emph{LICS}}.
\newblock
\urldef\tempurl%
\url{https://doi.org/10.1109/LICS.1990.113767}
\showDOI{\tempurl}


\bibitem[Chaki and Gurfinkel(2018)]%
        {Chaki2018}
\bibfield{author}{\bibinfo{person}{Sagar Chaki} {and} \bibinfo{person}{Arie
  Gurfinkel}.} \bibinfo{year}{2018}\natexlab{}.
\newblock \showarticletitle{BDD-Based Symbolic Model Checking}.
\newblock In \bibinfo{booktitle}{\emph{Handbook of Model Checking}},
  \bibfield{editor}{\bibinfo{person}{Edmund~M. Clarke},
  \bibinfo{person}{Thomas~A. Henzinger}, \bibinfo{person}{Helmut Veith}, {and}
  \bibinfo{person}{Roderick Bloem}} (Eds.). \bibinfo{publisher}{Springer
  International Publishing}, \bibinfo{pages}{219--245}.
\newblock
\showISBNx{978-3-319-10575-8}
\urldef\tempurl%
\url{https://doi.org/10.1007/978-3-319-10575-8_8}
\showDOI{\tempurl}


\bibitem[D'Antoni and Veanes(2014)]%
        {DAntoni2014}
\bibfield{author}{\bibinfo{person}{Loris D'Antoni} {and}
  \bibinfo{person}{Margus Veanes}.} \bibinfo{year}{2014}\natexlab{}.
\newblock \showarticletitle{Minimization of Symbolic Automata}. In
  \bibinfo{booktitle}{\emph{POPL}}.
\newblock
\urldef\tempurl%
\url{https://doi.org/10.1145/2535838.2535849}
\showDOI{\tempurl}


\bibitem[D'Antoni and Veanes(2017)]%
        {DAntoni2017}
\bibfield{author}{\bibinfo{person}{Loris D'Antoni} {and}
  \bibinfo{person}{Margus Veanes}.} \bibinfo{year}{2017}\natexlab{}.
\newblock \showarticletitle{Forward Bisimulations for Nondeterministic Symbolic
  Finite Automata}. In \bibinfo{booktitle}{\emph{TACAS}}.
\newblock
\urldef\tempurl%
\url{https://doi.org/10.1007/978-3-662-54577-5_30}
\showDOI{\tempurl}


\bibitem[Doenges et~al\mbox{.}(2022)]%
        {Doenges2022}
\bibfield{author}{\bibinfo{person}{Ryan Doenges}, \bibinfo{person}{Tobias
  Kapp\'{e}}, \bibinfo{person}{John Sarracino}, \bibinfo{person}{Nate Foster},
  {and} \bibinfo{person}{Greg Morrisett}.} \bibinfo{year}{2022}\natexlab{}.
\newblock \showarticletitle{Leapfrog: Certified Equivalence for Protocol
  Parsers}. In \bibinfo{booktitle}{\emph{PLDI}}.
\newblock
\urldef\tempurl%
\url{https://doi.org/10.1145/3519939.3523715}
\showDOI{\tempurl}


\bibitem[Fogel et~al\mbox{.}(2015)]%
        {Fogel2015}
\bibfield{author}{\bibinfo{person}{Ari Fogel}, \bibinfo{person}{Stanley Fung},
  \bibinfo{person}{Luis Pedrosa}, \bibinfo{person}{Meg Walraed-Sullivan},
  \bibinfo{person}{Ramesh Govindan}, \bibinfo{person}{Ratul Mahajan}, {and}
  \bibinfo{person}{Todd Millstein}.} \bibinfo{year}{2015}\natexlab{}.
\newblock \showarticletitle{A General Approach to Network Configuration
  Analysis}. In \bibinfo{booktitle}{\emph{NSDI}}.
\newblock


\bibitem[Foster et~al\mbox{.}(2011)]%
        {Foster2011}
\bibfield{author}{\bibinfo{person}{Nate Foster}, \bibinfo{person}{Rob
  Harrison}, \bibinfo{person}{Michael~J. Freedman},
  \bibinfo{person}{Christopher Monsanto}, \bibinfo{person}{Jennifer Rexford},
  \bibinfo{person}{Alec Story}, {and} \bibinfo{person}{David Walker}.}
  \bibinfo{year}{2011}\natexlab{}.
\newblock \showarticletitle{Frenetic: {A} Network Programming Language}. In
  \bibinfo{booktitle}{\emph{ICFP}}.
\newblock
\urldef\tempurl%
\url{https://doi.org/10.1145/2034773.2034812}
\showDOI{\tempurl}


\bibitem[Foster et~al\mbox{.}(2016)]%
        {Foster2016}
\bibfield{author}{\bibinfo{person}{Nate Foster}, \bibinfo{person}{Dexter
  Kozen}, \bibinfo{person}{Konstantinos Mamouras}, \bibinfo{person}{Mark
  Reitblatt}, {and} \bibinfo{person}{Alexandra Silva}.}
  \bibinfo{year}{2016}\natexlab{}.
\newblock \showarticletitle{Probabilistic NetKAT}. In
  \bibinfo{booktitle}{\emph{ESOP}}.
\newblock
\urldef\tempurl%
\url{https://doi.org/10.1007/978-3-662-49498-1_12}
\showDOI{\tempurl}


\bibitem[Foster et~al\mbox{.}(2015)]%
        {Foster2015}
\bibfield{author}{\bibinfo{person}{Nate Foster}, \bibinfo{person}{Dexter
  Kozen}, \bibinfo{person}{Mae Milano}, \bibinfo{person}{Alexandra Silva},
  {and} \bibinfo{person}{Laure Thompson}.} \bibinfo{year}{2015}\natexlab{}.
\newblock \showarticletitle{A Coalgebraic Decision Procedure for NetKAT}. In
  \bibinfo{booktitle}{\emph{POPL}}.
\newblock
\urldef\tempurl%
\url{https://doi.org/10.1145/2676726.2677011}
\showDOI{\tempurl}


\bibitem[Greenberg et~al\mbox{.}(2022)]%
        {Greenberg2022}
\bibfield{author}{\bibinfo{person}{Michael Greenberg}, \bibinfo{person}{Ryan
  Beckett}, {and} \bibinfo{person}{Eric Campbell}.}
  \bibinfo{year}{2022}\natexlab{}.
\newblock \showarticletitle{Kleene Algebra Modulo Theories: A Framework for
  Concrete KATs}. In \bibinfo{booktitle}{\emph{PLDI}}.
\newblock
\urldef\tempurl%
\url{https://doi.org/10.1145/3519939.3523722}
\showDOI{\tempurl}


\bibitem[Hooimeijer et~al\mbox{.}(2011)]%
        {Hooimeijer2011}
\bibfield{author}{\bibinfo{person}{Pieter Hooimeijer},
  \bibinfo{person}{Benjamin Livshits}, \bibinfo{person}{David Molnar},
  \bibinfo{person}{Prateek Saxena}, {and} \bibinfo{person}{Margus Veanes}.}
  \bibinfo{year}{2011}\natexlab{}.
\newblock \showarticletitle{Fast and Precise Sanitizer Analysis with BEK}. In
  \bibinfo{booktitle}{\emph{USENIX Conference on Security}}.
\newblock


\bibitem[Hopcroft and Karp(1971)]%
        {Hopcroft1971}
\bibfield{author}{\bibinfo{person}{John~E. Hopcroft} {and}
  \bibinfo{person}{Richard~M. Karp}.} \bibinfo{year}{1971}\natexlab{}.
\newblock \showarticletitle{A Linear Algorithm for Testing Equivalence of
  Finite Automata}.
\newblock


\bibitem[Kazemian et~al\mbox{.}(2012)]%
        {Kazemian2012}
\bibfield{author}{\bibinfo{person}{Peyman Kazemian}, \bibinfo{person}{George
  Varghese}, {and} \bibinfo{person}{Nick McKeown}.}
  \bibinfo{year}{2012}\natexlab{}.
\newblock \showarticletitle{Header Space Analysis: Static Checking for
  Networks}. In \bibinfo{booktitle}{\emph{NSDI}}.
\newblock


\bibitem[Khurshid et~al\mbox{.}(2012)]%
        {Khurshid2012}
\bibfield{author}{\bibinfo{person}{Ahmed Khurshid}, \bibinfo{person}{Wenxuan
  Zhou}, \bibinfo{person}{Matthew Caesar}, {and} \bibinfo{person}{P.~Brighten
  Godfrey}.} \bibinfo{year}{2012}\natexlab{}.
\newblock \showarticletitle{Veriflow: Verifying Network-Wide Invariants in Real
  Time}.
\newblock \bibinfo{journal}{\emph{SIGCOMM Computer Communications Review}}
  \bibinfo{volume}{42}, \bibinfo{number}{4} (\bibinfo{date}{Sep}
  \bibinfo{year}{2012}), \bibinfo{pages}{467–472}.
\newblock
\urldef\tempurl%
\url{https://doi.org/10.1145/2377677.2377766}
\showDOI{\tempurl}


\bibitem[Knight et~al\mbox{.}(2011)]%
        {Knight2011}
\bibfield{author}{\bibinfo{person}{Simon Knight}, \bibinfo{person}{Hung~X.
  Nguyen}, \bibinfo{person}{Nickolas Falkner}, \bibinfo{person}{Rhys Bowden},
  {and} \bibinfo{person}{Matthew Roughan}.} \bibinfo{year}{2011}\natexlab{}.
\newblock \showarticletitle{The Internet Topology Zoo}.
\newblock \bibinfo{journal}{\emph{IEEE Journal on Selected Areas in
  Communications}} \bibinfo{volume}{29}, \bibinfo{number}{9}
  (\bibinfo{date}{Oct} \bibinfo{year}{2011}), \bibinfo{pages}{1765 --1775}.
\newblock
\urldef\tempurl%
\url{https://doi.org/10.1109/JSAC.2011.111002}
\showDOI{\tempurl}


\bibitem[Kozen(1996)]%
        {Kozen1996}
\bibfield{author}{\bibinfo{person}{Dexter Kozen}.}
  \bibinfo{year}{1996}\natexlab{}.
\newblock \showarticletitle{Kleene Algebra with Tests and Commutativity
  Conditions}. In \bibinfo{booktitle}{\emph{TACAS}}.
\newblock
\urldef\tempurl%
\url{https://doi.org/10.1007/3-540-61042-1_35}
\showDOI{\tempurl}


\bibitem[Lee(1959)]%
        {Lee1959}
\bibfield{author}{\bibinfo{person}{C.~Y. Lee}.}
  \bibinfo{year}{1959}\natexlab{}.
\newblock \showarticletitle{Representation of switching circuits by
  binary-decision programs}.
\newblock \bibinfo{journal}{\emph{The Bell System Technical Journal}}
  \bibinfo{volume}{38}, \bibinfo{number}{4} (\bibinfo{date}{Jul}
  \bibinfo{year}{1959}), \bibinfo{pages}{985--999}.
\newblock
\urldef\tempurl%
\url{https://doi.org/10.1002/j.1538-7305.1959.tb01585.x}
\showDOI{\tempurl}


\bibitem[Leino and W{\"{u}}stholz(2014)]%
        {Leino2014}
\bibfield{author}{\bibinfo{person}{K.~Rustan~M. Leino} {and}
  \bibinfo{person}{Valentin W{\"{u}}stholz}.} \bibinfo{year}{2014}\natexlab{}.
\newblock \showarticletitle{The Dafny Integrated Development Environment}. In
  \bibinfo{booktitle}{\emph{F-IDE}}.
\newblock
\urldef\tempurl%
\url{https://doi.org/10.4204/EPTCS.149.2}
\showDOI{\tempurl}


\bibitem[Mai et~al\mbox{.}(2011)]%
        {Mai2011}
\bibfield{author}{\bibinfo{person}{Haohui Mai}, \bibinfo{person}{Ahmed
  Khurshid}, \bibinfo{person}{Rachit Agarwal}, \bibinfo{person}{Matthew
  Caesar}, \bibinfo{person}{P.~Brighten Godfrey}, {and}
  \bibinfo{person}{Samuel~Talmadge King}.} \bibinfo{year}{2011}\natexlab{}.
\newblock \showarticletitle{Debugging the Data Plane with Anteater}.
\newblock \bibinfo{journal}{\emph{SIGCOMM Computer Communications Review}}
  \bibinfo{volume}{41}, \bibinfo{number}{4} (\bibinfo{date}{Aug}
  \bibinfo{year}{2011}), \bibinfo{pages}{290–301}.
\newblock
\showISSN{0146-4833}
\urldef\tempurl%
\url{https://doi.org/10.1145/2043164.2018470}
\showDOI{\tempurl}


\bibitem[Moeller et~al\mbox{.}(2024)]%
        {Moeller2024}
\bibfield{author}{\bibinfo{person}{Mark Moeller}, \bibinfo{person}{Jules
  Jacobs}, \bibinfo{person}{Olivier~Savary Belanger}, \bibinfo{person}{David
  Darais}, \bibinfo{person}{Cole Schlesinger}, \bibinfo{person}{Steffen
  Smolka}, \bibinfo{person}{Nate Foster}, {and} \bibinfo{person}{Alexandra
  Silva}.} \bibinfo{year}{2024}\natexlab{}.
\newblock \bibinfo{title}{KATch: A Fast Symbolic Verifier for NetKAT}.
\newblock
\newblock
\showeprint[arxiv]{2404.04760}~[cs.PL]
\urldef\tempurl%
\url{https://arxiv.org/pdf/2404.04760.pdf}
\showURL{%
\tempurl}


\bibitem[Moore(1956)]%
        {Moore1956}
\bibfield{author}{\bibinfo{person}{Edward~F. Moore}.}
  \bibinfo{year}{1956}\natexlab{}.
\newblock \showarticletitle{Gedanken-Experiments on Sequential Machines}.
\newblock In \bibinfo{booktitle}{\emph{Automata Studies. (AM-34), Volume 34}}.
\newblock
\urldef\tempurl%
\url{https://doi.org/10.1515/9781400882618-006}
\showDOI{\tempurl}


\bibitem[Pous(2015)]%
        {Pous2015}
\bibfield{author}{\bibinfo{person}{Damien Pous}.}
  \bibinfo{year}{2015}\natexlab{}.
\newblock \showarticletitle{Symbolic Algorithms for Language Equivalence and
  Kleene Algebra with Tests}. In \bibinfo{booktitle}{\emph{POPL}}.
\newblock
\urldef\tempurl%
\url{https://doi.org/10.1145/2775051.2677007}
\showDOI{\tempurl}


\bibitem[Smolka et~al\mbox{.}(2015)]%
        {Smolka2015}
\bibfield{author}{\bibinfo{person}{Steffen Smolka}, \bibinfo{person}{Spiridon
  Eliopoulos}, \bibinfo{person}{Nate Foster}, {and} \bibinfo{person}{Arjun
  Guha}.} \bibinfo{year}{2015}\natexlab{}.
\newblock \showarticletitle{A Fast Compiler for NetKAT}. In
  \bibinfo{booktitle}{\emph{ICFP}}.
\newblock
\urldef\tempurl%
\url{https://doi.org/10.1145/2784731.2784761}
\showDOI{\tempurl}


\bibitem[Smolka et~al\mbox{.}(2019a)]%
        {Smolka2019a}
\bibfield{author}{\bibinfo{person}{Steffen Smolka}, \bibinfo{person}{Nate
  Foster}, \bibinfo{person}{Justin Hsu}, \bibinfo{person}{Tobias Kapp\'{e}},
  \bibinfo{person}{Dexter Kozen}, {and} \bibinfo{person}{Alexandra Silva}.}
  \bibinfo{year}{2019}\natexlab{a}.
\newblock \showarticletitle{Guarded Kleene Algebra with Tests: Verification of
  Uninterpreted Programs in Nearly Linear Time}. In
  \bibinfo{booktitle}{\emph{POPL}}.
\newblock
\urldef\tempurl%
\url{https://doi.org/10.1145/3371129}
\showDOI{\tempurl}


\bibitem[Smolka et~al\mbox{.}(2019b)]%
        {Smolka2019}
\bibfield{author}{\bibinfo{person}{Steffen Smolka}, \bibinfo{person}{Praveen
  Kumar}, \bibinfo{person}{Nate Foster}, \bibinfo{person}{Justin Hsu},
  \bibinfo{person}{Dexter Kozen}, {and} \bibinfo{person}{Alexandra Silva}.}
  \bibinfo{year}{2019}\natexlab{b}.
\newblock \showarticletitle{Scalable Verification of Probabilistic Networks}.
  In \bibinfo{booktitle}{\emph{PLDI}}.
\newblock
\urldef\tempurl%
\url{https://doi.org/10.1145/3314221.3314639}
\showDOI{\tempurl}


\bibitem[Smolka et~al\mbox{.}(2017)]%
        {Smolka2017}
\bibfield{author}{\bibinfo{person}{Steffen Smolka}, \bibinfo{person}{Praveen
  Kumar}, \bibinfo{person}{Nate Foster}, \bibinfo{person}{Dexter Kozen}, {and}
  \bibinfo{person}{Alexandra Silva}.} \bibinfo{year}{2017}\natexlab{}.
\newblock \showarticletitle{Cantor Meets Scott: Semantic Foundations for
  Probabilistic Networks}. In \bibinfo{booktitle}{\emph{POPL}}.
\newblock
\urldef\tempurl%
\url{https://doi.org/10.1145/3093333.3009843}
\showDOI{\tempurl}


\bibitem[Tang et~al\mbox{.}(2023)]%
        {Tang2023}
\bibfield{author}{\bibinfo{person}{Alan Tang}, \bibinfo{person}{Ryan Beckett},
  \bibinfo{person}{Steven Benaloh}, \bibinfo{person}{Karthick Jayaraman},
  \bibinfo{person}{Tejas Patil}, \bibinfo{person}{Todd~D. Millstein}, {and}
  \bibinfo{person}{George Varghese}.} \bibinfo{year}{2023}\natexlab{}.
\newblock \showarticletitle{Lightyear: Using Modularity to Scale {BGP} Control
  Plane Verification}. In \bibinfo{booktitle}{\emph{SIGCOMM}}.
\newblock
\urldef\tempurl%
\url{https://doi.org/10.1145/3603269.3604842}
\showDOI{\tempurl}


\bibitem[Thijm et~al\mbox{.}(2023)]%
        {Thijm2023}
\bibfield{author}{\bibinfo{person}{Timothy~Alberdingk Thijm},
  \bibinfo{person}{Ryan Beckett}, \bibinfo{person}{Aarti Gupta}, {and}
  \bibinfo{person}{David Walker}.} \bibinfo{year}{2023}\natexlab{}.
\newblock \showarticletitle{Modular Control Plane Verification via Temporal
  Invariants}. In \bibinfo{booktitle}{\emph{PLDI}}.
\newblock
\urldef\tempurl%
\url{https://doi.org/10.1145/3591222}
\showDOI{\tempurl}


\bibitem[Torlak and Bod{\'{\i}}k(2013)]%
        {Torlak2013}
\bibfield{author}{\bibinfo{person}{Emina Torlak} {and}
  \bibinfo{person}{Rastislav Bod{\'{\i}}k}.} \bibinfo{year}{2013}\natexlab{}.
\newblock \showarticletitle{Growing Solver-aided Languages with {R}osette}. In
  \bibinfo{booktitle}{\emph{Onward! (SPLASH)}}.
\newblock
\urldef\tempurl%
\url{https://doi.org/10.1145/2509578.2509586}
\showDOI{\tempurl}


\bibitem[Vardi and Wolper(1986)]%
        {Vardi1986}
\bibfield{author}{\bibinfo{person}{Moshe~Y. Vardi} {and}
  \bibinfo{person}{Pierre Wolper}.} \bibinfo{year}{1986}\natexlab{}.
\newblock \showarticletitle{An Automata-Theoretic Approach to Automatic Program
  Verification (Preliminary Report)}. In \bibinfo{booktitle}{\emph{LICS}}.
\newblock


\bibitem[Xie et~al\mbox{.}(2005)]%
        {Xie2005}
\bibfield{author}{\bibinfo{person}{Geoffry~G. Xie}, \bibinfo{person}{Jibin
  Zhan}, \bibinfo{person}{David~A. Maltz}, \bibinfo{person}{Hui Zhang},
  \bibinfo{person}{Albert Greenberg}, \bibinfo{person}{Gisli Hjalmtysson},
  {and} \bibinfo{person}{Jennifer Rexford}.} \bibinfo{year}{2005}\natexlab{}.
\newblock \showarticletitle{On Static Reachability Analysis of {IP} Networks}.
  In \bibinfo{booktitle}{\emph{INFOCOMM}}.
\newblock
\urldef\tempurl%
\url{https://doi.org/10.1109/INFCOM.2005.1498492}
\showDOI{\tempurl}


\bibitem[Yang and Lam(2016)]%
        {Yang2016}
\bibfield{author}{\bibinfo{person}{Hongkun Yang} {and}
  \bibinfo{person}{Simon~S. Lam}.} \bibinfo{year}{2016}\natexlab{}.
\newblock \showarticletitle{Real-Time Verification of Network Properties Using
  Atomic Predicates}.
\newblock \bibinfo{journal}{\emph{IEEE/ACM Transactions on Networking}}
  \bibinfo{volume}{24}, \bibinfo{number}{2} (\bibinfo{date}{Apr}
  \bibinfo{year}{2016}), \bibinfo{pages}{887–900}.
\newblock
\urldef\tempurl%
\url{https://doi.org/10.1109/TNET.2015.2398197}
\showDOI{\tempurl}


\bibitem[Zhang et~al\mbox{.}(2020)]%
        {Zhang2020}
\bibfield{author}{\bibinfo{person}{Peng Zhang}, \bibinfo{person}{Xu Liu},
  \bibinfo{person}{Hongkun Yang}, \bibinfo{person}{Ning Kang},
  \bibinfo{person}{Zhengchang Gu}, {and} \bibinfo{person}{Hao Li}.}
  \bibinfo{year}{2020}\natexlab{}.
\newblock \showarticletitle{{APKeep}: Realtime Verification for Real Networks}.
  In \bibinfo{booktitle}{\emph{NSDI}}.
\newblock


\end{thebibliography}

\ifthenelse{\boolean{isExtendedVersion}}{%
\appendix

\section{Assumptions on Values and Fields}
\subsection{Set of Fields}
We assume packet header fields come from a set $F$, and their values are from a set $V$. Both $F$ and $V$ must have a total order, $\sqsubset$. The implementation makes frequent use of finite maps, especially with $V$ as keys. We use the symbol ($\Mapsto$) to construct finite map types. For example, a concrete packet $\pk$ could naively be represented using the type $F \Mapsto V$, which we take to mean that $\pk = \{f_0 \mapsto v_0, \ldots, f_n \mapsto v_n\}$.  All of our maps have only one binding for each key. We often collect the keys as a set with $\keys(\{f_0 \mapsto v_0, \ldots, f_n \mapsto v_n\}) \triangleq \{f_0,\ldots,f_n\}$.

\subsection{Set of Values}\label{sec:values}
We need one more assumption about the value space: it is larger than the values actually mentioned in any expressions. That we do not assume a fixed finite value space is a boon for compositionality: when \KATch determines that two \NetKAT expressions are equivalent, then they are equivalent under every value set containing the values in the two expressions. On the other hand, if $V$ is small, say $V=\{0,1\}$, then the test $f\test 0$ is semantically equivalent to $f\testNE 1$, but \KATch will not equate these two (unless both are preceded by, e.g. $(f\test 0 + f\test 1)$).

If one wants to reason about a small $V$ using \KATch, one can do any \emph{one} of the following to avoid false negatives:
\begin{enumerate}
\item Not use negative tests.
\item Only check equivalences which do not refer to every value.
\item Prefix both sides of equations with tests representing the full set of values, e.g. for $V=\{0,1\}$, we run the query $(f\test 0 + f\test 1) \cdot e_1 \equiv (f\test 0 + f\test 1) \cdot e_2$. In this case, $f \test 0$ is equivalent to $f\testNE 1$ when appearing in $e_1$ or $e_2$.
\end{enumerate}

\section{Operations on Symbolic Packets}\label{app:sp_ops}

Recall the definition of SP from \Cref{sec:sympk}:
\begin{align*}
  p \in \SP \Coloneqq \bot \mid \top \mid
  \underbrace{\SP(f, \ \{ \ldots, v_i \mapsto q_i, \ldots \},\ q)}_{\equiv \ \ \sum_i f \test v_i \cdot q_i\ +\ (\prod_i f \testNE v_i)\cdot q}
\end{align*}

We take the semantics of an SP to be the semantics of its associated \NetKAT expression, as shown above.
For a packet $\pk\in\Pk$ and $p\in\SP$, note that either $\Sem{p}(\pk) = \emptyset$ or $\Sem{p}(\pk) = \{\pk\}$.
Accordingly, in the notation we treat $\Sem{p}$ as a set of packets and write $\pk\in\Sem{p}$
for $\pk\in\Sem{p}(\pk)$.

We ensure that every SP ever constructed is canonical (i.e., it is Ordered and Reduced). This invariant is established by ensuring that:
\begin{itemize}
    \item The $\SP$ constructor produces canonical-form SPs if its arguments are
        canonical-form.
    \item The other operations only construct SPs using the $\SP$ constructor.
\end{itemize}

\paragraph{Canonicalization}

The smart constructor for SPs has type
\[
  \spsc\colon (F \times (V \Mapsto \SP) \times \SP) \to \SP
\]
and is defined as follows:
\[
  \begin{array}{ll}
  \quad \spsc(f,b,d) \triangleq
    \text{ if } b' = \emptyset \text{ then } d \text{ else } \SP(f, b', d)\\
  \qquad \text{where } b' \triangleq \{v \mapsto p \mid v \mapsto p \in b, p \neq d\}
  \end{array}
\]

\paragraph{Operations}

We have the following 0-ary operations:
\[p \in \SP \Coloneqq \top \mid \bot \mid (f \test v_0) \mid (f \testNE v_0) \]
The $\top$ and $\bot$ are translated directly to the corresponding SP objects.
The $f \test v_0$ and $f \testNE v_0$ are translated as follows:
\begin{align*}
    (f \test v_0) &\triangleq \SP(f, \{v_0 \mapsto \one\}, \zero)\\
    (f \testNE v_0) &\triangleq \SP(f, \{v_0 \mapsto \zero\}, \one)
\end{align*}

Other SPs are built from applying the following operations:
\[
  \begin{array}{cc}
    \nf{+}, \nf{\cdot}, \nf{\cap}, \nf{\oplus}, \nf{-}\ :\ \SP \times \SP \to \SP \qquad
    \nf{\neg}, \nf{\star}\ :\ \SP \to \SP\\
    \NKexists,\NKforall\ :\ F \to \SP \to \SP
  \end{array}
\]

We define the base cases of the binary operations on SPs using the classic truth table definitions:
\[
  \begin{array}{ll}
  \top \plusop \top \triangleq \top \quad \top \cdotop \top \triangleq \top \quad \top \capop \top \triangleq \top \quad \top \oplusop \top \triangleq \bot \quad \top \minusop \top \triangleq \bot\\
  \top \plusop \bot \triangleq \top \quad \top \cdotop \bot \triangleq \bot \quad \top \capop \bot \triangleq \bot \quad \top \oplusop \bot \triangleq \top \quad \top \minusop \bot \triangleq \top\\
  \bot \plusop \top \triangleq \top \quad \bot \cdotop \top \triangleq \bot \quad \bot \capop \top \triangleq \bot \quad \bot \oplusop \top \triangleq \top \quad \bot \minusop \top \triangleq \bot\\
  \bot \plusop \bot \triangleq \bot \quad \bot \cdotop \bot \triangleq \bot \quad \bot \capop \bot \triangleq \bot \quad \bot \oplusop \bot \triangleq \bot \quad \bot \minusop \bot \triangleq \bot\\
  \end{array}
\]
The inductive case when the field $f$ matches both SPs can be expressed for all five operations generically, using $\pmop$ to stand for the operation:
\[
  \begin{array}{ll}
  \quad \SP(f, b_p, d_p)\pmop \SP(f, b_q, d_q) \triangleq  \spsc(f,b'\!,d_p \pmop  d_q)\\
    \quad\quad \text{ where }\  b' = \{v \mapsto b_p(v;d_p) \pmop b_q(v;d_q) \mid v\in \keys(b_p \cup b_q)\},\text{ and}\\
    \quad\quad \phantom{\text{ where }}\  b(v;d) \triangleq \text{if }v \in \keys(b) \text{ then } b[v] \text{ else }d\\
  \end{array}
\]
When the fields do not match, we use a rule we call \emph{expansion} to temporarily create an SP with a matching field.
\[
  \begin{array}{ll}
  \quad p \equiv \SP(f, \emptyset, p) \text {\qquad if \qquad} p \in \{\top, \bot, \SP(f', b, d)\} \text{ where } f \sqsubset f'\\
  \end{array}
\]
Thus when we have different fields in the arguments to an operation, we simply apply expansion to the SP whose field is greater, reducing to the inductive case above.


The remaining operations are defined as follows:
\[
  \begin{array}{ll}
  \quad p^{\starop} \triangleq \top\\
  \end{array}
\]

\begin{align*}
\NKforall\ f_1\ p &\triangleq \begin{cases}
        p       &\text{if } p = \bot \vee p = \top\\
        \nf{\bigcap}_{i\leq n+1} p_i
                &\text{if } p = \SP(f_1, \{v_0\mapsto p_0, \ldots, v_n\mapsto p_n\}, p_{n+1})\\
        \spsc(f_2, \{v_i \mapsto \NKforall\ f_1\ p_i \mid i \leq n\}, \NKforall\ f_1\ p_{n+1})
                &\text{if } p = \SP(f_2, \{v_0\mapsto p_0,\ldots, v_n\mapsto p_n\}, p_{n+1}), f_1 \neq f_2\\
  \end{cases}\displaybreak[0]\\
\NKexists\ f_1\ p &\triangleq \begin{cases}
        p     &\text{if } p = \bot \vee p = \top\\
        \nf{\sum}_{i\leq n+1} p_i
              &\text{if } p = \SP(f_1, \{v_0\mapsto p_0,\ldots, v_n\mapsto p_n\}, p_{n+1})\\
        \spsc(f_2, \{v_i \mapsto \NKexists\ f_1\ p_i \mid i \leq n\}, \NKexists\ f_1\ p_{n+1})
              &\text{if } p = \SP(f_2, \{v_0\mapsto p_0,\ldots, v_n\mapsto p_n\}, p_{n+1}), f_1 \neq f_2\\
    \end{cases} \displaybreak[0]\\
\end{align*}
\subsection{Correctness of SP operations}

The goal of this section is to show that our representation of symbolic packets
is canonical and correct, in the sense that two SPs are semantically equivalent
if and only if they are syntactically equal.

First, we formally define a predicate on SPs that they are reduced and ordered (as described in \Cref{sec:sympk}).
Then we will prove the that this property implies they are canonical in the above sense.

\begin{definition}[Reduced and Ordered]
  An SP $p$ is \emph{reduced and ordered for a field $f\in F$}
  if it satisfies:
  \begin{align*}
    &\ROf\colon \SP \to 2\\
    &\ROf (p) \triangleq \begin{cases}
      \top & \text{if }p = \top \vee p = \bot\ \vee\\
      &\phantom{\text{if }}p = \SP(f', \{ v_0 \mapsto p_0, \ldots, v_n \mapsto p_n
      \}, p_{n+1}), \text{ for } n \geq 0\ \wedge \\
           & \phantom{\text{if }p = } f\sqsubset f'\  \wedge\\
           & \phantom{\text{if }p = }
           \forall i \in \{0,\ldots, n+1\}\colon \ROx{f'}(p_i)\ \wedge\\
           & \phantom{\text{if }p = }
           \forall i \in \{0,\ldots, n\}\colon p_i \neq p_{n+1}\\
      \bot & \text{otherwise}
    \end{cases}
  \end{align*}
  Moreover an SP $p$ is \emph{reduced and ordered} if it satisfies:
  \begin{align*}
    &\RO\colon \SP \to 2\\
    &\RO\ (p) \triangleq \exists f \in F\colon \ROf\ (p)
  \end{align*}
\end{definition}

\begin{lemma}\label{lem:const-f}
  For an SP $p$ such that $\ROf(p)$, then for any concrete packet $\pk$, we have
  $\pk\in\pSem{p}$ implies $\forall v\in V\colon \pk[f\mut v]\in\pSem{p}$ and
  $\pk\notin\pSem{p}$ implies $\forall v\in V\colon \pk[f\mut v]\notin\pSem{p}$.
\end{lemma}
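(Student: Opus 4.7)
The proof proposal is a straightforward structural induction on the SP $p$, exploiting the fact that $\ROf(p)$ guarantees $p$ syntactically only mentions fields strictly greater than $f$, so its semantics cannot depend on the value stored in field $f$.

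First, I would dispatch the base cases $p = \top$ and $p = \bot$. In both cases the semantics is constant in the input packet (either every packet is accepted or none is), so membership is trivially preserved under any field modification. Second, for the inductive case $p = \SP(f', \{v_0 \mapsto p_0, \ldots, v_n \mapsto p_n\}, p_{n+1})$, the hypothesis $\ROf(p)$ gives two crucial facts: (i) $f \sqsubset f'$, hence $f \neq f'$; and (ii) each child $p_i$ (for $i \in \{0, \ldots, n+1\}$) satisfies $\ROx{f'}(p_i)$. From the definition of the semantics associated with the $\SP$ constructor,
\[
\pSem{p} = \bigcup_i \pSem{f' \test v_i \cdot p_i} \ \cup\ \pSem{(\textstyle\prod_i f' \testNE v_i) \cdot p_{n+1}},
\]
so for any concrete packet $\pk$, membership $\pk \in \pSem{p}$ is equivalent to $\pk \in \pSem{p_i}$ where $i$ is the unique index with $\pk_{f'} = v_i$ (or to $\pk \in \pSem{p_{n+1}}$ if no such $i$ exists).

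The key observation is then that since $f \neq f'$, we have $\pk[f \mut v]_{f'} = \pk_{f'}$ for every $v \in V$, so $\pk$ and $\pk[f \mut v]$ select the same child (either the same $p_i$ or both fall through to $p_{n+1}$). To apply the inductive hypothesis to that child, I would use transitivity of $\sqsubset$: from $f \sqsubset f'$ and $\ROx{f'}(p_i)$, one immediately obtains $\ROf(p_i)$. The induction hypothesis then gives $\pk \in \pSem{p_i} \iff \pk[f \mut v] \in \pSem{p_i}$, which transfers directly to $\pSem{p}$ and yields both implications of the lemma simultaneously.

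The only mildly delicate step is the transitivity argument lifting $\ROx{f'}$ to $\ROf$ for each subtree; everything else is routine case analysis. I expect no real obstacles since the $\RO_\bullet$ predicate has been designed precisely to capture this ``no tests on $f$ or earlier fields appear'' invariant, and the proof is essentially unpacking that invariant.
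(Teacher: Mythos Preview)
Your proposal is correct and follows essentially the same approach as the paper's proof: a straightforward structural induction on $p$, with the inductive step relying on the fact that $\ROf(p)$ forces the top-level field $f'$ to satisfy $f \sqsubset f'$, so modifying $f$ cannot change which branch is taken. The paper's proof is terser (it simply observes that ``the same path must be taken''), whereas you make the monotonicity step $\ROx{f'}(p_i) \Rightarrow \ROf(p_i)$ explicit before invoking the induction hypothesis; both are fine.
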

\begin{proof}
By straightforward induction on $p$. The lemma holds trivially for $p=\top$ and
  $p=\bot$. For the remaining cases, $\ROf(p)$ means that $p$ does not test $f$,
  so the same path must be taken on $\pk[f\mut v]$ for each $v\in V$.
\end{proof}

Now we show that our smart constructor only produces SPs which are $\RO$.
\begin{lemma}\label{lem:romaint}
  If $f \sqsubset f'$ and $\ROx{f'} (p_0),\ldots, \ROx{f'} (p_{n+1})$, then $\ROf\ (\spsc\ (f, \{ v_0 \mapsto p_0, \ldots, v_n \mapsto p_n \}, p_{n+1}))$.
\end{lemma}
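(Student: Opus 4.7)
The plan is to unfold the definition of $\spsc$ and case split on whether the filtered binding map $b'$ is empty, since $\spsc$ itself branches on exactly this condition. Before doing so, I would prove one short monotonicity sub-lemma: whenever $g \sqsubset g'$, we have $\ROx{g'}(p) \Rightarrow \ROx{g}(p)$ for any $p \in \SP$. This is immediate from inspecting the definition of $\ROf$ — the parameter appears only in the clause constraining the top test field of $p$ via $g \sqsubset f'$, and a smaller lower bound $g$ is strictly weaker. The recursive conjuncts on the children refer only to the top field of $p$, never to the parameter, so they are unaffected.

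In the first case, $b' = \emptyset$, the smart constructor returns the default $p_{n+1}$ verbatim. The premise supplies $\ROx{f'}(p_{n+1})$, and combining this with $f \sqsubset f'$ via the monotonicity sub-lemma yields $\ROf(p_{n+1})$, which is exactly the goal.

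In the second case, $b' \neq \emptyset$, the smart constructor returns an SP node whose top test field matches the field passed to the constructor, whose explicit branch map is $b'$, and whose default slot is $p_{n+1}$. To establish the conclusion I would verify the four defining conjuncts of $\ROf$ on this node: the top-field ordering clause is exactly the premise $f \sqsubset f'$; each surviving explicit child satisfies $\ROx{f'}$ because $b'$ is a submap of the original $b$ and every $p_i$ in $b$ is assumed $\ROx{f'}$; the default $p_{n+1}$ is also assumed $\ROx{f'}$; and the reducedness clause (no explicit child equals the default) holds by construction, since the defining comprehension $b' = \{v \mapsto p \mid v \mapsto p \in b,\ p \neq d\}$ filters out precisely the children equal to $d = p_{n+1}$.

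No induction on the $p_i$ is needed: all children are hypothesized to be reduced and ordered, and the smart constructor never recurses into them. The only mildly delicate step is the monotonicity sub-lemma used in the empty-$b'$ case; everything else is a routine matching between the structure of $\spsc$, the filtering predicate, and the four defining conjuncts of $\ROf$. I expect the main obstacle to be purely notational — keeping the two field parameters $f$ and $f'$ straight and making sure the ordering conjunct of $\ROf$ lines up with the constructor's top field — rather than any genuine mathematical difficulty.
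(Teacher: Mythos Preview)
Your proposal is correct and takes essentially the same approach as the paper: case split on the smart constructor's branch and verify the $\ROf$ conjuncts, with reducedness following by construction of $b'$. Your treatment is in fact more careful than the paper's very terse proof—you explicitly isolate the monotonicity sub-lemma ($g \sqsubset g' \Rightarrow \ROx{g'}(p) \Rightarrow \ROx{g}(p)$) needed for the $b' = \emptyset$ case, which the paper's proof sweeps under ``by assumption.''
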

\begin{proof}
  After considering the assumption, we only need to argue that no child $p_i$ in $\spsc\ (f, \{ v_0 \mapsto p_0, \ldots, v_n \mapsto p_n \}, p_{n+1})$ is equal to the default case. This is true by assumption in the if-statement true branch of $\spsc$ and true by construction in the false branch.
\end{proof}

\begin{corollary}
  If $\RO(p_1)$ and $\RO(p_2)$, then all of the following hold:
  \begin{enumerate}
    \item $\RO(p_1 \nf{+} p_2)$,
    \item $\RO(p_1 \nf{\cdot} p_2)$,
    \item $\RO(p_1 \nf{\cap} p_2)$,
    \item $\RO(p_1 \nf{-} p_2)$,
    \item $\RO(p_1 \nf{\oplus} p_2)$,
    \item $\RO(p_1^{\nf{\star}})$,
    \item $\RO(\NKforall\ f\ p_1)$, and
    \item $\RO(\NKexists\ f\ p_1)$.
  \end{enumerate}
\end{corollary}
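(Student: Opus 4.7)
The plan is to strengthen the statement to a more uniform claim tracking field ordering, and then dispatch all eight cases by a single structural induction. Specifically, for each binary operation $\pmop \in \{\nf{+}, \nf{\cdot}, \nf{\cap}, \nf{-}, \nf{\oplus}\}$ I will prove the strengthened claim: \emph{for every field $f$, if $\ROf(p_1)$ and $\ROf(p_2)$, then $\ROf(p_1 \pmop p_2)$.} This immediately implies the $\RO$ version, because $\RO(p)$ unfolds to $\exists f.\ \ROf(p)$, and by taking $f$ to be any field strictly smaller than both witnesses (which exists by extending the field order downward, or by using the symmetric fact that $\ROf$ is monotone: $\ROf(p) \wedge f' \sqsubseteq f \Rightarrow \ROx{f'}(p)$), we obtain a common $f$ for which both hypotheses hold.

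The induction is on the sum of the sizes of $p_1$ and $p_2$. The base cases are when either $p_1$ or $p_2$ lies in $\{\top, \bot\}$: inspection of the truth-table definitions in \Cref{fig:spops} shows that the output is $\top$ or $\bot$ or (after expansion) falls into the inductive case, and $\top, \bot$ are $\ROf$ for every $f$. For the inductive step, suppose $p_1 = \SP(f_1, b_1, d_1)$ and $p_2 = \SP(f_2, b_2, d_2)$. If $f_1 \neq f_2$, say $f_1 \sqsubset f_2$, then expansion rewrites $p_2 \equiv \SP(f_1, \emptyset, p_2)$; since $\ROx{f_1}(p_2)$ implies $\ROx{f_1'}(p_2)$ for all $f_1'$ with $f_1 \sqsubset f_1'$, the expansion preserves the reduced-ordered invariant and reduces to the matching-field case. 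In the matching-field case, the definition gives $p_1 \pmop p_2 = \spsc(f_1, b', d_1 \pmop d_2)$ where each entry of $b'$ and the default are obtained by applying $\pmop$ to children of $p_1$ and $p_2$, each of which is $\ROx{f'}$ for every $f'$ with $f_1 \sqsubset f'$ by hypothesis. The induction hypothesis (at strictly smaller total size) yields that every entry of $b'$ and the new default $d_1 \pmop d_2$ is $\ROx{f'}$ for such $f'$. Applying \Cref{lem:romaint} then gives $\ROx{f}(\spsc(f_1, b', d_1 \pmop d_2))$ for any $f \sqsubset f_1$, concluding the step.

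The remaining three cases are quick. Case (6) is immediate: $p^{\nf{\star}} \triangleq \top$, which is $\ROf$ for every $f$. Cases (7) and (8) for $\NKforall\ f_1$ and $\NKexists\ f_1$ proceed by structural induction on $p$: the base cases $\top, \bot$ are trivial; when $p = \SP(f_1, \{v_i \mapsto p_i\}, p_{n+1})$, the result is the iterated $\nf{\cap}$ (resp.\ $\nf{+}$) of the children, which is $\RO$ by (3) (resp.\ (1)) applied $n{+}1$ times; when $p = \SP(f_2, \ldots)$ with $f_1 \neq f_2$, the result is built by $\spsc$ from the recursive calls on each child, which are $\RO$ by the induction hypothesis, so \Cref{lem:romaint} again gives $\ROf$ of the result.

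The main obstacle is the bookkeeping around field ordering in the mutual reduction between expansion and the matching-field inductive case: one must ensure the strengthened invariant $\ROf$ is preserved by expansion (straightforward) \emph{and} that after expansion the recursive calls operate on arguments whose combined size is strictly smaller, or at least in a well-founded order, so that the induction terminates. This is mild — each expansion step is immediately followed by a descent into children, giving a lexicographic decrease on (remaining fields, subtree size) — but deserves to be stated explicitly to justify the induction.
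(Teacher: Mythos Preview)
Your approach is correct and matches the paper's: structural induction with trivial base cases and the inductive step discharged by the smart constructor via \Cref{lem:romaint}, with explicit care around the expansion rule; your strengthening to the field-indexed $\ROf$ makes precise what the paper leaves implicit. One small slip to fix in the matching-field step: the children of $p_1,p_2$ are $\ROx{f_1}$ (and hence, by downward monotonicity, $\ROx{f'}$ for $f' \sqsubseteq f_1$), not ``$\ROx{f'}$ for every $f'$ with $f_1 \sqsubset f'$'' --- but $\ROx{f_1}$ on the children is exactly the hypothesis \Cref{lem:romaint} needs, so the conclusion stands.
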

\begin{proof}
  By structural induction on $p_1, p_2$. If $p_1,p_2\in\{\zero,\one\}$ the statement holds trivially by the base cases of all the operators. For the inductive case, we consider that the inductive cases use the smart constructor and apply \Cref{lem:romaint} to conclude the result is $\RO$. We need to be slightly careful in the use of the expansion rule, which temporarily constructs a non-canonical SP (e.g., $\SP(f_1, \emptyset, p_2)$ for $p_2$, which is not $\RO$ because the test branch map is empty). This construction still works because we apply the inductive case after expansion, which calls the smart constructor---the premises of \Cref{lem:romaint} are still met for this call by the assumptions for the original $p_1, p_2$.
\end{proof}

As a result of this corollary and the fact that $\top$, $\bot$, and the SPPs for $f\test v$ and $f\testNE v$ are all $RO$, all the SPs generated in KATch are $RO$.

Next, the smart constructor produces symbolic packets which are semantically equivalent to the plain datatype constructor:
\begin{lemma}\label{lem:spsc}
  $\pSem{\spsc(f, \{ v_0 \mapsto p_0, \ldots, v_n \mapsto p_n \}, p_{n+1})}
   = \pSem{\SP(f, \{ v_0 \mapsto p_0, \ldots, v_n \mapsto p_n \}, p_{n+1})}$
\end{lemma}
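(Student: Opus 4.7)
My plan is to do a case split on whether the filtered branch map $b' = \{v \mapsto p \mid v \mapsto p \in b, p \neq d\}$ is empty or not, following the two branches of the conditional in the definition of $\spsc$. In both cases, the key observation is that, per the semantic gloss given under the definition of the $\SP$ datatype, a concrete packet $\pk$ takes a specific branch through $\SP(f, b, d)$ based solely on the value of $\pk_f$: if $\pk_f = v_i$ for some $v_i \in \keys(b)$, then $\pSem{\SP(f,b,d)}$ on $\pk$ agrees with $\pSem{b(v_i)}$ on $\pk$; otherwise it agrees with $\pSem{d}$ on $\pk$. The proof reduces to showing the same is true of $\spsc(f,b,d)$ for every $\pk$.

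For the case $b' \ne \emptyset$, the smart constructor returns $\SP(f, b', d)$, and I would split on the value of $\pk_f$ into three sub-cases: (i) $\pk_f = v_i$ with $v_i \in \keys(b')$; then $b'(v_i) = b(v_i)$ and both constructors evaluate via the same child, giving identical results. (ii) $\pk_f = v_i$ with $v_i \in \keys(b) \setminus \keys(b')$; by the definition of $b'$ this means $b(v_i) = d$, so the $\SP$ constructor evaluates via $b(v_i) = d$ while $\spsc$ falls into its default case $d$, again yielding identical results. (iii) $\pk_f \notin \keys(b)$; both constructors use the default case $d$. For the case $b' = \emptyset$, by definition every $v_i \mapsto p_i \in b$ has $p_i = d$, so for every $\pk$ the original $\SP(f,b,d)$ evaluates to $\pSem{d}$ on $\pk$, matching the returned value $d$ directly.

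The main obstacle is essentially notational rather than mathematical: being precise about how to translate the syntactic definition of $\SP(f, b, d)$ into a per-packet evaluation rule, given that the official semantics is defined via the NetKAT expansion $\sum_i f\test v_i \cdot p_i + (\prod_i f \testNE v_i) \cdot d$. Once that expansion is unfolded on a concrete $\pk$, the $\sum$ collapses (all but at most one summand vanishes because the tests are mutually exclusive), and the $\prod$ collapses (it is either $\top$ or $\bot$ on $\pk$), yielding the per-branch rule sketched above. With that lemma in hand, the rest is a mechanical case analysis as outlined.
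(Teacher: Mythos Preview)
Your proposal is correct and follows essentially the same approach as the paper: a case split on whether $b' = \emptyset$, observing in the nonempty case that the only removed branches are those syntactically equal to the default $d$, so any packet falling through to $d$ behaves identically. The paper's proof is terser (two sentences, no explicit per-packet sub-case analysis), but your added rigor---the three sub-cases on $\pk_f$ and the remark about how the $\sum$/$\prod$ in the \NetKAT expansion collapse on a concrete packet---is exactly the unpacking one would want if pressed for detail.
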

\begin{proof}
The case that we return $d$ holds trivially (note $d = p_{n+1}$), since if $p_0 = \ldots = p_n = p_{n+1}$ then also $\pSem{p_0} = \ldots = \pSem{p_{n+1}} = \pSem{\SP(f, \{ v_0 \mapsto p_0, \ldots, v_n \mapsto p_n \}, p_{n+1})}$.

    In the other case we only delete branches. This is fine because branches we delete are syntactically the same as the default case. Any packets that would have gone to the deleted branch $p_i$ are handled equivalently by the default (i.e., $p_{n+1}$) case because it is deleted precisely if $p_i=p_{n+1}$.
\end{proof}

Next we will show that two SPs in canonical form are syntactically equal if and only if they
are semantically equal.



\begin{theorem}[RO form is unique]
  For all $p_1, p_2\in \SP$, if $\RO (p_1)$ and $\RO (p_2)$, then $p_1 = p_2 \iff \pSem{p_1} = \pSem{p_2}$.
\end{theorem}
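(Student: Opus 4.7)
The forward direction is immediate, so the interesting content is the converse. I would proceed by well-founded induction on the pair $(|p_1|, |p_2|)$ ordered by, say, the sum of sizes, and case-split on the top-level shapes of $p_1$ and $p_2$. The degenerate cases in which either $p_i\in\{\top,\bot\}$ are easy: $\pSem{\top}=\Pk$ and $\pSem{\bot}=\emptyset$ are distinguishable by any concrete packet, and if the other side is an $\SP$ node $\SP(f,b,d)$ that is $\RO$ we can invoke the reducedness condition (no branch equals the default) together with \Cref{lem:const-f} applied at an appropriate sub-field to exhibit two packets differing in their membership, contradicting equality of semantics (assuming the value-space assumption from \Cref{sec:values}, which guarantees we have access to a value not named explicitly in either tree).

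For the main case, write $p_i=\SP(f_i,b_i,d_i)$ with $b_i=\{v^{(i)}_0\mapsto p^{(i)}_0,\dots,v^{(i)}_{n_i}\mapsto p^{(i)}_{n_i}\}$. The first nontrivial step is to force $f_1=f_2$. Suppose for contradiction that $f_1\sqsubset f_2$. Then $p_2$ is $\ROx{f_1}$ at the top (its topmost tested field strictly exceeds $f_1$), so by \Cref{lem:const-f} the membership of any packet in $\pSem{p_2}$ is invariant under overwriting field $f_1$. On the other hand, because $p_1$ is reduced, some branch $p^{(1)}_j$ differs from the default $d_1$, hence there is a packet $\pk$ with $f_1=v^{(1)}_j$ that witnesses $\pSem{p^{(1)}_j}\neq\pSem{d_1}$ (found by applying the induction hypothesis to the strictly smaller children). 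By picking a fresh value $v^\star\notin\{v^{(1)}_0,\dots,v^{(1)}_{n_1}\}$ (available by the assumption on $V$) and comparing $\pk$ with $\pk[f_1\mut v^\star]$, we produce two packets that $p_1$ separates but $p_2$ cannot, contradicting $\pSem{p_1}=\pSem{p_2}$.

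Once $f_1=f_2=f$, the second step is to match the branch maps. For each value $v$, observe that membership of $\pk$ with $\pk_f=v$ in $\pSem{p_i}$ is determined by $b_i(v;d_i)$ in the sense of the inductive definition, so $\pSem{b_1(v;d_1)}=\pSem{b_2(v;d_2)}$ for every $v\in V$. Specialising to $v\notin\keys(b_1\cup b_2)$ (again invoking the assumption on $V$) gives $\pSem{d_1}=\pSem{d_2}$, and the induction hypothesis applied to the two defaults yields $d_1=d_2$. For any $v\in\keys(b_1)$, the same semantic equality together with the inductive hypothesis gives $b_1(v)=b_2(v;d_2)$; reducedness of $p_1$ says $b_1(v)\neq d_1=d_2$, forcing $v\in\keys(b_2)$ and $b_2(v)=b_1(v)$. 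Symmetrically $\keys(b_2)\subseteq\keys(b_1)$, so $b_1=b_2$, and hence $p_1=p_2$ syntactically.

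The main obstacle is the ``field-matching'' step: it crucially depends on having a value not mentioned in either $p_1$ or $p_2$, which is exactly the assumption baked into \Cref{sec:values}. Without that assumption, reducedness alone does not guarantee that the top field is genuinely tested (two distinct reduced trees could enumerate complementary partitions of a finite $V$ and agree semantically). The rest of the argument is a routine propagation of the induction hypothesis through branches and defaults, with \Cref{lem:const-f} handling the field-ordering case and \Cref{lem:spsc} not being needed here since we are reasoning about already-canonical trees.
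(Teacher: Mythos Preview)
Your proposal is correct and follows essentially the same approach as the paper: structural induction, using reducedness to find a branch differing from the default, invoking the induction hypothesis on children, applying \Cref{lem:const-f} to handle the field-mismatch case, and relying on the value-space assumption of \Cref{sec:values} to obtain a fresh value. The only organizational difference is that for the same-field case the paper argues the contrapositive by explicit case analysis on how $p_1$ and $p_2$ can differ syntactically, whereas you argue directly by first pinning down $d_1=d_2$ via a fresh value and then matching branches; your route is slightly cleaner but the content is the same.
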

\begin{proof}
  Suppose $\RO (p_1)$ and $\RO (p_2)$.  We will show
  $p_1 = p_2 \iff \pSem{p_1} = \pSem{p_2}$.

  That $p_1 = p_2 \Rightarrow \pSem{p_1} = \pSem{p_2}$ is immediate because $\pSem{\cdot}$ is a function. For the other direction, we will show $p_1\neq p_2 \Rightarrow \pSem{p_1}\neq\pSem{p_2}$. We show this property by induction on $p_1, p_2$ structurally. So let $p_1, p_2\in\SP$ such that $\RO(p_1),\RO(p_2)$, and $p_1\neq p_2$. Considering symmetry, there is only one base case, i.e. that $p_1 = \top$ and $p_2 = \bot$. Then $\pSem{p_1} = \Pk \neq \emptyset = \pSem{p_2}$.

  There are four inductive cases after considering symmetry:
  \begin{enumerate}
    \item Let $p_1 = \top$ and $p_2 = \SP(f, \{\ldots, v_i\mapsto p_i,\ldots\}, p_{n+1})$.
      It suffices to find a packet $\pk\notin\Sem{p_2}$ (since $\Sem{p_1}=\Pk$). From $\RO(p_2)$, there is a $p_i \neq p_{n+1}$ (syntactically!). Apply the induction hypothesis to get that $\pSem{p_i} \neq \pSem{p_{n+1}}$. This means that $\pSem{p_i} \neq \Pk$ or $\pSem{p_{n+1}} \neq \Pk$. In the first case, $\pSem{p_i} \neq \Pk$. Pick a packet $\pk\notin\pSem{p_i}$, then since $\ROx{f'}(p_i)$ for $f'\sqsupset f$, $p_i$ cannot test $f$, so it is also the case that $\pk[f\mut v_i] \notin \pSem{p_i}$. This in turn means $\pk\notin\pSem{p_2}$ and therefore that $\pSem{p_2} \neq \Pk$. On the other hand if instead $\pSem{p_{n+1}} \neq \Pk$, then we pick $\pk\notin\pSem{p_{n+1}}$. By the assumption in \Cref{sec:values}, then we can pick a value $v_{n+1}\notin \{v_0,\ldots,v_n\}$. Then by $\ROx{f'}(p_{n+1})$ for $f'\sqsupset f$ and \Cref{lem:const-f} we have $\pk[f\mut v_{n+1}]\notin \pSem{p_{n+1}}$ and thus $\pk[f\mut v_{n+1}]\notin \pSem{p_2}$, as needed.

    \item Let $p_1 = \bot$ and $p_2 = \SP(f, \{\ldots,v_i\mapsto p_i,\ldots\}, p_{n+1})$. We need to show $\pSem{p_2} \neq \emptyset$. By $\RO(p_2)$ there is a $p_i\neq p_{n+1}$. Apply the induction hypothesis to get $\pSem{p_i} \neq \pSem{p_{n+1}}$. Now this means either $\pSem{p_i}\neq\emptyset$ or $\pSem{p_{n+1}}\neq\emptyset$. These two subcases are similar to those in the previous case.

    \item Let
            $p_1 = \SP(f, \{v_{1,0} \mapsto p_{1,0},\ldots, v_{1,n} \mapsto p_{1,n}\}, p_{1,n+1})$ and
            $p_2 = \SP(f, \{v_{2,0} \mapsto p_{2,0},\ldots, v_{2,m} \mapsto p_{2,m}\}, p_{2,m+1})$.
          We know that $p_1 \neq p_2$. There are two subcases for how this can
          be true:
          \begin{itemize}
            \item We have $v_{1,i} \notin \{v_{2,0},\ldots,v_{2,m}\}$ (w.l.o.g.).
            From $\RO(p_1)$, we know that $p_{1,i} \neq p_{1,n+1}$. Apply the
            induction hypothesis to get that $\pSem{p_{1,i}} \neq \pSem{p_{1,n+1}}$.
            In particular, let $\pk\in\pSem{p_{1,i}\oplus p_{1,n+1}}$. Then
              choose $v_{1,n+1}\notin\{v_{1,0},\ldots,v_{1,n},
              v_{2,0},\ldots,v_{2,m}\}$ and observe
              that $\pk[f\mut v_{1,i}]\in\pSem{p_1} \iff \pk[f\mut
              v_{1,n+1}]\notin\pSem{p_1}$. But from $v_{1,i} \notin \{v_{2,0},\ldots,v_{2,m}\}$ we know both of these packets will go to the default case of $p_2$. And $p_{2,m+1}$ is
              $\ROx{f'}$, which means both packets are in $\pSem{p_2}$ or not in
              $\pSem{p_2}$ because they differ only by $f$. We conclude $\pSem{p_1}\neq\pSem{p_2}$.

            \item Otherwise we can renumber $p_2$ so that
            $v_{1,i} = v_{2,i}$ for all $i \in \{0,\ldots,n\}$ (noting that $(n=m)$ or else we could resolve in the previous case). Moreover there must be some $i\in \{0,\ldots,n+1\}$ for which
            $p_{1,i}\neq p_{2,i}$. Applying the inductive hypothesis, we get
            that $\pSem{p_{1,i}}\neq \pSem{p_{2,i}}$. Let
            $\pk\in\pSem{p_{1,i}}\oplus \pSem{p_{2,i}}$. If $i \leq n$, then
            $\pk[f\mut v_i]\in \pSem{p_1}\oplus\pSem{p_2}$, and otherwise if
            $i=n+1$, we get $v_{n+1}\notin\{v_0,\ldots,v_n\}$ from the
            assumption in \Cref{sec:values} and we have
            $\pk[f\mut v_{n+1}]\in \pSem{p_1}\oplus\pSem{p_2}$. In either case we
            see that $\pSem{p_1}\neq\pSem{p_2}$ as needed.
          \end{itemize}

    \item Let
            $p_1 = \SP(f_1, \{v_{1,0} \mapsto p_{1,0},\ldots, v_{1,n} \mapsto p_{1,n}\}, p_{1,n+1})$,
            $p_2 = \SP(f_2, \{v_{2,0} \mapsto p_{2,0},\ldots, v_{2,m} \mapsto p_{2,m}\}, p_{2,m+1})$, and
            without loss of generality assume $f_1 \sqsubset f_2$.
          Assume $\RO(p_1)$.
          Then there is at least one $i\leq n$ for which $p_{1,i}\neq p_{1,n+1}$.
          By the induction hypothesis, $\pSem{p_{1,i}}\neq \pSem{p_{1,n+1}}$.
          Let $\pk\pSem{p_{1,i}}\oplus\pSem{p_{1,n+1}}$.
          Let $v_{n+1}\notin\{v_0,\ldots,v_n\}$.
          Now consider the two packets $\pk[f_1\mut v_i]$ and $\pk[f_1\mut v_{n+1}]$.
          Clearly by construction one is in $\pSem{p_1}$ and the other is not. But they differ only by $f_1$ and $p_2$ cannot test $f_1$ because $f_1\sqsubset f_2$ and $\RO(p_2)$, so they must either both be in $\pSem{p_2}$ or both not be in $\pSem{p_2}$.
          Thus one of the two packets witnesses $\pSem{p_1}\neq\pSem{p_2}$.

  \end{enumerate}
\end{proof}

Finally, we show that the remaining operations defined above have the expected meanings semantically.

\begin{theorem}\label{thm:soundspops}
  For SPs $p_1, p_2$, all of the following hold.
  \begin{enumerate}
    \item $\pSem{p_1 \nf{+} p_2} = \pSem{p_1} \cup \pSem{p_2}.$
    \item $\pSem{p_1 \nf{\cdot} p_2} = \pSem{p_1} \cdot \pSem{p_2}.$
    \item $\pSem{p_1 \nf{\cap} p_2} = \pSem{p_1} \cap \pSem{p_2}.$
    \item $\pSem{p_1 \nf{-} p_2} = \pSem{p_1} - \pSem{p_2}.$
    \item $\pSem{p_1 \nf{\oplus} p_2} = \pSem{p_1} \oplus \pSem{p_2}.$
    \item $\pSem{p^{\nf{\star}}} = \pSem{p}^\star$
    \item $\pSem{\NKforall\ f\ p} = \{\pk\in\Pk \mid \forall v\in V\colon \pk[f\mut v] \in \pSem{p}\}$
    \item $\pSem{\NKexists\ f\ p} = \{\pk\in\Pk \mid \exists v\in V\colon \pk[f\mut v] \in \pSem{p}\}$
  \end{enumerate}
\end{theorem}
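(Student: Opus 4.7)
The plan is to establish all eight items simultaneously by well-founded structural induction on the pair $(p_1,p_2)$ (or on $p$ for the unary operations), using Lemma~\ref{lem:spsc} so that each inductive step may reason about the ordinary constructor $\SP(\cdot,\cdot,\cdot)$ rather than $\spsc$. For the binary operations the base cases are the four truth-table clauses for $\{\top,\bot\}\times\{\top,\bot\}$; each is checked by unfolding $\pSem{\cdot}$ on the two-element set $\{\emptyset,\Pk\}$ and comparing to the corresponding set-theoretic operation (noting that $\pSem{p}\cdot\pSem{p}=\pSem{p}$ when $\pSem{p}\in\{\emptyset,\Pk\}$, which validates the identification of $\nf{\cdot}$ with $\nf{\cap}$ on SPs).

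The heart of the argument is the matching-field inductive case $p_1=\SP(f,b_1,d_1)$, $p_2=\SP(f,b_2,d_2)$. I would fix an arbitrary concrete packet $\pk$, split on whether $\pk_f=v$ for some $v\in\keys(b_1\cup b_2)$ or not, and in each branch unfold the n-ary-sum semantics given in the definition of $\SP$. In the first sub-case membership of $\pk$ in $\pSem{\SP(f,b',d_1\pmop d_2)}$ reduces to membership of $\pk$ in $\pSem{b_1(v;d_1)\pmop b_2(v;d_2)}$, to which the induction hypothesis applies; in the second sub-case the path goes through the default, and the hypothesis applied to $d_1\pmop d_2$ gives the result. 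For the mismatched-field case I would appeal to the expansion identity $p\equiv \SP(f,\emptyset,p)$, whose soundness is an immediate semantic computation, and which reduces the situation to the matching-field case just treated.

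The star, universal, and existential cases are separate but short. For $p^{\nf{\star}}\triangleq\top$ the equality $\pSem{p}^\star=\Pk$ holds because each element of $\pSem{p}$ is a length-one trace of packets in $\Pk$, and $\pSem{\top}=\Pk$. For $\NKforall$ and $\NKexists$ I would induct on the single argument, handling the base cases $\top,\bot$ directly, the matching-field case by unfolding the n-ary sum and using the assumption in \Cref{sec:values} that $V$ is strictly larger than the values appearing in the SP (so that the default branch represents at least one value of $f$), and the mismatched-field case by recursion into each child, appealing to \Cref{lem:const-f} to conclude that $\pk[f\mut v]\in\pSem{p}$ is independent of $v$ when $p$ does not test $f$.

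The main obstacle I anticipate is the matching-field case for the operations $\nf{\cdot}$, $\nf{-}$, and $\nf{\oplus}$ together with the bookkeeping induced by $b'=\{v\mapsto b_1(v;d_1)\pmop b_2(v;d_2)\mid v\in\keys(b_1\cup b_2)\}$: one must be careful that values present in $b_1$ but not $b_2$ (and vice versa) are routed to the correct default on the opposite side, and that the reduction performed by $\spsc$ does not lose or duplicate any packet. Lemma~\ref{lem:spsc} takes care of the $\spsc$-to-$\SP$ passage, but correctness of the key set $b'$ must be verified explicitly by the case split above. Once that is done, the remaining items are straightforward compositions of these ingredients.
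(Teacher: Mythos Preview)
Your proposal is correct and follows essentially the same approach as the paper: structural induction on the SP arguments, with the matching-field case handled by a packet-level case split on $\pk_f$ and the mismatched-field case reduced via expansion, invoking Lemma~\ref{lem:spsc} to pass from $\spsc$ to the raw constructor. The paper only writes out $\nf{+}$ and $\NKforall$ in detail and declares the rest similar; your more explicit accounting of the bookkeeping in $b'$ and your appeals to Lemma~\ref{lem:const-f} and the value-space assumption of \Cref{sec:values} are in fact details the paper's terse proof leaves implicit, so if anything your sketch is slightly more careful. One small wording issue: your justification of item~(6) via ``length-one traces'' is a bit off---the cleaner argument is simply that $p^0=\top$ in the NetKAT semantics, so $\pSem{p^\star}\supseteq\pSem{\top}=\Pk$, which matches $p^{\nf{\star}}\triangleq\top$.
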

\begin{proof} We again proceed by structural induction on $p_1, p_2$. We only show the cases for $\nf{+}$ and $\NKforall$ because the other cases are similar or reduce directly to the other operators.
  \begin{enumerate}
    \item The base cases are trivial. For the inductive cases, let
        $p_1 = \SP(f, \{\ldots, v_{1,j} \mapsto p_{1,j}, \ldots\},
        p_{1,n+1})$, and let $p_2 = \SP(f, \{\ldots, v_{2,k} \mapsto p_{2,k},\ldots\}, p_{2,m+1})$.
        Let $\pk\in\Pk$. Considering $\pk_f = v_j$, we select a $p_j$: if $v_j \in \{v_{1,0},\ldots, v_{1,n}\}$, then $p_j = p_{1,j}$, or if $v_j\notin\{v_{1,0},\ldots, v_{1,n}\}$, then $p_j = p_{1,n+1}.$ Similarly (treating the default cases uniformly with nondefault cases), we can select the child $p_k$ from $\{p_{2,0},\ldots, p_{2,m+1}\}.$ From this:
        \begin{align*}
          \pk \in \pSem{p_1 \nf{+} p_2} &\iff \pk\in\pSem{p_j \nf{+} p_k} & \text{Definition of $\nf{+}$; Case analysis for $\pk_f$}\\
                                  &\iff \pk\in\pSem{p_j} \cup \pSem{p_k} &\text{Induction hypothesis}\\
                                  &\iff \pk\in\pSem{p_1} \cup \pSem{p_2} &\text{Definitions of $p_1,p_2$}
        \end{align*}

      On the other hand, let $p_1 = \SP(f_1, \{\ldots, v_{1,j} \mapsto p_{1,j},\ldots\}, p_{1,n+1})$, $p_2 = \SP(f_2, \{\ldots, v_{2,k} \mapsto p_{2,k}, \ldots\}, p_{2,m+1})$,  and w.l.o.g., suppose $f_1 \sqsubset f_2$. We see that the result is really by reduction to the previous case, since $p_1$ and $\SP(f_1,\emptyset,p_2)$ \emph{do} match on their top level field ($f_1$). Clearly, $\pSem{p_2} = \pSem{\SP(f_1,\emptyset,p_2)}$, so that this method produces the correct result by the previous argument.
    \setcounter{enumi}{6}
    \item Let $S =\{\pk\in\Pk \mid \forall v\in V\colon \pk[f\mut v] \in \pSem{p}\}$.
      We need to show $\pSem{\NKforall\ f\ p} = S$:
        \begin{itemize}
            \item Base cases: let $p = \top$ or $p = \bot$. Then trivially $\pk\in S$ iff $\pk\in\pSem{\NKforall\ f\ p} = \pSem{p}$.
            \item Let $p = \SP(f, \{v_0\mapsto p_0, \ldots, v_n\mapsto p_n\}, p_{n+1})$. Then $\pk\in S$ iff $\pk\in \pSem{p_i}$ for $i\leq 0 \leq n+1$, since the children (incl. default case) cover all values $v\in V$. But this is the same as $\pk\in \nf{\bigcap}_{i\leq n+1} p_i$.
            \item Let $p = \SP(f', \{v_0\mapsto p_0, \ldots, v_n\mapsto p_n\}, p_{n+1})$, with $f\neq f'$. Then $\pk\in \pSem{p}$ iff $\pk\in\pSem{\NKforall\ f\  p_i}$ for some child $p_i$ of $p$, depending on the value of $\pk_{f_2}$. By the induction hypothesis, this is equivalent to $\pk[f \mut v]\in \pSem{p_i}$ for all $v\in V$. But then we know $\pk[f \mut v]_{f'} = \pk_{f'}$, so this is equivalent to $\pk[f\mut v]\in\pSem{p}$ for all $v\in V$, i.e. that $\pk\in S$, as needed.
        \end{itemize}
  \end{enumerate}
\end{proof}

\section{Operations on Symbolic Packet Programs}\label{app:spp}
In the implementation, there is also datatype defined for SPPs. We define its
operations in this section.

Recall the definition of \SPPn{}:
\begin{align*}
    p \in \SPP \Coloneqq\ &\bot \mid \top \mid
      \underbrace{
        \SPP(f, \{ \ldots, v_i \mapsto \{ \ldots, w_{ij} \mapsto q_{ij}, \ldots \}, \ldots \}, \{ \ldots, w_i \mapsto q_i, \ldots \}, q)
      }_{
        \equiv\ \ \sum_i f \test v_i \cdot \sum_j f \mut w_{ij} \cdot q_{ij}\ +\ \
        (\prod_i f \testNE v_i) \cdot (\sum_i f \mut w_i\ +\ (\prod_i f \testNE w_i) \cdot q)
      }
\end{align*}

As with SPs, we take the semantics of an SPP to be the semantics of its associated \NetKAT expression, shown here.
As before, our $\SPP$ tuple here matches the \SPPn{} object type in the
implementation, and we also only construct \SPPs which are
canonical, in the sense of \Cref{sec:symtrans}. We formalize this
canonicalization property in \Cref{sec:spp-correct}.

First of all, there is a straightforward embedding from SP into SPP:
\[
  \begin{array}{ll}
  \quad\SPPid\colon \SP \to \SPP\\
  \SPPid (\zero) = \zero \qquad \SPPid (\one) = \one\\
  \SPPid(\SP(f, b, d)) = \SPP(f, \{v_i\mapsto\{v_i\mapsto p_i\} \mid v_i\mapsto p_i \in b\}, \emptyset, \SPPid (d))\\
  \end{array}
\]

Observe that by applying the embedding to the SP definition for $f\test v$ and $f\testNE v$, we obtain the SPP forms for them:
\begin{align*}
  (f\test v) &=  \SPP(f, \{v\mapsto\{v\mapsto \one\}\}, \emptyset, \zero) \\
  (f\testNE v) &= \SPP(f, \{v\mapsto\emptyset\}, \emptyset, \one)
\end{align*}

In addition to this embedding, we only need one additional constructor:
\begin{align*}
  (f \mut v) &\triangleq \SPP(f, \emptyset, \{v \mapsto \one\}, \zero)
\end{align*}

The remaining SPPs we will construct from the operations given below.

\subsection{Canonicalization}\label{app:sppsc}

Once again, we define a smart constructor to help maintain the invariant that
all SPPs in memory are canonical:
\begin{align*}
  \sppsc\colon (F\times (V\Mapsto(V\Mapsto\SPP)) \times (V\Mapsto\SPP) \times \SPP) \to \SPP
\end{align*}

Here is the definition:

\begin{align*}
\quad\sppsc(f,b,m,d) &\triangleq\
    \text{if } m' = \emptyset \land b'' = \emptyset \text{ then } d \text{ else } \SPP(f, b'', m', d)\\
    \text{ where}&\\
    m' &= \{ v \sto p \mid v \sto p \in m \land p \neq \bot \}\\
    b' &= \{ v \sto \{ w \sto p \mid w \sto p \in m_i \land p \neq \bot \} \mid v \sto m_i \in b \} \cup \{ v \sto m' \mid v \sto \bot \in m' \land v \notin b \}\\
    b'' &= \{ v \sto m_i \mid v \sto m_i \in b' \land m_i \neq (\text{if } v \in \keys(m') \lor d = \bot \text{ then } m' \text{ else } m' \cup \{ v \sto d \}) \}\\
\end{align*}

\framebox{\parbox{\linewidth}{\textbf{Erratum. } An earlier version of this section contained a mistake in the above definition, on the line assigning $b'$. On the righthand side, it read $v\mapsto\zero$ instead of $v\mapsto m'$. In that version, it was possible to create inputs to the smart constructor whose semantics would be altered by canonicalization.  The previous proof of \Cref{lem:sppscsem} was therefore also wrong; it erroneously treated a case as simpler than it was. The statement and proof of the lemma are updated to match this version. However, the results of the main algorithms were not affected because the inputs which cause the bug are guaranteed not to occur in the way SPPs are built from \NetKAT expressions.}}

\subsection{Operations}

In the rest of this section we give the definitions for the remaining operations used to
build SPPs.
\begin{align*}
    \nf{+}, \nf{\cdot}, \nf{\cap}, \nf{\oplus}, \nf{-}\ :\ \SPP \times \SPP \to \SPP \qquad
    \nf{\star}\ :\ \SPP \to \SPP \qquad
\end{align*}

The base cases are identical to the corresponding operations on SPs, so we omit them here.
Similar to SPs, the inductive case for $\nf{+}, \nf{\cap}, \nf{\oplus},$ and $\nf{-}$ can be expressed at once. We do so here using $\pmop$. The notation $b(v;d)$ means the child $b(v)$ if $v \in \keys(b)$, or by default $d$ otherwise.
\[
  \begin{array}{ll}
    \quad \SPP(f, b_p, m_p, d_p)\pmop \SPP(f, b_q, m_q, d_q) \triangleq  \sppsc(f,b'\!,m_p \mathop{\vec{\pm}} m_q,d_p \pmop  d_q)\\
    \whereindent \text{ where }\  b' = \{v \sto \get{p}{v} \mathop{\vec{\pm}} \get{q}{v} \mid v\in \keys(b_p \cup b_q \cup m_p \cup m_q) \}\\
    \quad m_1 \mathop{\vec{\pm}} m_2 \triangleq \{v \,\sto\, m_1(v;\bot) \pmop m_2(v;\bot) \mid v\in \keys(m_1 \cup m_2)\},\\
    \quad \get{p}{v} \triangleq b_p(v; \text{ if } v \in m_q \lor d_p = \bot \text{ then } m_p \text{ else } m_p \cup \{ v \sto d_p \}) \text{\quad (similarly for $q(v)$)}\\
  \end{array}
\]
Sequencing ($\nf{\cdot}$) is more subtle and needs to be defined separately (using $\vec{\Sigma} \triangleq \text{$n$-ary sum w.r.t. $\vec{+}$}$ as defined above):
\[
  \begin{array}{ll}
    \quad \SPP(f, b_p, m_p, d_p)\cdotop \SPP(f, b_q, m_q, d_q) \triangleq  \sppsc(f,b'\!,m_A \mathop{\vec{+}} m_B, d_p \cdotop  d_q)\\
    \whereindent \text{ where }\  b' = \{v \sto \unionmaps{v'\! \sto p' \in \get{p}{v}}{ \{ w'\! \sto p' \nf{\cdot} q' \mid w'\! \sto q' \in \get{q}{v'}\} } \mid v\in \keys(b_p \cup b_q \cup m_p \cup m_q \cup m_A)\}\\
    \whereindent \phantom{\text{ where }}\  m_A = \unionmaps{v'\! \sto p' \in m_p}{ \{ w'\! \sto p' \nf{\cdot} q' \mid w'\! \sto q' \in \get{q}{v'}\} },
      \quad m_B = \{ w'\! \sto d_p \nf{\cdot} q' \mid w'\! \sto q' \in m_q \} \\
  \end{array}
\]

The expansion rule works similarly to the SP case. Namely, when the fields do not match, we expand the SPP with the greater field by:
\[
    \quad p \equiv \SPP(f, \emptyset, \emptyset, p) \text {\qquad if \qquad} p \in \{\top, \bot, \SPP(f'\!, b, m, d)\} \text{ where } f \sqsubset f'\\
\]

Finally star is defined by:
\[
  \begin{array}{ll}
p^{\nf{\star}} \triangleq \nf{\Sigma}_{i=0,1,\ldots} p^{i}\\
    \whereindent\text{ where } \quad p^0 \triangleq \top \\
    \whereindent\phantom{\text{ where }} \quad p^n \triangleq p \nf{\cdot} p^{n-1} \quad \text{ for }n \geq 1
  \end{array}
\]
In the implementation we perform repeated concatenations until the SPP no longer changes (i.e., the fixpoint is reached).

\subsection{Push and Pull}

First we define two helper functions which ``flatten'' SPPs into SPs:
\begin{align*}
    \mathsf{fwd} :\ \SPP \to \SP \qquad\qquad \mathsf{bwd} :\ \SPP \to \SP
\end{align*}

The idea of $\mathsf{fwd}$ is to collect a symbolic packet representing the set of output packets
for an SPP when the input packets can be any packet:
\[
\begin{array}{ll}
 \quad\mathsf{fwd}(\one) \triangleq \one\\
 \quad\mathsf{fwd}(\zero) \triangleq \zero\\
 \quad\mathsf{fwd}(\SPP(f, b, m, d)) \triangleq \spsc(f, b_A \vecplus b_B \vecplus b_C \vecplus b_D, \mathsf{fwd}(d))\\
 \quad\whereindent \text{ where }\ b_A = \{ v' \sto \mathsf{fwd}(p') \mid v' \sto p' \in m \}\\
 \quad\whereindent \phantom{\text{ where }}\ b_B = \unionmaps{v' \sto m' \in b}{ \{ w' \sto \mathsf{fwd}(p') \mid w' \sto p' \in m' \} }\\
 \quad\whereindent \phantom{\text{ where }}\  b_C = \{ v' \sto \mathsf{fwd}( d) \mid v' \in \keys(b_B) \setminus (\keys(b) \cup \keys(m)) \}\\
 \quad\whereindent \phantom{\text{ where }}\ b_D = \{ v' \sto \bot \mid v' \in \keys(b) \setminus \keys(b_B) \}
\end{array}
\]

Conversely, $\mathsf{bwd}$ gives the set of input packets which, when input to the given SPP, produce some nonempty set of packets as output:
\[
\begin{array}{ll}
 \quad\mathsf{bwd}(\one) \triangleq \one\\
 \quad\mathsf{bwd}(\zero) \triangleq \zero\\
 \quad\mathsf{bwd}(\SPP(f, b, m, d)) \triangleq \spsc(f, b_A \vecplus b_B, d' \plusop \mathsf{bwd}(d))\\
 \quad\whereindent \text{ where }\  d' = \nf{\Sigma}_{v' \sto p' \in m} \mathsf{bwd}(p')\\
 \quad\whereindent \phantom{\text{ where }}\  b_A = \{ v' \sto \nf{\Sigma}_{w' \sto p' \in m'} \mathsf{bwd}(p') \mid v' \sto m' \in b \}\\
 \quad\whereindent \phantom{\text{ where }}\  b_B = \{ v \sto d' \mid v \in \keys(m) \setminus \keys(b) \}
\end{array}
\]

Recall the \sf{push} and \sf{pull} operations which are described in \Cref{sec:symtrans}:
\begin{align*}
    \mathsf{push} :\ \SP \to \SPP \to \SP \qquad\qquad \mathsf{pull} :\ \SPP \to \SP \to \SP
\end{align*}

We conclude the section with their definitions:
\begin{align*}
  &\mathsf{push}\ p\ s \triangleq \mathsf{fwd}((\SPPid\ p) \cdotop s),\\
  &\mathsf{pull}\ s\ p \triangleq \mathsf{bwd}(s \cdotop (\SPPid\ p)),
 \end{align*}

\subsection{Correctness of SPP operations}\label{sec:spp-correct}

Finally, we give proofs that the SPP operations maintain a canonical form and have the expected
semantic meanings.

\begin{definition}[Reduced and Ordered]
  An SPP $p$ is \emph{reduced and ordered for a field $f\in F$} if it satisfies:
  \begin{align*}
    &\ROsppf\colon \SPP \to 2\\
    &\ROsppf (\zero) \triangleq \one \qquad\qquad \ROsppf (\one) \triangleq \one\\
  \end{align*}
  $\ROsppf (\SPP(f',b,m,d)) \triangleq \one$ if all of the following hold:
  \begin{enumerate}
    \item $b\neq\emptyset \vee m\neq\emptyset$
    \item $f \sqsubset f'$
    \item $\forall (v_i\mapsto m_i)\in b.\ (\forall (v_j\mapsto p_j)\in m_i.\ \ROsppx{f'}(p_j) \wedge p_j\neq \zero)$\\
      $\wedge ((m_i\neq m \wedge (v_i\in\keys(m)\vee d=\zero))\vee(m_i\neq(m\cup \{v_i\mapsto d\})\wedge v_i\notin\keys(m)\wedge d\neq\zero))$
    \item $(\forall v\mapsto p_i) \in m.\ \ROsppx{f'}(p_i) \wedge p_i\neq \zero$
    \item $\ROsppx{f'}(d)$
  \end{enumerate}
  $\ROsppf (\SPP(f',b,m,d)) \triangleq \zero$ otherwise.

  More generally, an SPP $p$ is \emph{reduced and ordered} if it satisfies:
  \begin{align*}
    &\ROspp\colon \SPP \to 2\\
    &\ROspp\ (p) \triangleq \exists f \in F\colon \ROsppf\ (p)
  \end{align*}
\end{definition}

Using this definition, we first establish that the smart constructor only produces reduced and ordered SPPs, provided the children in the test and mutation branches are already reduced and ordered.

\begin{lemma}\label{lem:sppscro}
  Let $f\in F, b\in V\Mapsto(V\Mapsto\SPP), m\in V\Mapsto\SPP, d\in\SPP$. If all of the following hold:
  \begin{enumerate}[(a)]
    \item $f \sqsubset f'$,
    \item $\forall v_i\mapsto m_i\in b.\ (\forall v_j\mapsto p_j\in m_i.\ \ROsppx{f'}(p_j))$\\
    \item $\forall v_i\mapsto p_i \in m.\ \ROsppx{f'}(p_i)$
    \item $\ROsppx{f'}(d)$
  \end{enumerate}
      Then $\ROsppf\ (\sppsc(f, b, m, d))$.
\end{lemma}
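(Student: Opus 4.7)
The plan is to unfold the definition of $\sppsc$ and perform a case analysis on whether the smart constructor takes the ``then'' branch (returning $d$ directly, when $m' = \emptyset \land b'' = \emptyset$) or the ``else'' branch (constructing $\SPP(f, b'', m', d)$).

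In the then branch, we return $d$, which satisfies $\ROsppx{f'}(d)$ by assumption (d). Since $f \sqsubset f'$ by assumption (a), a brief monotonicity observation — any SPP whose top-level field is strictly greater than $f'$ also has top-level field strictly greater than $f$ by transitivity of $\sqsubset$ — yields the desired canonical-form property at the lower bound $f$.

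In the else branch I would verify, clause by clause, the five conjuncts of the $\ROspp$ definition applied to $\SPP(f, b'', m', d)$. The non-emptiness conjunct is immediate from the branch condition; the field-ordering conjunct is immediate from the syntactic form. For the conjunct requiring each inner $p_j$ in $b''$'s nested maps to be non-$\bot$ and canonical below $f$, I would unfold the definition of $b'$ — which filters $\bot$-valued entries out of each $m_i$ and, per the erratum, adds explicit $v \sto m'$ branches for the values $v$ that get stripped from $m$ by its $\bot$-filter — and then observe that $b''$ further restricts $b'$ to entries whose inner map differs from the ``implied default.'' The canonical-form property for each surviving $p_j$ then follows from assumption (b) together with the monotonicity argument already used. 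The analogous conjunct for $m'$ follows identically from assumption (c) together with the explicit $p \neq \bot$ filter in the definition of $m'$, and the conjunct for $d$ follows from assumption (d) and monotonicity.

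The hard part will be the non-redundancy conjunct for $b''$: the $\ROspp$ definition splits the ``implied default'' into two forms — $m'$ itself when $v \in \keys(m')$ or $d = \bot$, and $m' \cup \{v \sto d\}$ otherwise — and the corresponding filter in $b''$ has been engineered precisely to exclude entries matching whichever form applies to the given $v$. The argument thus amounts to a careful matching of the $b''$ filter condition against this two-way case split. The erratum's correction is exactly what makes that matching go through: if $b'$ had inserted $v \sto \emptyset$ rather than $v \sto m'$ for the values dropped by the $\bot$-filter on $m$, canonicalization would silently alter semantics at those values, and no filter could simultaneously preserve meaning and enforce the non-redundancy invariant. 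Once this conjunct is verified, the remaining bookkeeping is routine.
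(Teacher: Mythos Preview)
Your proposal takes essentially the same approach as the paper: case-split on the two branches of $\sppsc$ and, in the else branch, verify the five conjuncts of the $\ROspp$ definition one by one, appealing to the filters built into $m'$, $b'$, $b''$ for the non-$\bot$ and non-redundancy clauses and to assumptions (b)--(d) for the recursive canonical-form clauses. You are slightly more thorough than the paper in the ``then'' branch (the paper only names the $\bot$/$\top$ outcomes, whereas you correctly invoke monotonicity for general $d$) and you unpack the non-redundancy matching that the paper dispatches with ``matches precisely what is checked by the construction of $b''$,'' but the structure is the same; one minor caveat is that your closing remark ties the erratum to this lemma, whereas the erratum's effect is really on semantic preservation (\Cref{lem:sppscsem}) rather than on the $\ROspp$ invariant established here.
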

\begin{proof}
  Clearly if $\sppsc(f, b, m, d) = \zero$ or $\sppsc(f, b, m, d) = \one$ then the result holds trivially.
  So suppose $\SPP(f, b'', m', d) = \sppsc(f, b, m, d)$. Note the variable names are chosen to match those in the definition of $\sppsc$). We establish the criteria of $\ROspp$ one at a time, following the list in the definition.
  \begin{enumerate}
    \item Holds because both $b''$ and $m'$ being empty is prevented by the outermost if statement in $\sppsc$.
    \item By assumption (a).
    \item Let $(v_i\mapsto m_i)\in b''$ and let $v_j \mapsto p_j\in m_i$. We have $\ROsppx{f'}(p_j)$ by assumption (b). In addition, $p_j \neq \zero$ because such bindings are filtered out in the construction of $b'$, and the construction of $b''$ only removes more bindings (not adding any). The remaining condition matches precisely what is checked by the construction of $b''$.
    \item Let $(v_i\mapsto p_i)\in m'$. That we have $\ROsppx{f'}(p_i)$ holds by assumption (c). That $p_i\neq\zero$ is true because such bindings are removed by the construction of $m'$ from $m$.
    \item By assumption (d).
  \end{enumerate}
\end{proof}

Next, it follows that all of the operations maintain the property of being reduced and ordered.

\begin{corollary}\label{lem:sppromaint}
  If for SPPs $p_1,p_2$, we have $\ROspp(p_1)$ and $\ROspp(p_2)$, then all of the following hold:
  \begin{enumerate}
    \item $\ROspp(p_1 \nf{+} p_2)$,
    \item $\ROspp(p_1 \nf{\cdot} p_2)$,
    \item $\ROspp(p_1 \nf{\cap} p_2)$,
    \item $\ROspp(p_1 \nf{-} p_2)$,
    \item $\ROspp(p_1 \nf{\oplus} p_2)$,
    \item $\ROspp(p_1^{\nf{\star}})$,
  \end{enumerate}
\end{corollary}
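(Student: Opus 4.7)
The plan is to prove all six claims simultaneously by structural induction on the pair $(p_1, p_2)$, in each case reducing to Lemma~\ref{lem:sppscro}. The base cases, where either argument is $\top$ or $\bot$, follow directly from the base-case tables (which are identical to those in \Cref{fig:spops}) since the results are themselves $\top$ or $\bot$, and these satisfy $\ROspp$ trivially.

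For the inductive cases of the five binary operators, I would first handle the field-mismatch case via the expansion rule. If $p_1 = \SPP(f_1, b_1, m_1, d_1)$ and $p_2 = \SPP(f_2, b_2, m_2, d_2)$ with $f_1 \sqsubset f_2$, we semantically rewrite $p_2$ as $\SPP(f_1, \emptyset, \emptyset, p_2)$ and then apply the matching-field case. This intermediate form is non-canonical, but that does not matter: what we ultimately invoke is $\sppsc$, which will restore canonicity. In the matching-field case, all recursive calls are made on pairs whose fields are strictly greater than the current $f$—i.e., each call is on SPPs satisfying $\ROsppx{f'}$ for some $f' \sqsupset f$. By the induction hypothesis, the outputs of these recursive calls are themselves $\ROsppx{f'}$. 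We then invoke Lemma~\ref{lem:sppscro}: premise~(a) holds because $f \sqsubset f'$ was already a property of the original input SPPs, and premises~(b), (c), (d) hold because all inductively computed children are $\ROsppx{f'}$. This gives $\ROsppf$ of the result.

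The main obstacle is sequential composition $\nf{\cdot}$, whose construction is substantially more intricate: besides the test branches $b'$, we must build the combined mutation map $m_A \mathop{\vec{+}} m_B$, which requires applying $\nf{+}$ to SPPs that were themselves built using $\nf{\cdot}$. The induction therefore cannot be carried out separately for each operator; it must be simultaneous, either by a joint induction across all five binary operators or by a suitable lexicographic order on the term structure. Once this mutual induction is set up, each recursive invocation of $\nf{\cdot}$ or $\nf{+}$ again lands on pairs of SPPs whose top fields are strictly greater than $f$, so the same application of Lemma~\ref{lem:sppscro} establishes the canonicity of the composite SPP returned by $\sppsc$.

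Finally, for $p_1^{\nf{\star}}$, I would argue by induction on $n$ that each partial sum $\nf{\Sigma}_{i=0}^{n} p_1^{\,i}$ is $\ROspp$: the base case $\top$ is trivial, and the inductive step uses closure of $\ROspp$ under $\nf{\cdot}$ and $\nf{+}$ just established. Since star is computed by iterating these partial sums until the SPP stops changing (the fixpoint), and each iterate is $\ROspp$, the stable limit is $\ROspp$ as well.
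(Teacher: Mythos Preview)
Your proposal is correct and follows essentially the same approach as the paper: reduce to Lemma~\ref{lem:sppscro} by observing that every SPP produced by the operations comes from the smart constructor, whose premises are guaranteed by the $\ROspp$ assumptions on the inputs and the field-order discipline; handle the expansion case by noting that although the expanded form is not itself $\RO$, it immediately feeds into the matching-field case where $\sppsc$ is invoked. The paper's proof is considerably terser and does not spell out the structural induction, the mutual dependence of $\nf{\cdot}$ on $\nf{+}$, or the separate treatment of $\nf{\star}$ via partial sums---your version makes these explicit, which is an improvement in clarity rather than a difference in strategy.
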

\begin{proof}
  Because we only use the smart constructor to create SPPs in the definitions of all of the operations,
  we only need to show that the premises of \Cref{lem:sppscro} are satisfied when the smart constructor is used.
  Indeed, these premises are guaranteed by $\ROspp(p_1)$, and $\ROspp(p_2)$, and comparisons of $f, f'$ in the definitions of the operations. We only need to be slightly careful in the cases where the smart constructor is \emph{not} used (since the expansion form is not RO), but each of these is a direct reduction to a case where the smart constructor is used. Note that the default case of the expansion form is still RO by assumption, and the expansion rule itself maintains the field order.
\end{proof}

The main benefit of canonicalization is that, after establishing reduced an ordered form, we can check semantic equivalence by checking syntactic equivalence, which we show next.

\begin{theorem}
  For all $p_1, p_2\in \SP$, if $\ROspp (p_1)$ and $\ROspp (p_2)$, then $p_1 = p_2 \iff \pSem{p_1} = \pSem{p_2}$.
\end{theorem}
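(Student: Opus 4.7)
The forward direction $p_1 = p_2 \Rightarrow \pSem{p_1} = \pSem{p_2}$ is immediate because $\pSem{\cdot}$ is a function. For the converse, I would prove the contrapositive $p_1 \neq p_2 \Rightarrow \pSem{p_1} \neq \pSem{p_2}$ by simultaneous structural induction on $p_1$ and $p_2$, following the same shape as the SP uniqueness proof but accounting for the richer SPP structure in which each node carries both a test map $b$ and a mutation map $m$.

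After handling the trivial base case $(p_1, p_2) = (\top,\bot)$, I would treat the mixed cases $p_1 \in \{\top,\bot\}$ and $p_2 = \SPP(f, b, m, d)$ by exploiting condition~(1) of $\ROsppf$, which guarantees $b \cup m \neq \emptyset$. When $m$ is nonempty, the non-trivial mutation it induces already distinguishes $p_2$ from $\top$ (which is the identity) and from $\bot$ (which is empty). When instead $b$ is nonempty, condition~(3) forces some $v_i \mapsto m_i \in b$ to differ from the default behavior $m$ (or $m \cup \{v_i \mapsto d\}$); packets whose $f$-field equals $v_i$ will then see a behavior not matched by the $\top$ or $\bot$ interpretation. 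For the genuine inductive case with matching top field $f$ and $p_1 = \SPP(f, b_1, m_1, d_1) \neq \SPP(f, b_2, m_2, d_2) = p_2$, I would split into three subcases. (i) If $d_1 \neq d_2$, invoke the IH to get a witness $\alpha \in \pSem{d_1} \oplus \pSem{d_2}$ and, using the assumption from \Cref{sec:values}, pick a value $v^*$ outside $\keys(b_1 \cup b_2 \cup m_1 \cup m_2)$; then on the input $\alpha[f \mut v^*]$ both SPPs reduce to their default branches and disagree. (ii) If $m_1 \neq m_2$, some key $w$ witnesses the difference; using the IH on the corresponding children (or on $\bot$ if the key is present only on one side), the disagreement propagates to an output packet carrying $f = w$ that is produced by exactly one side when the input $f$-value avoids $\keys(b_1) \cup \keys(b_2)$. (iii) If only $b_1 \neq b_2$, the distinguishing key $v$ either sits in exactly one side — and the RO condition (3) guarantees the ``present'' side differs from the default behavior which is what the other side executes at $v$ — or sits in both with different values $m_{1,i} \neq m_{2,i}$, reducing once more to the IH. Finally, the different-top-field case $f_1 \sqsubset f_2$ mirrors the SP proof: by the RO conditions $p_2$ cannot test $f_1$, so I construct two packets differing only in $f_1$ that $p_1$ separates (via a child $p_{1,i}$ distinct from $p_{1,n+1}$, obtainable by the IH) but $p_2$ cannot, using the auxiliary lemma analogous to \Cref{lem:const-f} saying $\ROsppx{f'}(p)$ with $f \sqsubset f'$ implies $p$'s semantics is invariant under changing the $f$-field of the input and unchanged in the $f$-field of each output.

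The main obstacle I anticipate is subcase (iii) above, where the reduced form's rule ``a branch $m_i$ is removed iff it equals the default behavior $m$ (or $m \cup \{v_i \mapsto d\}$) at value $v_i$'' must be translated into a concrete witness packet. The SPP default behavior at a test value $v$ is not simply $m$ because the identity case depends on whether $v \in \keys(m)$ and whether $d = \bot$; I would isolate this into a helper lemma stating that $\pSem{\get{p}{v}}$ is the exact semantics of $p$ on inputs with $f = v$, and then show that two different ``effective'' branch maps at $v$ must produce different output packet sets on some $\alpha$ with $\alpha_f = v$. Once this helper is in place, the three subcases become relatively mechanical; everything else is a careful but essentially parallel adaptation of the SP proof, with the lemma \Cref{lem:sppromaint} ensuring the induction stays within RO SPPs.
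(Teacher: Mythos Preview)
Your proposal is essentially correct and follows the same approach as the paper: contrapositive by simultaneous structural induction, with the inductive SPP--SPP case split according to which component ($d$, $m$, $b$, or the top field) first witnesses the syntactic difference, and the helper you sketch around $\get{p}{v}$ is exactly the reasoning the paper carries out inline.

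One point deserves more care than you indicate. In the different-top-field case $f_1 \sqsubset f_2$, your construction ``two input packets differing only in $f_1$ that $p_1$ separates (via a child $p_{1,i}$ distinct from $p_{1,n+1}$)'' is too SP-flavored and does not cover the situation where $p_1$ \emph{mutates} $f_1$ without meaningfully testing it. For instance, if $p_1 = (f_1 \mut v)$ then $b_1 = \emptyset$, $m_1 = \{v \mapsto \top\}$, and $\Sem{p_1}(\alpha) = \{\alpha[f_1 \mut v]\}$ is the same for every input, so no pair of inputs differing only in $f_1$ is separated by $p_1$. The paper handles this by first checking whether $m_1 \neq \emptyset$: if so, $p_1$ produces an output whose $f_1$-value differs from the input's, which $p_2$ (being $\ROsppx{f_2}$) can never do. This is precisely the ``unchanged in the $f$-field of each output'' half of your auxiliary lemma, so you already have the right ingredient --- you just need to invoke it as an \emph{output-side} distinguisher, not only an input-side one. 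Only when $m_1 = \emptyset$ does the paper fall back to an SP-style argument using some $v_i \mapsto m_i \in b_1$, and even there a further split on $d_1 = \bot$ versus $d_1 \neq \bot$ is needed to exploit condition~(3) of $\ROsppf$.
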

\begin{proof}
  Suppose $\ROspp(p_1)$ and $\ROspp(p_2)$.
  ($\Rightarrow$) is immediate because $\Sem{\cdot}$ is a function. ($\Leftarrow$). Suppose $p_1 \neq p_2$ (syntactically). We will show $\Sem{p_1} \neq \Sem{p_2}$ by induction on the structure of $p_1, p_2$:
  \begin{enumerate}
    \item Let $p_1=\zero$, and let $p_2=\one$. Then $\Sem{p_1}=\emptyset$ and $\pk\in\Sem{p_2}(\pk)$ for all $\pk\in\Pk$.
    \item Let $p_1=\zero$, and let $p_2=\SPP(f,b,m,d)$.
      We only need to show that $\Sem{p_2}$ is nonempty. First of all, if $d\neq\zero$, we are done, since by induction this means $\Sem{d}\neq\emptyset$, and moreover $\pkp\in\Sem{d}(\pk)$ means that $\pkp\in\Sem{p_2}(\pk[f\mut v])$ for any $v\notin\keys(b)\cup\keys(m)$. So assume $d=\zero$. Either $b\neq\emptyset$ or $m\neq\emptyset$. If $m\neq\emptyset$ we are done, because $v\mapsto p \in m$ means that $p\neq\zero$, and thus by the induction hypothesis that there are $\pk,\pkp\in\Pk$ for which $\pkp\in\Sem{p}$. It follows that $\pkp[f\mut v]\in\Sem{p_2}(\pk)$. So assume $m = \emptyset$. Then there is some $(v_i\mapsto m_i)\in b$. Now because $d=\zero$ and $m=\emptyset$, item (3) of the definition of $\ROsppf$ says that $m_i\neq\emptyset$. Thus there is some $v_j\mapsto p_j\in m_i$ for which $p_j\neq\zero$. So there are packets $\pk,\pkp$ such that $\pkp\in\Sem{p_j}(\pk)$, and thus $\pkp[f\mut v_j]\in\Sem{p_2}(\pk[f\mut v_i])$.
    \item Let $p_1=\one$, and let $p_2=\SPP(f,b,m,d)$. We need to show $p_2$ modifies $(\pkp\in\Sem{p_2}(\pk)$ for $\pkp\neq\pk$) or drops a packet ($\Sem{p_2}(\pk)=\emptyset$).
      If $d\neq\one$, then by the induction hypothesis, either $\pkp\in\Sem{d}(\pk)$ for $\pkp\neq\pk$ or $\Sem{d}(\pk)=\emptyset$. Then for $\pk[f\mut v]$ for $v\notin\keys(b)\cup\keys(m)$, we have either $\pkp[f\mut v]\in\Sem{p_2}(\pk[f\mut v])$ with $\pkp[f\mut v]\neq \pk[f\mut v]$ or $\Sem{p_2}(\pk[f\mut v])=\emptyset$, respectively. So assume $d=\one$. If $(v_i\mapsto p_i)\in m$, then since $p_i\neq\zero$, we have $\pk[f\mut v_i]\in\Sem{p_2}(\pk[f\mut v_j])$ for any $v_j\neq v_i$. So assume $m=\emptyset$. Since then $b\neq\emptyset$,  there is some $(v_i\mapsto m_i)\in b$. From item (3) of the definition of $\ROsppf$, we know $m_i\neq\{v_i\mapsto\one\}$. There are several subcases based on what $m_i$ can be:
      \begin{itemize}
        \item If $m_i=\emptyset$, we are done since then $\Sem{p_2}(\pk[f\mut v_i])=\emptyset$.
        \item If $m_i=\{v_j\mapsto\one,\ldots\}$, with $v_j\neq v_i$. Then $\pk[f\mut v_j]\in\Sem{p_2}\pk[f\mut v_i]$ (for any $\pk$).
        \item If $m_i = \{v_j \mapsto p_j,\ldots\}$ with $p_j\neq\one$ (but possibly $v_i=v_j$), then we have by the induction hypothesis either a packet $\pk$ such that $\Sem{p_j}(\pk)=\emptyset$ (in which case $\Sem{p_2}(\pk[f\mut v_i])=\emptyset$), or some $\pkp[f\mut v_j]\neq\pk[f\mut v_j]$ such that $\pkp\in\Sem{p_j}(\pk)$ (in which case $\pkp[v\mut v_j]\in\Sem{p_2}(\pk[f\mut v_i]$).
      \end{itemize}
    \item Let $p_1=\SPP(f_1,b_1,m_1,d_1)$, and let $p_2=\SPP(f_2,b_2,m_2,d_2)$. We proceed by the cases in which $p_1$ and $p_2$ may differ syntactically, and show why each leads to a difference in semantics:
      \begin{itemize}
        \item Suppose $f_1\sqsubset f_2$, and suppose $v_i\mapsto p_i\in m_1$. Then $p_i\neq\zero$, meaning there are packets $\pk,\pkp$ for which $\pkp\in\Sem{p_i}(\pk)$. Pick $v_j\notin\keys(b_1)\cup\keys(m_1)$. Then
          $\pkp[f_1\mut v_i]\in\Sem{p_1}(\pk[f_1\mut v_j])$, but $p_2$ cannot modify $f_1$ at all, so we must have $\pkp[f_1\mut v_i]\notin\Sem{p_2}(\pk[f_1\mut v_j])$. On the other hand suppose $m_1=\emptyset$. Then there must be $v_i\mapsto m_i\in b_1$. Now we have two cases, in item (3) of $\ROspp (p_1)$:
          \begin{itemize}
            \item Let $d_1=\zero$. Then $m_i\neq\emptyset$. Let $v_j\mapsto p_j\in m_i$. Note $p_j\neq\zero$.
              Then for $\pk,\pkp$ such that $\pkp\in\Sem{p_j}(\pk)$, we have $\pkp[f_1\mut v_j]\in\Sem{p_1}(\pk[f_1\mut v_i])$. If $v_i\neq v_j$, we are done, since $p_2$ cannot modify $f_1$. Otherwise let $v_j'\notin\keys(b_1)\cup\keys(m_1)$. Then because $d_1=\zero$, $\Sem{p_1}(\pk[f_1\mut v_i])\neq\emptyset=\Sem{p_1}(\pk[f_1\mut v_j'])$. But $p_2$ cannot test $f_1$, so we must have $\Sem{p_2}(\pk[f_1\mut v_i])=\Sem{p_2}(\pk[f_1\mut v_j'])$.
            \item Let $d_1\neq\zero$. Then $m_i\neq\{v_i\mapsto d_1\}$. This means for any $v_j\notin\keys(b_1)\cup\keys(m_1)$, there must be some $\pk,\pkp$ such that
              $\pkp[f_1\mut v_i] \in \Sem{p_1}(\pk[f_1\mut v_i]) \iff \pkp[f_1\mut v_j]\notin\Sem{p_1}(\pk[f_1\mut v_j])$. But $p_2$ cannot test $f_1$, so we are done.
          \end{itemize}
          For the remaining cases, we let $f = f_1 = f_2$.
        \item Suppose $d_1\neq d_2$.
          Then by the induction hypothesis there is a packet $\pk$ for which $\Sem{d_1}(\pk)\neq\Sem{d_2}(\pk)$. We just pick a value $v\notin\keys(b_1)\cup\keys(m_1)\cup\keys(m_2)\cup\keys(b_2)$, and we get $\Sem{p_1}(\pk[f\mut v])\neq\Sem{p_2}(\pk[f\mut v])$.
        \item Suppose $\keys(m_1)\neq\keys(m_2)$.
          Then (w.l.o.g.) let $v_i\mapsto p_i\in m_1$ and $v_i\notin\keys(m_2)$, and choose $v\notin\keys(b_1)\cup\keys(m_1)\cup\keys(b_2)\cup\keys(m_2)$. Then since $p_i\neq\zero$, there are $\pk,\pkp\in\Pk$ such that $\pkp\in\Sem{p_i}(\pk)$. This means $\pkp[f\mut v_i]\in\Sem{p_1}(\pk[f\mut v])$, but $\pkp[f\mut v_i]\notin\Sem{p_2}(\pk[f\mut v])$.
        \item Suppose $v\mapsto p_i\in m_1$, while $v\mapsto p_j\in m_2$, and $p_i\neq p_j$.
          By the induction hypothesis, there is a $\pk\in\Pk$ for which $\Sem{p_i}(\pk)\neq\Sem{p_j}(\pk)$. Pick $v\notin\keys(b_1)\cup\keys(b_2)\cup\keys(m_1)\cup\keys(m_2)$. Then $\Sem{p_1}(\pk[f\mut v]) \neq \Sem{p_2}(\pk[f\mut v])$.
        \item Assume $d_1=d_2$ and $m_1=m_2$, but suppose $\keys(b_1)\neq\keys(b_2)$.
          Then (w.l.o.g.) let $v_i\mapsto m_i\in b_1$ and $v_i\notin\keys(b_2)$.
          Now there are two cases depending on whether $v_i\in\keys(m_1)\vee d_1=\zero$, in which case we know $m_i\neq m_1$. But since $m_1 = m_2$, this means $m_i\neq m_2$. Following the reasoning of the previous two cases (different keys or different SPP children), we find $\Sem{p_1}(\pk[f\mut v_i])\neq\Sem{p_2}(\pk[f\mut v_i])$.
          On the other hand, $v_i\notin\keys(m_1)\wedge d_1\neq\zero$, in which case $m_i\neq(m_1\cup\{v_i\mapsto d_1\})$. Since $m_1 = m_2$ and $d_1 = d_2$, this means that also $m_i\neq(m_2\cup\{v_i\mapsto d_2\})$.
          There are 4 subcases for how this inequality can happen:
          \begin{itemize}
            \item Suppose $v_i\notin\keys(m_i)$. Then there are $\pk,\pkp$ for which $\pkp\in\Sem{d_2}(\pk)$ meaning $\pkp[f\mut v_i]\in\Sem{p_2}(\pk[f\mut v_i])$, but $\pkp[f\mut v_i]\notin\Sem{p_1}(\pk[f\mut v_i])$.
            \item Otherwise $v_i\mapsto p\in m_i$ for $p\neq d_2$. By the induction hypothesis, there is a $\pk\in\Pk$ for which $\Sem{p}(\pk)\neq\Sem{d_2}(\pk)$. But then $\Sem{p_1}(\pk[f\mut v_i])\neq\Sem{p_2}(\pk[f\mut v_i])$.
          \item The other two subcases are different keys or different SPP children for $m_i$ and $m_2$, but then the counterexample packets are constructed similarly to the previous cases.
          \end{itemize}
        \item Suppose $v\mapsto m_i\in b_1$, while $v\mapsto m_j\in b_2$, and $\keys(m_i)\neq\keys(m_j)$.
          Then (w.l.o.g.) let $v'\mapsto p\in m_i$, but $v'\notin\keys(m_j)$. From $\ROspp(p_1)$, we know $p\neq\zero$. So there are packets $\pk,\pkp\in\Pk$ such that $\pkp\in\Sem{p}(\pk)$. But then $\pkp[f\mut v']\in\Sem{p_1}(\pk[f\mut v])$ but
          $\pkp[f\mut v']\notin\Sem{p_2}(\pk[f\mut v])$ because $p_2$ does not have an assignment to $v'$ in the $v$ branch for $f$.
        \item Suppose $v\mapsto m_i\in b_1$, while $v\mapsto m_j\in b_2$, and
          $v'\mapsto p_i\in m_i$, while $v'\mapsto p_j\in m_j$ for $p_i\neq p_j$.
          By the induction hypothesis, for some packet $\pk\in\Pk$, we have $\Sem{p_i}(\pk)\neq\Sem{p_j}(\pk)$.
          But then $\Sem{p_1}(\pk[f\mut v])\neq\Sem{p_2}(\pk[f\mut v])$.
      \end{itemize}
  \end{enumerate}
\end{proof}

The rest of the results have to do with the semantics of SPPs. First we justify the ``expansion'' construction used in several operations to simplify the construction:
\begin{lemma}\label{lem:expansion}
Suppose $f_1,f_2\in F$, with $f_1 \sqsubset f_2$, and $p = \SPP(f_2, b, m, d)$. Then
$\Sem{\SPP(f_1, \emptyset, \emptyset, p))} = \Sem{p}$.
\end{lemma}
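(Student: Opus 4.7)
The plan is to prove this lemma by a short semantic calculation that unfolds the syntactic translation of SPPs given in Section~\ref{sec:symtrans} and then applies elementary KAT identities. The hypothesis $f_1 \sqsubset f_2$ and the structural shape of $p$ play no role in the semantic argument itself; they matter only because this expansion is used elsewhere to temporarily break the ordering invariant, and the condition ensures that after recursion the invariant is restored. So for this lemma I would ignore those hypotheses and prove the stronger statement: $\Sem{\SPP(f_1, \emptyset, \emptyset, p)} = \Sem{p}$ for any $p \in \SPP$ and any $f_1 \in F$.

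The key observation is that the translation of an $\SPP$ node specializes, when both the test-branch map and the mutation map are empty, to a trivial expression. Concretely, instantiating
\[
  \SPP(f, b, m, d) \ \equiv\ \sum_i f \test v_i \cdot \sum_j f \mut w_{ij} \cdot q_{ij}\ +\ \bigl(\textstyle\prod_i f \testNE v_i\bigr) \cdot \bigl(\sum_i f \mut w_i\ +\ \bigl(\textstyle\prod_i f \testNE w_i\bigr) \cdot q\bigr)
\]
with $b = \emptyset$ and $m = \emptyset$ makes every indexed sum empty and every indexed product empty. Empty sums evaluate to $\bot$ and empty products to $\top$, so the entire expression collapses to $\bot + \top \cdot (\bot + \top \cdot p)$.

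I would then finish with a one-line semantic calculation using the clauses in Figure~\ref{fig:synsem}: $\Sem{\bot + e}(\alpha) = \Sem{e}(\alpha)$ since $\Sem{\bot}(\alpha) = \emptyset$, and $\Sem{\top \cdot e}(\alpha) = \Sem{e}(\alpha)$ since $\Sem{\top}(\alpha) = \{\alpha\}$ acts as a left identity for sequential composition. Applying these twice rewrites $\bot + \top \cdot (\bot + \top \cdot p)$ to $p$ at the level of denotations, yielding $\Sem{\SPP(f_1, \emptyset, \emptyset, p)} = \Sem{p}$ as required.

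I do not expect any real obstacle here; the only thing to be careful about is making explicit that empty indexed sums and products are $\bot$ and $\top$ respectively (rather than undefined), which is implicit in the notation used throughout the paper. Once that convention is stated, the proof is three lines of rewriting with no induction needed.
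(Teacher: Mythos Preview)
Your proof is correct and essentially the same as the paper's: both observe that with empty test and mutation maps the SPP node does nothing but pass through to the default case $p$. The paper phrases this operationally (``$\pk_{f_1}\notin\keys(\emptyset)$, so we neither test nor set $f_1$''), while you phrase it algebraically by collapsing the translated NetKAT expression via empty-sum/empty-product conventions and the identities for $\bot$ and $\top$; these are two renderings of the same one-line argument, and your remark that the hypotheses on $f_1 \sqsubset f_2$ and the shape of $p$ are semantically inert is also correct.
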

\begin{proof}
  For any $\pk\in\Pk$, clearly $\pk_{f_1}\notin\keys(\emptyset)$ (i.e., we neither test or set $f_1$), and it follows that $\pkp\in\Sem{p}(\pk)$ iff $\pkp\in{\Sem{\SPP(f_1,\emptyset,\emptyset, p)}}(\pk)$.
\end{proof}

Next, we show the SPP smart constructor does not change the semantics, compared to the plain datatype constructor:

\begin{lemma}[Semantic equivalence of SPP smart constructor]\label{lem:sppscsem}
  For all $f\in F, b\in(V\Mapsto (V\Mapsto\SPP)), m\in(V\Mapsto\SPP), d\in\SPP$, we have that
  $\Sem{\SPP(f,b,m,d)} = \Sem{\sppsc(f,b,m,d)}$.
\end{lemma}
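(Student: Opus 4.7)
The plan is to reduce the claim to a pointwise check: for every input packet $\pk$, show $\pSem{\SPP(f,b,m,d)}(\pk) = \pSem{\sppsc(f,b,m,d)}(\pk)$. Unfolding the SPP semantics with $\pk_f = v$ splits the output set into three disjoint subcases:
\begin{enumerate}
\item if $v \in \keys(b)$, the output equals $\bigcup_{w \in \keys(b(v))} \pSem{b(v)(w)}(\pk[f\mut w])$;
\item if $v \notin \keys(b)$ and $v \in \keys(m)$, the output equals $\bigcup_{w \in \keys(m)} \pSem{m(w)}(\pk[f\mut w])$ (no identity fallback, since $v \in \keys(m)$);
\item if $v \notin \keys(b) \cup \keys(m)$, the output equals $\bigcup_{w \in \keys(m)} \pSem{m(w)}(\pk[f\mut w]) \cup \pSem{d}(\pk)$.
\end{enumerate}

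I would then analyze $\sppsc$ in terms of three local transformations and argue each preserves the output set. First, dropping $w \mapsto \bot$ bindings from $m$ (yielding $m'$) or from an inner $m_v$ (when constructing $b'$) is harmless, since $f \mut w \cdot \bot = \bot$ contributes nothing to any union. Second, the filter producing $b''$ discards entries $v \mapsto m_v$ of $b'$ exactly when $m_v$ coincides with the default behavior the SPP would compute at input $f = v$ (namely $m'$ if $v \in \keys(m')$ or $d = \bot$, and $m' \cup \{v \mapsto d\}$ otherwise); removing such a $v$ shifts an input $\pk_f = v$ from subcase (1) to subcase (2) or (3), and those two yield the same union by construction. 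Third, in the if-branch of $\sppsc$, the hypotheses $m' = \emptyset$ and $b'' = \emptyset$ force all non-deterministic contributions to vanish and all surviving explicit branches to agree with the identity default, so every input yields exactly $\pSem{d}(\pk)$.

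The delicate step — and the source of the earlier erratum — is the interaction between subcases (2) and (3). When $m(v) = \bot$ and $v \notin \keys(b)$, input $\pk_f = v$ originally falls under subcase (2) and produces no identity term; but after dropping $v$ from $m'$, a naive simplification would place the same input under subcase (3) of the new SPP and spuriously add $\pSem{d}(\pk)$. The corrected rule adds $v \mapsto m'$ to $b'$ for precisely these $v$, so the input lands back in subcase (1) of the new SPP with branch $m'$, producing exactly the original non-deterministic union. I would verify this by partitioning the value space into four regions — $v \in \keys(b)$, $v \in \keys(m') \setminus \keys(b)$, $v \in \keys(m) \setminus (\keys(b) \cup \keys(m'))$, and $v \notin \keys(b) \cup \keys(m)$ — and comparing the outputs directly in each region.

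The main obstacle is the bookkeeping across the nested definitions of $b'$ and $b''$, especially tracking which $v$'s migrate between subcases. Once the case split is set up, each pointwise comparison reduces to the identity $\pSem{\bot}(\pk) = \emptyset$ together with routine manipulation of set unions; no induction on SPP structure is needed, because the lemma is a one-level statement about the smart constructor and takes the semantics of all recursive children as given.
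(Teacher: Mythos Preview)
Your proposal is correct and follows essentially the same approach as the paper: both reduce to a pointwise case analysis on $v = \pk_f$, argue that dropping $\bot$ bindings, filtering redundant branches, and collapsing to $d$ each preserve semantics, and both single out the erratum case (where $m(v)=\bot$, $v\notin\keys(b)$) as the delicate step requiring the added $v\mapsto m'$ binding in $b'$. The only cosmetic difference is that the paper chains three intermediate equalities $\SPP(f,b,m,d)\to\SPP(f,b',m',d)\to\SPP(f,b'',m',d)\to d$, whereas you propose a direct four-region partition of the input value; the case content is the same.
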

\begin{proof}
  We argue in three steps, referring to the local variables in the $\sppsc$ definition:

  \begin{itemize}
    \item First, $\Sem{\SPP(f,b,m,d)} = \Sem{\SPP(f,b',m',d)}$

      ($\subseteq$).
      Let $\pk,\pkp\in\Pk$, and suppose $\pkp\in\Sem{\SPP(f,b,m,d)}(\pk)$.
      Let $v = \pk_f$ and let $v' = \pkp_f$.
      Otherwise there are three cases:
      \begin{enumerate}[(a)]
        \item Suppose $v \in \keys(b)$. Clearly since $\pkp\in\Sem{\SPP(f,b,m,d)}(\pk)$, we know
        $b[v][v'] \neq \bot$. This means that $b'[v][v'] = b[v][v']$, and thus that
        $\pkp \in\Sem{\SPP(f,b',m',d)}(\pk)$.
        \item Suppose $v \in \keys(m)\setminus \keys(b)$. Then $v' \mapsto s\in m$ and
                    $\pkp\in\Sem{s}(\pk[f\mut v'])$.
              Thus $s\neq\bot$, and we so must have $v' \mapsto s \in m'$, and therefore
              that $\pkp\in\Sem{\SPP(f,b',m',d)}(\pk)$.
        \item Finally if $v \notin \keys(b) \cup \keys(m)$, then $\pkp\in\Sem{d}(\pk)$ (in which case we are done), or $\pkp\in\Sem{s}(\pk[f\mut v'])$ for some $v'\mapsto s\in m$. By the same reason in the previous case,
        $v'\mapsto s \in m'$, and thus $\pkp\in\Sem{\SPP(f,b',m',d)}(\pk)$.
      \end{enumerate}

      ($\supseteq$).
      Let $\pk,\pkp\in\Pk$, and suppose $\pkp\in\Sem{\SPP(f,b',m',d)}(\pk)$.
      Let $v = \pk_f$ and let $v' = \pkp_f$.
      Again there are three cases:
      \begin{enumerate}[(a)]
      \item Suppose $v \in \keys(b')$. There are two subcases:
        \begin{enumerate}[(i)]
        \item Suppose $v'\in\keys(b'[v])$ and $b'[v][v']\neq\bot$
            (i.e. $v$ is in the left addend of $b'$), then by construction
            $b[v][v']\neq\bot$ and thus $\pkp\in\Sem{\SPP(f,b,m,d)}(\pk)$.
        \item Otherwise it must be that $v \mapsto \bot \in m'$ and $v'\in\keys(m')$.
            This, in turn, means $v'\mapsto s \in m$ and $s\neq\bot$.
            We conclude $\pkp\in{\Sem{\SPP(f,b,m,d)}}(\pk)$.
        \end{enumerate}
      \item Suppose $v\in\keys(m')\setminus\keys(b')$. Then $v'\in\keys(m')$,
            and the same reasoning as Case (a)(ii) applies.
      \item Finally if $v\notin\keys(b')\cup\keys(m')$, then either $v'\in\keys(m')$,
          in which case the reasoning of Case (a)(ii) applies, or $\pkp\in\Sem{d}(\pk)$,
          which also means $\pkp\in{\Sem{\SPP(f,b,m,d)}}(\pk)$.
      \end{enumerate}



    \item Next, $\Sem{\SPP(f,b',m',d)} = \Sem{\SPP(f,b'', m', d)}$

    Note that $b'' \subseteq b'$ since $b''$ is constructed by removing mappings from $b'$.

    ($\subseteq$).
      Let $\pk,\pkp\in\Pk$, and suppose $\pkp\in\Sem{\SPP(f,b',m',d)}(\pk)$.
      Let $v = \pk_f$ and let $v' = \pkp_f$.
      There are two cases:
      \begin{enumerate}[(a)]
      \item Let $v\in\keys(b'')$. We know $b''[v][v'] = b'[v][v']$,
                    so $\pkp\in\Sem{\SPP(f,b'',m',d)}(\pk)$.
      \item Let $v\in\keys(b')\setminus\keys(b'')$.
            Let $m_i = b'[v]$. Note that $\pkp\in\Sem{m_i[v']}(\pk)$. There are two subcases:
            \begin{enumerate}[(i)]
                \item Suppose $v\in\keys(m')$ or $d =\bot$. Then from the fact that
                $v$ was removed, we know $m_i = m'$, which in turn means $v'\in\keys(m')$,
                and thus that $\pkp\in\Sem{\SPP(f,b'',m',d)}$ as needed.
                \item Otherwise, we know $m_i = m' \cup \{v\mapsto d\}$. So either $v' = v$,
                in which case that means $\pkp\in\Sem{d}(\pk)$, or $v'\neq v$, in which case
                $\pkp\in\Sem{m'[v']}(\pk[f\mut v'])$. In both cases, we have
                $\pkp\in\Sem{\SPP(f,b'',m',d)}$.
            \end{enumerate}
      \item Let $v\notin\keys(b')$. Then either $\pkp\in\Sem{d}(\pk)$ or $v'\mapsto s \in m'$ and $\pkp\in\Sem{s}(\pk[f\mut v'])$. In either case, we have $\pkp\in\Sem{\SPP(f,b'',m',d)}(\pk)$ as needed.

      \end{enumerate}

    ($\supseteq$).
      Let $\pk,\pkp\in\Pk$, and suppose $\pkp\in\Sem{\SPP(f,b'',m',d)}(\pk)$.
      Let $v = \pk_f$ and let $v' = \pkp_f$.
      There are two cases:
      \begin{enumerate}[(a)]
      \item Let $v\in\keys(b'')$. Then $\pkp\in\Sem{b''[v][v']}(\pk)$,
      and since $b''\subseteq b'$, we have $\pkp\in\Sem{b'[v][v']}(\pk)$,
      and thus $\pkp\in\Sem{\SPP(f,b',m',d)}(\pk)$.
      \item Let $v\in\keys(b')\setminus\keys(b'')$.
            Then there are 3 subcases:
            \begin{enumerate}[(i)]
                \item Suppose $v\in\keys(m')$. So we must have
                $v'\in\keys(m')$ also ($v\in\keys(m')$ means $d$ is ignored).
                From $v\in\keys(b')\setminus\keys(b'')$, we know $b'[v] = m'$
                and thus that $\pkp\in\Sem{b'[v][v']}(\pk)$, giving
                $\pkp\in\Sem{\SPP(f,b',m',d)}$.
                \item Suppose $d = \bot$. We know $v'\in\keys(m')$ because $\Sem{d} =\emptyset$.
                From here, reasoning follows the previous case that 
                $\pkp\in\Sem{\SPP(f,b',m',d)}$.
                \item Otherwise, we know $b'[v] = m' \cup \{v\mapsto d\}$.
            \end{enumerate}
      \item Let $v\notin\keys(b')$. Then either $\pkp\in\Sem{d}(\pk)$
      or $v'\mapsto s \in m'$ and $\pkp\in\Sem{s}(\pk[f\mut v'])$. In either case,
      because $v\notin\keys(b')$, we have $\pkp\in\Sem{\SPP(f,b',m',d)}(\pk)$ as needed.
      \end{enumerate}

    \item Finally, if $b''=\emptyset$ and $m'=\emptyset$, then $\Sem{\SPP(f,b'',m',d)} = \Sem{d}$. This is exactly \Cref{lem:expansion}.
  \end{itemize}

  Putting the steps together, we have shown $\Sem{\SPP(f,b,m,d)} = \Sem{\sppsc(f,b,m,d)}$, as required.
\end{proof}

\begin{lemma}[Correctess of $\SPPid$]\label{lem:sppid}
Let $p$ be an $\SP$. For any $\pk,\pkp\in\Pk$, then $\pkp\in\Sem{\SPPid\ p}(\pk)$ iff $\pkp=\pk$ and $\pkp\in\Sem{p}$.
\end{lemma}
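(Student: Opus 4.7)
The plan is to proceed by structural induction on the SP $p$, unfolding the definition of $\SPPid$ and checking that the constructed SPP preserves the packet exactly (never assigns) while accepting precisely when $p$ does.

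For the base cases, $p = \zero$ and $p = \one$, the conversion gives $\SPPid(\zero) = \zero$ and $\SPPid(\one) = \one$. In the first case both sides of the biconditional are vacuously false (the SPP produces no output and $\pkp\notin\pSem{\zero}$); in the second case $\Sem{\one}(\pk) = \{\pk\}$ and $\pSem{\one} = \Pk$, so both sides reduce to $\pkp = \pk$.

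For the inductive case $p = \SP(f, b, d)$, the conversion yields an SPP $\SPP(f, b', \emptyset, \SPPid(d))$ where $b'[v_i] = \{v_i \sto \SPPid(p_i)\}$ for each $v_i \sto p_i \in b$. Let $v = \pk_f$ and $v' = \pkp_f$. The key observation is that $m$ is empty and each nondefault branch $b'[v_i]$ only contains the single assignment $v_i \sto \SPPid(p_i)$, so the SPP can never change the value of $f$: if $v \in \keys(b)$ then $\pkp \in \Sem{\SPP(\ldots)}(\pk)$ requires $v' = v$ and $\pkp \in \Sem{\SPPid(p_v)}(\pk)$; if $v \notin \keys(b)$ then the default-mutation map is empty, so we can only fall through to $\Sem{\SPPid(d)}(\pk)$. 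Applying the induction hypothesis to $p_v$ (resp.\ $d$) collapses these to ``$\pkp = \pk$ and $\pkp \in \Sem{p_v}$'' (resp.\ ``$\pkp = \pk$ and $\pkp \in \Sem{d}$''), which is exactly the condition for $\pk \in \pSem{\SP(f, b, d)}$ case-split on whether $\pk_f \in \keys(b)$.

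The main obstacle, such as it is, is being careful with the SPP semantics in the default branch: one must verify that because $m = \emptyset$, the default-identity fallback produces the input packet only through $\SPPid(d)$ and does not spuriously emit $\pk$ when $\pk \notin \pSem{d}$. This follows immediately from the fact that an empty mutation map together with the chosen $m$-union-$d$ rule leaves the packet's handling entirely to $\SPPid(d)$, so no extraneous output packets are introduced. With this in hand, both directions of the biconditional follow from the induction hypothesis.
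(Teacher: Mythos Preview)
Your proposal is correct and takes the same approach as the paper: the paper's proof is simply ``Straightforward induction on $p$,'' and you have correctly filled in the details of that induction.
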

\begin{proof}
Straightforward induction on $p$.
\end{proof}

The next lemma makes the meaning of $\get{p}{v}$ more precise:

\begin{lemma}[Behavior of $\get{p}{v}$]\label{lem:getspp}
    Let $p = \SPP(f, b, m, d), v\in V$, and $\pk\in\Pk$. Then:
    \[ \Sem{p}(\pk[f\mut v]) = \bigcup_{(v_i\mapsto p_i)\in \get{p}{v}} \Sem{p_i}(\pk[f\mut v_i])\]
\end{lemma}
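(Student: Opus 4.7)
The plan is to prove the identity by unfolding the definition of $\get{p}{v}$ and comparing it directly with the semantic definition of $\SPP(f, b, m, d)$ acting on a packet whose $f$-field equals $v$. Recall that the semantics (as displayed inline with the $\SPP$ syntax) splits into a ``test'' branch used when $v \in \keys(b)$ and a ``default'' branch used otherwise, and the default branch further splits into an explicit-mutation part (using $m$) and an identity part (using $d$) that fires only when $v \notin \keys(m)$.

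First I would dispatch the case $v \in \keys(b)$. Here $\get{p}{v} = b(v)$ by definition, and the semantic test branch gives exactly $\Sem{p}(\pk[f\mut v]) = \bigcup_{(w \mapsto q) \in b(v)} \Sem{q}(\pk[f\mut w])$. So the two sides agree by inspection.

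Next I would handle $v \notin \keys(b)$. Unfolding the default branch of the semantics gives
\[
  \Sem{p}(\pk[f\mut v]) = \bigcup_{(w \mapsto q) \in m} \Sem{q}(\pk[f\mut w]) \;\cup\; E,
\]
where $E = \Sem{d}(\pk[f\mut v])$ if $v \notin \keys(m)$ and $E = \emptyset$ otherwise. Now I would split into three sub-cases matching the inner conditional in $\get{p}{v}$. If $v \in \keys(m)$ then $\get{p}{v} = m$ and $E = \emptyset$, so the two sides match. If $v \notin \keys(m)$ but $d = \bot$ then $\get{p}{v} = m$ and $E = \Sem{\bot}(\pk[f\mut v]) = \emptyset$, so they again match. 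Finally, if $v \notin \keys(m)$ and $d \neq \bot$, then $\get{p}{v} = m \cup \{v \mapsto d\}$, contributing exactly one extra term $\Sem{d}(\pk[f\mut v])$ to the union, which coincides with $E$ in this sub-case.

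There is no real obstacle here: the lemma is essentially a reformulation of the $\SPP$ semantics, packaged to expose a uniform view of the children regardless of which branch is taken. The only minor subtlety is the condition $v \in \keys(m) \lor d = \bot$ in the definition of $\get{p}{v}$; verifying that collapsing the identity case into $m$ under that condition neither adds nor drops output packets is precisely what the three sub-cases above check.
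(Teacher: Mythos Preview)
Your proposal is correct and follows essentially the same approach as the paper: a case analysis on the definition of $\get{p}{v}$, checking each case against the $\SPP$ semantics. The paper flattens this into four cases ($v\in\keys(b)$; $v\in\keys(m)\setminus\keys(b)$; $v\notin\keys(b)\cup\keys(m)$ with $d=\bot$; and $v\notin\keys(b)\cup\keys(m)$ with $d\neq\bot$), whereas you first split on $v\in\keys(b)$ and then do the three sub-cases, but the content is the same.
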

\begin{proof}
  Let $\pk,\pkp\in\Pk$.
  We proceed by the four cases of $\get{p}{v}$ that corresponding to $p = \SPP(f,b,m,d)$. In each case, we will show $\pkp\in\Sem{p}(\pk[f\mut v])$ iff $\pkp\in\Sem{p_i}(\pk[f\mut v_i])$, observing that the union is actually disjoint.
  \begin{itemize}
    \item Let $v\in \keys(b)$. Then each $(v_i\mapsto p_i)$ is a value and child guarded by $(f\test v)$, so we have $\pkp\in\Sem{p}(\pk[f\mut v])$ iff $\pkp\in\Sem{p_i}(\pk[f\mut v_i])$.
    \item Let $v\in \keys(m)\setminus\keys(b)$. Let $v\mapsto p_i\in m$. Since $v\notin\keys(b)$, $\pk[f\mut v]$ passes the $(f\testNE v)$
      that guard the assignments in $m$. The significance of $v\in\keys(m)$ is that we know the default case
      $d$ has negative guards $(f\testNE v)$ for these values, so $d$ cannot be applied to this packet.
      Once again $\pkp\in\Sem{p}(\pk[f\mut v])$ iff $\pkp\in\Sem{p_i}(\pk[f\mut v_i])$
    \item Let $v\notin \keys(b)\cup\keys(m)$, but $d=\bot$. This is similar to the previous case in that the
    $p_i$ in $m$ are precisely the reachable children, but now $\Sem{p}(\pk[f\mut v]) = \Sem{d} = \Sem{\bot} = \emptyset$.
    \item Let $v\notin\keys(b)\cup\keys(m)$, and $d\neq\bot$. We apply each child $v_i \mapsto p_i$ from $m$, but we also have the mapping $v \mapsto d$ (i.e. $v_i = v$ and $p_i = d$). This is appropriate because $v\notin\keys(b)\cup\keys(m)$ ensures we pass all the guards on $d$, and we have $v_i = v$ because no update to the packet is performed before applying $d$.
  \end{itemize}
\end{proof}

We need one more tiny lemma describing the behavior of $\vecplus$ that we will reuse.
\begin{lemma}\label{lem:vecplus}
For maps $m_1,m_2\colon V\Mapsto \SPP$, provided that $\Sem{p_i \plusop p_j} = \Sem{p_i} \cup \Sem{p_j}$ for all $p_i,p_j$ in the values of $m_1,m_2$, then we have $\pkp\in\Sem{p}(\pk)$ for some binding $v\mapsto p \in m_1 \vecplus m_2$ iff $\pkp\in\Sem{p_i}(\pk)$ for a binding $v\mapsto p_i \in m_1$ or $\pkp\in\Sem{p_j}(\pk)$ for a binding $v\mapsto p_j \in m_1$.
\end{lemma}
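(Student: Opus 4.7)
The plan is to unfold the definition $m_1 \vecplus m_2 = \{v \sto m_1(v;\bot) \plusop m_2(v;\bot) \mid v \in \keys(m_1 \cup m_2)\}$ and reduce the statement to a direct application of the hypothesis on $\plusop$. First I would fix arbitrary $\pk, \pkp$ and pick a value $v$, noting that since $m_1, m_2$ (and consequently $m_1 \vecplus m_2$) are finite maps, there is at most one binding $v \sto p$ in each. So the biconditional can be reformulated as: for every $v \in V$, $\pkp \in \Sem{(m_1 \vecplus m_2)(v;\bot)}(\pk)$ iff $\pkp \in \Sem{m_1(v;\bot)}(\pk)$ or $\pkp \in \Sem{m_2(v;\bot)}(\pk)$.

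Next I would split on whether $v \in \keys(m_1 \cup m_2)$. If $v \notin \keys(m_1) \cup \keys(m_2)$, then all three relevant expressions are $\bot$ and both sides are vacuously false since $\Sem{\bot}(\pk) = \emptyset$. If $v \in \keys(m_1 \cup m_2)$, then by construction $(m_1 \vecplus m_2)(v) = m_1(v;\bot) \plusop m_2(v;\bot)$, and by the hypothesis we have $\Sem{m_1(v;\bot) \plusop m_2(v;\bot)}(\pk) = \Sem{m_1(v;\bot)}(\pk) \cup \Sem{m_2(v;\bot)}(\pk)$, which gives the desired biconditional immediately. I also need to check that the hypothesis applies here: when $v \in \keys(m_1)$, $m_1(v;\bot) = m_1(v)$ which is a value of $m_1$, and when $v \notin \keys(m_1)$, $m_1(v;\bot) = \bot$, so we need the hypothesis to hold when one side is $\bot$; this is covered by the base cases of $\plusop$ from \Cref{fig:sppops}.

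There is essentially no hard part here; the lemma is really just a packaging of pointwise $\plusop$-behavior into behavior on maps. The one minor subtlety is making sure that the ``some binding'' quantification on the right-hand side correctly corresponds to treating absent keys as $\bot$, which is handled by the $(v;\bot)$ notation and the fact that $\Sem{\bot}$ contributes nothing to the union. Hence the proof is a short case analysis followed by one invocation of the hypothesis.
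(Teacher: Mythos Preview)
Your proposal is correct and follows essentially the same approach as the paper: unfold the definition of $\vecplus$ and case-analyze on whether $v$ is a key in one, both, or neither map, with the only nontrivial case being when both maps have a binding for $v$. Your treatment is in fact more careful than the paper's one-line ``by inspection'' proof, in particular your observation that when $v \notin \keys(m_1)$ (say) the hypothesis does not directly apply because $\bot$ need not be among the values of $m_1$, and that one must fall back on the base cases of $\plusop$ to handle $p_i \plusop \bot$ and $\bot \plusop p_j$.
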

\begin{proof}
  The claim is true by inspection of the definition of $\vec{+}$: the only nontrivial case of that analysis is that there are bindings for $v$ in both $m_1$ and $m_2$, but in that case we have $p = p_i \nf{+} p_j$, and we apply the additional assumption to get the lemma as stated.
\end{proof}

We are ready to show that the operations on SPPs have their expected meanings:

\begin{theorem}\label{thm:sppcorrect}
  For SPPs $p_1, p_2$, all of the following hold.
  \begin{enumerate}
    \item $\Sem{p_1 \nf{+} p_2} = \Sem{p_1} \cup \Sem{p_2}.$
    \item $\Sem{p_1 \nf{\cdot} p_2}(\pk) = \{\pkp\in\Pk \mid \pkpp\in\Sem{p_1}(\pk), \pkp\in\Sem{p_2}(\pkpp)\}.$
    \item $\Sem{p_1 \nf{\cap} p_2} = \Sem{p_1} \cap \Sem{p_2}.$
    \item $\Sem{p_1 \nf{-} p_2} = \Sem{p_1} - \Sem{p_2}.$
    \item $\Sem{p_1 \nf{\oplus} p_2} = \Sem{p_1} \oplus \Sem{p_2}.$
    \item $\Sem{p^{\nf{\star}}} = \bigcup_{n\geq 0} \Sem{p^n}.$
  \end{enumerate}
\end{theorem}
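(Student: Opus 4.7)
\medskip\noindent\textbf{Proof plan.} My plan is to proceed by structural induction on the pair $(p_1,p_2)$, treating the five binary operations uniformly where possible and then handling sequencing and star separately. For each case, the key tool will be Lemma~\ref{lem:sppscsem} (the smart constructor preserves semantics), Lemma~\ref{lem:expansion} (the expansion rule preserves semantics), Lemma~\ref{lem:getspp} (which tells us that $\get{p}{v}$ exactly enumerates the reachable successors of $p$ when its top-level field has value $v$), and Lemma~\ref{lem:vecplus} (which converts set-union-on-values through $\vecplus$ to componentwise $\nf{+}$). The base cases, where both operands lie in $\{\top,\bot\}$, follow immediately from the truth-table definitions of Figure~\ref{fig:spops}.

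For the pointwise operations $\nf{+},\nf{\cap},\nf{-},\nf{\oplus}$, the inductive case when both operands share the same top field $f$ is handled generically by $\pmop$. Fix $\pk,\pkp\in\Pk$, let $v=\pk_f$ and $v'=\pkp_f$. By Lemma~\ref{lem:getspp} applied to each operand, $\Sem{p_i}(\pk)$ decomposes over the successors listed in $\get{p_i}{v}$, so $\pkp\in\Sem{p_1}(\pk)\pmop^{\text{set}}\Sem{p_2}(\pk)$ reduces to a pointwise statement on the children at value $v'$. Applying Lemma~\ref{lem:vecplus} to $\get{p_1}{v}\vecpm\get{p_2}{v}$ (and similarly at the top level to $m_{p_1}\vecpm m_{p_2}$ and to $d_{p_1}\pmop d_{p_2}$), together with the induction hypothesis on each paired child, rewrites this exactly as membership in the smart-constructed result; Lemma~\ref{lem:sppscsem} closes the step. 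When the top fields differ, we apply expansion to the operand with the larger field, reducing to the matching-field case; Lemma~\ref{lem:expansion} guarantees no semantic change.

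The main obstacle is case~(2), sequencing. Here I would again fix $\pk$ and split on $v=\pk_f$, but now I need a two-level application of Lemma~\ref{lem:getspp}: first enumerating the intermediate packets $\pkpp$ produced by $p_1$ on $\pk[f\mut v]$ via $\get{p_1}{v}$, and then enumerating the final packets produced by $p_2$ on each $\pkpp[f\mut v']$ via $\get{p_2}{v'}$. The term $b'$ is designed precisely to contain, for each input value $v$, the composed children $p'\nf{\cdot}q'$ indexed by the final value $w'$, so the induction hypothesis on sequencing of children lines up exactly. The subtle point is the default map: $m_A$ accounts for paths where $p_1$'s default tests $f\testNE v_i$ succeed (so $m_{p_1}$ fires, producing some $v'$) and then $p_2$ continues from $v'$ via $\get{p_2}{v'}$; whereas $m_B$ accounts for paths where $p_1$ uses its default-identity branch $d_p$ (leaving $f$ at value $v$) and $p_2$'s default mutations $m_q$ fire. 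I will verify that together $m_A\vecplus m_B$ and $d_{p_1}\nf{\cdot}d_{p_2}$ enumerate every default path of the composition exactly once, mirroring the reduced form of an SPP; this is the place where I expect to spend the most effort justifying the bookkeeping.

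Finally, for case~(6) the star operator is defined as $p^{\nf{\star}}\triangleq\nf{\Sigma}_n p^n$, computed as a fixpoint of repeated concatenation. Semantic correctness follows by combining case~(2) with case~(1) inductively in $n$ to obtain $\Sem{p^n}=\Sem{p}^n$ (as pointwise composition), plus a monotonicity/termination argument: because the set of reduced-and-ordered SPPs reachable by repeated $\nf{+}$ and $\nf{\cdot}$ is finite (bounded by the finite set of subterms and the finite field/value universe appearing in $p$), the sum $\nf{\Sigma}_n p^n$ stabilizes after finitely many steps, and this stable value semantically equals $\bigcup_{n\geq 0}\Sem{p^n}$. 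Lemma~\ref{lem:sppromaint} ensures the intermediate sums stay in canonical form throughout.
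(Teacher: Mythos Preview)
Your proposal is correct and follows essentially the same approach as the paper: structural induction on $(p_1,p_2)$, handling the base cases by the truth tables, the matching-field inductive case via Lemmas~\ref{lem:sppscsem}, \ref{lem:getspp}, and \ref{lem:vecplus}, and the mismatched-field case via expansion (Lemma~\ref{lem:expansion}); the paper in fact only spells out $\nf{+}$ and $\nf{\cdot}$ and declares the remaining pointwise operators ``similar,'' exactly as you anticipate. One minor note: Lemma~\ref{lem:vecplus} as stated is specific to $\vecplus$, so when you invoke it for the other pointwise operators $\vecpm$ you are tacitly using obvious analogues (with the corresponding set-operation in the conclusion and the induction hypothesis supplying the premise); and for star the paper simply asserts convergence in Figure~\ref{fig:sppops} rather than proving it, so your finiteness sketch goes slightly beyond what the paper records.
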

\begin{proof} We again proceed by structural induction on $p_1, p_2$. We only show the cases for $\nf{+}$ and  $\nf{\cdot}$ because the other cases are similar or reduce directly to the other operators.
  In each case, we consider arbitrary packets $\pk,\pkp\in\Pk$.

  Now we proceed according to the cases in the theorem statement.
  \begin{enumerate}
    \item We need to show $\pkp\in\Sem{p_1\nf{+}p_2}(\pk)$ iff $\pkp\in\Sem{p_1}(\pk)$ or $\pkp\in\Sem{p_2}(\pk)$. The base cases (where both $p_1$ or $p_2$ are $\top$ or $\bot$) are trivial. The cases where only one of $p_1,p_2$ is $\zero$ or $\one$ reduce to the general case by expansion. We start with the case that the two SPPs have the same top level field, i.e., let $p_1 =\SPP(f, b_1, m_1, d_1), p_2 = \SPP(f, b_2, m_2, d_2)$. Note that we rely on \Cref{lem:sppscsem} to reason as if plain SPP constructor is called instead of the smart constructor. Evidently $\pkp\in\Sem{p_1\nf{+}p_2}(\pk)$ iff $\pkp$ is in one of the children found in the branches, mutation, or default cases. Which case is applicable depends only on $\pk_f$. So we handle these three by case, letting $v = \pk_f$:
      \begin{itemize}
        \item Now we look at the test branch. In this case, $\pkp\in\Sem{p_1\nf{+}p_2}(\pk)$ iff there is a binding $v\mapsto p$ in $\{v \mapsto \get{p_1}{v} \vecplus \get{p_2}{v} \mid v\in\keys(b_1)\cup\keys(b_2)\cup\keys(m_1)\cup\keys(m_2)\}$.
          By \Cref{lem:vecplus}\footnote{The assumption about the values of $m_1,m_2$ is satisfied by the induction hypothesis.} and \Cref{lem:getspp}, this is the same as saying $\pkp\in\Sem{p_i}(\pk[f\mut v_i])$ for a binding $v_i\mapsto p_i\in \get{p_1}{v}$ or $\pkp\in\Sem{p_j}(\pk[f\mut v_j])$, for a binding $v_j\mapsto p_j\in\get{p_2}{v}$. Combining these, this is equivalent to $\pkp\in\Sem{p_1}(\pk) \cup \Sem{p_2}(\pk)$ as needed.
        \item For the mutation branch, $v\notin\keys(b_1)\cup\keys(b_2)\cup\keys(m_1)\cup\keys(m_2)$. In this case, $\pkp\in\Sem{p_1\nf{+}p_2}(\pk)$ whenever there is a binding $(v'\mapsto p)\in (m_1 \vecplus m_2)$ such that $\pkp\in\Sem{p}(\pk)$. By \Cref{lem:vecplus}, this is equivalent to saying that either $\pkp\in\Sem{p_i}(\pk)$ for some binding $(v'\mapsto p_i)\in m_1$, or that $\pkp\in\Sem{p_j}(\pk)$ for some $(v'\mapsto p_j)\in m_2$. But now we are done, because $\pkp\in\Sem{p_i}(\pk)$ or $\pkp\in\Sem{p_j}(\pk)$ iff $\pkp[f\mut v']\in\Sem{p_1}(\pk[f\mut v])\cup\Sem{p_j}(\pk[f\mut v])$.
        \item Now consider the default case, where again $v\notin \keys(b_1)\cup\keys(b_2)\cup\keys(m_1)\cup\keys(m_2)$. By the induction hypothesis, $\pkp\in\Sem{d_1\nf{+}d_2}(\pk)$ iff $\pkp\in\Sem{d_1}(\pk) \cup \Sem{d_2}(\pk)$. Then since $v\notin\keys(b_1)\cup\keys(m_1)$ and $v\notin\keys(b_2)\cup\keys(m_2)$, this is true iff $\pkp\in\Sem{p_1}\cup\Sem{p_2}$.
      \end{itemize}

      The other cases, namely that the top level field is different, reduce naturally to the above case by applying \Cref{lem:expansion}.

    \item We need to show $\pkp\in\Sem{p_1\nf{\cdot}p_2}(\pk)$ iff there is a $\pkpp\in\Pk$ such that $\pkpp\in\Sem{p_1}(\pk)$ and $\pkp\in\Sem{p_2}(\pkpp)$. The base cases (where either $p_1$ or $p_2$ are $\top$ or $\bot$) are trivial. We proceed by cases, letting $v=\pk_f$:
      \begin{itemize}
        \item Test branches case. Thus $v\in\keys(b_1)\cup\keys(b_2)\cup\keys(m_1)\cup\keys(m_2)$,
          or $v\in b_2[v']$ for some $v'\in\keys(b_2)$. So there is a binding for $v$ in (what in the definition is called) $b'$.
          Applying \Cref{lem:vecplus}\footnote{The assumption of the lemma is satisfied by part (1).} inductively to this map, we get that $\pkp\in\Sem{p\nf{\cdot}p'}(\pk)$ means
          there must be corresponding bindings
          $(v_2\mapsto p)\in\get{p_1}{v}$ and $(v_3\mapsto p')\in\get{p_2}{v_2}$.
          Letting $\pkpp=\pk[f\mut v_2]$, this is equivalent to: $\pk[f\mut v_2]\in\Sem{p_1}$ and $\pkp\in\Sem{p_2}(\pkpp)$, completing the case.
        \item For the mutation branch case, $v\notin\keys(b_1)\cup\keys(b_2)\cup\keys(m_1)\cup\keys(m_2)$
          Then $\pkp\in\Sem{p_1\nf{\cdot}p_2}(\pk)$ is equivalent to $\pkp=\pk[f\mut v]$ for some binding for
          $v\in m_A \vecplus m_B$. We apply \Cref{lem:vecplus}, and divide into two subcases. Namely $\pkp\in\Sem{p_1\nf{\cdot}p_2}(\pk)$ iff we come from:
            \begin{itemize}
              \item[($m_A$)] We apply \Cref{lem:vecplus} again, inductively, to the folded sum in $m_A$, finding that
                there are bindings $(v_2\mapsto p)\in m_1$ and
                $(v_3\mapsto p')\in\get{p_2}{v_2}$ such that $\pkp\in\Sem{p\nf{\cdot}p'}(\pk)$. Applying the induction hypothesis, there are $\pk,\pkpp,\pkp$ such that $\pkpp\in\Sem{p}(\pk)$ and $\pkp\in\Sem{p'}(\pkpp)$. This is equivalent to saying $\pkpp[f\mut v_2]\in\Sem{p_1}(\pk)[f\mut v]$ and $\pkp[f\mut v_3]\in\Sem{p_2}(\pkpp)$ as needed.
              \item[($m_B$)] Then $\pkp\in\Sem{d_1\nf{\cdot}p}(\pk)$ for some binding $(v'\mapsto p)\in m_2$. We apply the induction hypothesis to see that this means there is a $\pkpp\in\Sem{d_1}(\pk)$ for which $\pkp\in\Sem{p}(\pkpp)$. In turn
                this is equivalent to: $\pkpp[f\mut v]\in\Sem{p_1}(\pk[f\mut v])$ and $\pkp[f\mut v']\in\Sem{p_2}(\pkpp[f\mut v])$ as needed.
            \end{itemize}
          \item We are in the default case when $v\notin\keys(b_1)\cup\keys(b_2)\cup\keys(m_1)\cup\keys(m_2)\cup\keys(m_A\vecplus m_B)$.
          By the induction hypothesis, $\pkp\in\Sem{d_1\nf{\cdot}d_2}(\pk)$ iff there is a $\pkpp$ such that
          $\pkpp\in\Sem{d_1}(\pk)$ and $\pkp\in\Sem{d_2}(\pkpp)$, but from $v\notin\keys(b_1)\cup\keys(m_1)$, this is the same as $\pkpp\in\Sem{p_1}(\pk)$ and from $v\notin\keys(b_2)\cup\keys(m_2)$, this is $\pkp\in\Sem{p_2}(\pkpp)$, completing the proof.
      \end{itemize}
  \end{enumerate}
\end{proof}

Now we justify the definitions of $\fwd$ and $\bwd$.

\begin{theorem}\label{thm:fwdbwdcorrect}
  For SPPs $p_1, p_2$, all of the following hold.
  \begin{enumerate}
    \item $\Sem{\mathsf{fwd}\ s} = \{\pkp\in\Pk \mid \exists \pk\in\Pk.\ \pkp\in\Sem{s}(\pk)\}.$
    \item $\Sem{\mathsf{bwd}\ s} = \{\pkp\in\Pk \mid \exists \pk\in\Pk.\ \pk\in\Sem{s}(\pkp)\}.$
  \end{enumerate}
\end{theorem}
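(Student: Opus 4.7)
}

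The plan is to prove both parts by structural induction on the input SPP $s$, using the SPP semantics and the correctness results already established (\Cref{thm:sppcorrect}, \Cref{lem:sppscsem}, \Cref{lem:getspp}, \Cref{lem:vecplus}). The base cases $s = \bot$ and $s = \top$ are immediate: $\fwd(\bot) = \bwd(\bot) = \bot$ captures the empty set (no input produces output, and no output arises from any input), while $\fwd(\top) = \bwd(\top) = \top$ captures $\Pk$ (every packet is produced from itself).

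For the inductive case of $\fwd$, let $s = \SPP(f, b, m, d)$ and fix an output packet $\pkp$ with $v' = \pkp_f$. The key observation is that an input $\pk$ produces $\pkp$ in exactly three kinds of ways: (i) through a test branch $b(v_i)$ that mutates $f$ to $v'$, contributing $\Sem{\fwd(b(v_i)(v'))}$; (ii) through a default mutation when $v' \in \keys(m)$, contributing $\Sem{\fwd(m(v'))}$; or (iii) through the default identity when $v' \notin \keys(b) \cup \keys(m)$, contributing $\Sem{\fwd(d)}$. I would do a case analysis on where $v'$ falls relative to $\keys(b)$, $\keys(m)$, and $\keys(b_B)$, and check that the four components $b_A, b_B, b_C, b_D$, together with the default case $\fwd(d)$, exactly capture these contributions: $b_A$ handles (ii), $b_B$ handles (i), $b_C$ adds $\fwd(d)$ at values where (i) applies but (iii) also needs to be included (because $v' \notin \keys(b) \cup \keys(m)$), and $b_D$ suppresses output at values in $\keys(b)$ where only (ii) via default mutation would otherwise leak through.

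For $\bwd$, the inductive case is simpler: fix an input $\pk$ with $v = \pk_f$. When $v \in \keys(b)$, only the test branch $b(v)$ is reachable, and $\pk$ witnesses a nonempty output iff some $w' \sto p' \in b(v)$ has $\pk \in \bwd(p')$ (using that $\bwd(p')$ tests only fields $\sqsupset f$, so is invariant under $f\mut w'$); this is precisely $b_A[v]$. When $v \in \keys(m) \setminus \keys(b)$, only the default mutations apply, matching $b_B[v] = d'$. When $v \notin \keys(b) \cup \keys(m)$, both default mutations and the default identity contribute, matching $d' \plusop \bwd(d)$, which is the default case of the constructed SP.

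The main obstacle is the $\fwd$ inductive case, specifically getting the case analysis right at the boundaries where $v'$ straddles $\keys(b)$, $\keys(m)$, and $\keys(b_B)$—the roles of $b_C$ (adding back default-identity contributions) and $b_D$ (blocking spurious defaults in already-tested branches) are easy to get wrong, and one must also carefully invoke $\spsc$ semantics via \Cref{lem:spsc} to justify reasoning as if the plain $\SP$ constructor were used. Once the case analysis is laid out cleanly, each subcase reduces to applying the induction hypothesis together with correctness of $\vecplus$ and $\nf{+}$ on SPs.
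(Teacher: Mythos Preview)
Your proposal is correct and follows essentially the same approach as the paper: structural induction on $s$, with the $\fwd$ inductive step handled by a case analysis on where the output value $v' = \pkp_f$ lies relative to the three sets $\keys(b)$, $\keys(m)$, and $B = \keys(b_B)$, and the $\bwd$ step handled by casing on the input value. Your three-case decomposition for $\bwd$ (on $\keys(b)$ and $\keys(m)$ only) is in fact a bit cleaner than the paper's, which also cases on $B$ there even though the structure of $\bwd$'s output SP is governed solely by $\keys(b) \cup \keys(m)$; both are valid. One small wording slip: $b_D$ suppresses the spurious default-\emph{identity} contribution $\fwd(d)$ at values $v' \in \keys(b) \setminus B$, not a default-mutation leak as you wrote, but your understanding of its role is otherwise right.
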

\begin{proof}
    We proof each case separately:
    \begin{enumerate}
    \item The base cases, $s=\one$ and $s=\zero$, are trivial. So let $s = \SPP(f, b, m, d)$. Let $\pkp\in\Pk$, let $v' = \pk_f$, and let $B = \bigcup_{v\in \keys(b)} \keys(b[v])$. We will show that $\pkp\in\Sem{\fwd\ s}$ iff there is a packet $\pk\in\Pk$ such that $\pkp\in\Sem{s}(\pk)$. For each such packet $\pk$, we will reserve $v$ to refer to $\pk_f$. There are eight cases for the behavior of $\fwd$ depending on whether $v'$ is in each of the three sets $B$, $\keys(b)$, and $\keys(m)$. Note for reference that $\keys(b_B) = B$, and $\keys(b_A) = \keys(m)$.

      \begin{enumerate}
        \item Suppose $v'\in B \setminus (\keys(b) \cup \keys(m))$.
          By applying \Cref{lem:vecplus} to the definition of $\fwd$, we see that $\pkp\in\Sem{\fwd\ s}$ iff $v'\in\keys(b_B)$ corresponding to some binding $v' \mapsto \fwd(p')\in b_B$ for which $\pkp\in\Sem{p'}$, or we have a binding $v'\mapsto \fwd(d)\in b_C$. In the first case, $v'\mapsto \fwd(p')\in b_B$ iff there is a binding $v'\mapsto p'\in b[v]$ for some $v$. Applying the induction hypothesis, this is true iff $\pkp\in\Sem{p'}$ for packet $\pk=\pkp[f\mut v]$, and $\pkp\in\Sem{s}(\pk)$ as needed. In the second case, our binding is $v'\mapsto\fwd(d)\in b_C$. By the induction hypothesis, $\pkp\in\Sem{\fwd(d)}$. Because we know that $v'\notin\keys(b)\cup\keys(m)$, this also means that for $\pk=\pkp$ we have $\pkp\in\Sem{s}(\pk)$ as needed.
        \item Suppose $v'\in (B \cap \keys(m)) \setminus \keys(b)$.
          Applying \Cref{lem:vecplus}, we have $\pkp\in\Sem{\fwd\ s}$ iff $v'\in\keys(b_A)$ or $v'\in\keys(b_B)$.
          In the first case, there is a binding $v'\mapsto \fwd(p')\in m$. By induction hypothesis, that means we can choose $\pk=\pkp$ and have both $\pkp\in\Sem{p'}$ and $\pkp\in\Sem{s}(\pk)$.
          In the other case, then the binding in $b_B$ works just as in the previous case.
        \item Suppose $v'\in \keys(m) \setminus (B \cup \keys(b))$.
          In this case, $\pkp\in\Sem{\fwd\ s}$ iff there is an applicable binding in $b_A$. The argument matches the argument for $b_A$ in the previous case.
        \item Suppose $v'\in (B \cap \keys(b)) \setminus \keys(m)$.
          Then $\pkp\in\Sem{\fwd\ s}$ iff there is an applicable binding in $b_B$. The argument matches (a) and (b).
        \item Suppose $v'\in B\cap\keys(b)\cap\keys(m)$. This case is effectively the same as case (b).
        \item Suppose $v'\in (\keys(b)\cup\keys(m)) \setminus B$.  This case is effectively the same as case (c).
        \item Suppose $v'\in \keys(b) \setminus (B \cup \keys(m))$.
          Observe that it cannot be the case that $\pkp\in\Sem{\fwd\ s}$, since the only bindings available
          are $v'\mapsto\bot\in b_D$. But we cannot have $\pkp\in\Sem{s}(\pk)$ either, since any packet $\pk$ which matches a key in $b$ or $m$ will have $\pk_f$ updated to a value in $B$ or $m$, but $v'\notin B\cup m$. In addition, a $\pk$ with $\pk_f\notin b \cup m$ cannot produce $\pkp$ either, a since in that case no update occurs (so $v' = v$), but we know $v'\in b$.
        \item Suppose $v'\notin \keys(b) \cup \keys(m) \cup B$.
          Clearly $\pkp\in\Sem{\fwd\ s}$ iff $\pkp\in\Sem{\fwd\ d}$. By the induction hypothesis, that means $\pkp\in\Sem{d}(\pk)$ for some $\pk$. In turn this means $\pkp\in\Sem{s}(\pk)$ as needed.
      \end{enumerate}
    \item The base cases are trivial. Let $s = \SPP(f, b, m, d)$. As in the proof above for $\fwd$, we let $\pkp\in\Pk$, let $v' = \pk_f$, and let $B = \bigcup_{v\in \keys(b)} \keys(b[v])$. We will show that $\pkp\in\Sem{\fwd\ s}$ iff there is a packet $\pk\in\Pk$ such that $\pkp\in\Sem{s}(\pk)$. For each such packet $\pk$, we will reserve $v$ to refer to $\pk_f$.
        Note for reference that $\keys(b_A) = B$ and $\keys(b_B) = \keys(m) \setminus \keys(b)$.
        \begin{enumerate}
            \item Suppose $v'\in B \setminus (\keys(b) \cup \keys(m))$. Then $\pkp\in\Sem{\bwd\ p\ s}$ iff there is
            a binding $v' \mapsto m' \in b$ for which there is a binding $v' \mapsto \nf{\sum}_{w'\mapsto p'\in m'} \bwd(p') \in b_A$. By the induction hypothesis (applied inductively to the sum), this is equivalent to there being a packet $\pkpp = \pkp[f\mut w']$ such that there is an output packet $\pk\in\Sem{p'}(\pkpp)$. Now remember that $p'$ is a child of $s$, so this means that $\pk\in\Sem{s}(\pkp)$ as we wanted.
            \item Suppose $v'\in (B \cap \keys(m)) \setminus \keys(b)$. Then $\pkp\in\Sem{\bwd\ p\ s}$ iff there is a binding in $b_A$, as in the previous case, or there is a binding in $v\mapsto d'\in b_B$. This in turn is true iff there is
            a binding $v''\mapsto p'$ such that $\pkp\in\Sem{\bwd\ p'}$. Indeed, from the fact that $v'\notin\keys(b)$, and by the induction hypothesis, this means that the child $p'$ is reachable for $\pkp$ and moreover produces $\pkp[f\mut v'']\in\Sem{s}(\pkp)$.
            \item Suppose $v'\in \keys(m) \setminus (B \cup \keys(b))$. Then $\pkp\in\Sem{\bwd\ p\ s}$ iff there is a binding in $b_B$, and the argument follows the previous case.
            \item Suppose $v'\in (B \cap \keys(b))$. Then $\pkp\in\Sem{\bwd\ p\ s}$ iff there is a binding in $b_A$, and the argument follows the argument in (a).
            \item Suppose $v'\in (\keys(b)\cup\keys(m)) \setminus B$. In this case
            $v'$ cannot be in $b_A \vecplus b_B$, so $\pkp\in\Sem{\bwd\ p\ s}$ iff $\pkp\in\Sem{d' \nf{+} \bwd(d)}$. By the induction hypothesis, this is equivalent to either $\pkp\in\Sem{d'}$ or $\Sem{d}(\pk)$ for some $\pk$. In the later case, $\pkp\in\Sem{s}(\pk)$ and we are done. In the former case, there is a binding $v''\mapsto p'\in d'$ for which $\pkp[f\mut v'']\in\Sem{s}(\pkp)$, completing the case.
            \item Suppose $v'\notin \keys(m) \cup B$. Then $\pkp\in\Sem{\bwd\ p\ s}$ iff $\pkp\in\Sem{d' \nf{+} \bwd(d)}$, and the argument follows the previous case.
        \end{enumerate}
    \end{enumerate}
\end{proof}

We conclude by demonstrating the correctness of $\mathsf{push}$ and $\mathsf{pull}$.

\begin{corollary}
  For $\mathsf{push}$ and $\mathsf{pull}$, the following hold:
  \begin{enumerate}
    \item $\pkp \in \Sem{\mathsf{push}\ p\ s} \iff \exists \pk\in\pSem{p}.\ \pkp\in\Sem{s}(\pk).$
    \item $\pkp \in \Sem{\mathsf{pull}\ s\ p} \iff \exists \alpha \in \pSem{p}.\ \alpha \in \Sem{s}(\pkp).$
  \end{enumerate}
\end{corollary}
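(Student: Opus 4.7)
The plan is to unfold the definitions of $\mathsf{push}$ and $\mathsf{pull}$ and then apply three previously established results in sequence: \Cref{lem:sppid} to characterize the semantics of $\SPPid$, \Cref{thm:sppcorrect}(2) to characterize sequential composition on \SPPn{}s, and \Cref{thm:fwdbwdcorrect} to characterize $\mathsf{fwd}$ and $\mathsf{bwd}$. Because all three of these results are already in place, the corollary reduces to a routine chain of ``iff'' steps, so I expect no genuine obstacles.

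For part (1), I would start by unfolding $\mathsf{push}\ p\ s = \mathsf{fwd}((\SPPid\ p) \cdotop s)$. Applying \Cref{thm:fwdbwdcorrect}(1) gives that $\pkp \in \Sem{\mathsf{push}\ p\ s}$ iff there exists some $\pk$ with $\pkp \in \Sem{(\SPPid\ p) \cdotop s}(\pk)$. Next, \Cref{thm:sppcorrect}(2) rewrites this as the existence of an intermediate packet $\pkpp$ with $\pkpp \in \Sem{\SPPid\ p}(\pk)$ and $\pkp \in \Sem{s}(\pkpp)$. Finally, \Cref{lem:sppid} says this forces $\pkpp = \pk$ and $\pkpp \in \pSem{p}$, giving exactly the claimed specification $\exists \pk \in \pSem{p}.\ \pkp \in \Sem{s}(\pk)$.

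For part (2), the reasoning is symmetric. Unfold $\mathsf{pull}\ s\ p = \mathsf{bwd}(s \cdotop (\SPPid\ p))$. By \Cref{thm:fwdbwdcorrect}(2), $\pkp \in \Sem{\mathsf{pull}\ s\ p}$ iff there exists some output packet $\pkpp$ with $\pkpp \in \Sem{s \cdotop (\SPPid\ p)}(\pkp)$. \Cref{thm:sppcorrect}(2) then gives an intermediate $\alpha$ with $\alpha \in \Sem{s}(\pkp)$ and $\pkpp \in \Sem{\SPPid\ p}(\alpha)$, and \Cref{lem:sppid} forces $\pkpp = \alpha$ and $\alpha \in \pSem{p}$. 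This is exactly $\exists \alpha \in \pSem{p}.\ \alpha \in \Sem{s}(\pkp)$.

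The only step that requires any care is keeping track of which packet plays which role (input vs. output) in the two compositions; the ``identity-on-the-left'' structure in $\mathsf{push}$ and the ``identity-on-the-right'' structure in $\mathsf{pull}$ mean the filter $\pSem{p}$ constrains the input packet for $\mathsf{push}$ and the output packet for $\mathsf{pull}$. Once that bookkeeping is correct, the corollary follows immediately from the three cited results with no additional induction.
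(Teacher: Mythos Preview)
Your proposal is correct and follows exactly the same approach as the paper's proof, which simply states that each part follows from the definitions together with \Cref{thm:fwdbwdcorrect}, \Cref{lem:sppid}, and \Cref{thm:sppcorrect}. You have merely spelled out the chain of equivalences that the paper leaves implicit.
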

\begin{proof}
    Each part follows from the respective definitions, and applying \Cref{thm:fwdbwdcorrect}, \Cref{lem:sppid}, and \Cref{thm:sppcorrect}.
\end{proof}

}{} 

\end{document}